\pdfoutput=1
% Prolog based, large arxiv 2015

\NeedsTeXFormat{LaTeX2e}
\documentclass{tlp}
\usepackage{mathptmx}

\usepackage{amsfonts,amsmath}

\usepackage{graphicx}
\usepackage{url}
\usepackage{verbatim}
\usepackage{color}
\definecolor{lgray}{gray}{0.92}
\definecolor{lblue}{rgb}{0.90,0.90,1.00}
\definecolor{lyellow}{rgb}{1.00,1.00,0.70}

\usepackage{listings}
\lstloadlanguages{Prolog}
% parameters in this will be overriden by same for {code}
% so this will look the same - except that it will not be treated as code

\newtheorem{prop}{Proposition}

\newtheorem{df}{Definition}

\newtheorem{ex}{Example}

\newenvironment{codex}{\small\verbatim}{\endverbatim\normalsize}

\lstnewenvironment{code}
    {\lstset{}%
      \csname lst@SetFirstLabel\endcsname}
    {\csname lst@SaveFirstLabel\endcsname}
    \lstset{
      basicstyle=\small\ttfamily,
      % frame=single, 
      backgroundcolor=\color{lgray},
      flexiblecolumns=false,
      basewidth={0.5em,0.45em},        
    }

\newcommand{\BI}[0]{\begin{itemize}}
\newcommand{\EI}[0]{\end{itemize}}
\newcommand{\I}[0]{\item}
\newcommand{\BE}[0]{\begin{enumerate}}
\newcommand{\EE}[0]{\end{enumerate}}

\newcommand{\BX}[0]{\begin{ex}}
\newcommand{\EX}[0]{\end{ex}}

\def \bscale1 {0.25}
\def \bscale {0.25}
\def \N {\mathbb{N}}

% \FIG{Label}{Title}[scale]{file.suf}
\newcommand{\FIG}[4]{
\begin{figure}[htbp]
\centering
{\includegraphics[scale=#3]{#4}}
\caption{#2}
\label{#1}
\end{figure}
}

% \FIG{Label}{Title}[scale]{file.suf}

% totalheight=0.4\textheight

% \HFIGS{Label}{Title}{Title1}{Title2}{PNG1}{PNG2}

% \VFIGS{Label}{Title}{Title1}{Title2}{PNG1}{PNG2}

% \VSFIGS{Label}{Title}{Title1}{Title2}{PNG1}{PNG2}{Scale}

\title[A Logic Programming Playground for Lambda Terms Types and Tree-based Arithmetic]
        {A Logic Programming Playground for Lambda Terms, Combinators, Types and Tree-based Arithmetic Computations
}

\author[Paul Tarau]
         {PAUL TARAU\\
         {Department of Computer Science and Engineering}\\
         \email{paul.tarau@unt.edu}}

\submitted{?}
\revised{?}
\accepted{?}

\begin{document}
%jdate{March 2003}
%\pubyear{2003}
\pagerange{\pageref{firstpage}--\pageref{lastpage}}
%\doi{S1471068401001193}
\label{firstpage}
\maketitle

\begin{abstract}

With sound unification, Definite Clause Grammars
and compact expression of
combinatorial generation algorithms,
logic programming is shown to conveniently 
host a declarative playground 
where interesting properties and behaviors emerge 
from the interaction of heterogenous but
deeply connected computational objects.

Compact combinatorial generation algorithms are given
for several families of lambda terms, including
open, closed, simply typed and linear terms as well as
type inference and normal order reduction algorithms.
We describe a Prolog-based combined lambda term generator 
and type-inferrer for closed 
well-typed terms of a given size, in de Bruijn notation.

We introduce a compressed 
de Bruijn representation of lambda terms and define
its bijections to  standard representations.
Our compressed terms facilitate derivation of
size-proportionate ranking and unranking algorithms 
of lambda terms  and their inferred simple types.

By taking advantage of Prolog's unique bidirectional 
execution model and sound unification algorithm, 
our generator can build ``customized'' closed terms of a given type. 
This relational view of terms and their types
enables the discovery of interesting
patterns about frequently used type expressions 
occurring in well-typed functional programs. 
Our study uncovers the most ``popular'' types that 
govern function applications among a about a million
small-sized lambda terms and hints toward practical uses 
to combinatorial software testing.

The S and K combinator expressions form a well-known
Turing-complete subset of the lambda calculus.
We specify evaluation, 
type inference
and combinatorial generation
algorithms for
SK-combinator trees.
In the process, we unravel 
properties shedding
new light on interesting
aspects of their structure
and distribution.
We study the proportion of well-typed
terms among size-limited SK-expressions
as well as the type-directed
generation of terms of sizes smaller
then the size of their simple types.
We also introduce the {\em well-typed frontier
of an untypable term} and we use it
to design a simplification algorithm
for untypable terms taking advantage of the
fact that well-typed terms are normalizable.

A uniform representation, as binary 
trees with empty leaves, is given to
expressions built with Rosser's X-combinator,
natural numbers, lambda terms and simple types. 
Using this shared representation, ranking/unranking
algorithm of lambda terms to 
tree-based natural numbers are described.

Our algorithms, expressed as an incrementally developed
literate Prolog program, implement a declarative playground
for exploration of representations, encodings and
computations with uniformly represented
lambda terms, types, combinators and
tree-based arithmetic.

\end{abstract}

\begin{keywords}
lambda calculus,
de Bruijn notation,
generation of lambda terms,
type inference,
combinatorics of lambda terms,
ranking and unranking of lambda terms,
normalization of de Bruijn terms,
SK-combinator calculus,
generation of well-typed combinator expressions
Rosser's X-combinator,
tree-based numbering systems,
bijective G\"odel-numberings,
logic programming as meta-language.

\end{keywords}

\section{Introduction}

This paper is an  extended synthesis of \cite{padl15,cicm15,ppdp15tarau}.
It is organized as literate Prolog program and provides a comprehensive playground
for both classic algorithms working on lambda terms and combinators as well
as a large number of new algorithms and data representations covering 
their combinatorial generation, type inference
and their ranking/unranking to both
standard and tree-represented natural numbers.

Logic programming
provides a convenient 
metalanguage
for modeling 
data types and computations taken
from other programming paradigms.
Properties of logic variables, unification with occurs-check 
and exploration of solution spaces via backtracking
facilitate compact algorithms for 
inferring types or generating
terms for various calculi. This holds
in particular for lambda terms and combinators \cite{bar84}.

While possibly one of the most
heavily researched computational objects,
lambda terms and combinators offer
an endless stream of surprises
to anyone digging just deep enough
below their intriguingly simple
surface.
Lambda terms  provide
a foundation to modern functional languages,
type theory and proof assistants and, as a sign of their lasting relevance,
they have been lately incorporated into
mainstream programming languages including
Java 8, C\# and Apple's newly designed programming language
Swift.

Generation of lambda terms has practical
applications to testing compilers that rely on
lambda calculus as an intermediate language,
as well as to the generation of random tests
for user-level programs and data types. 
At the same time, several instances of 
lambda calculus are of significant theoretical
interest given their correspondence with
logic and proofs.

The use of modern Prologs' unification with occurs-check, 
backtracking mechanisms and
Definite Clause Grammars (DCGs) are instrumental in designing
compact algorithms for inferring simple types
or for generating linear, linear affine or lambda terms
with bounded unary height as well as  in implementing 
normalization algorithms.

Of particular interest are representations that are canonical 
up to alpha-conversion (variable renamings), among which
the most well-known ones are de Bruijn's indices \cite{dbruijn72}, 
representing bound variables as the number
of binders to traverse to the lambda abstraction binding them.
As a sequence of binders, in de Bruijn notation, 
can be seen as a natural number expressed in unary notation, 
we introduce a compressed representation of the binders that
puts in a new light the underlying combinatorial 
structure of lambda terms and highlights their connection to the 
{\em Catalan family} of combinatorial objects \cite{StanleyEC},
among which binary trees are the most well known.
The proposed compressed de Bruijn notation
also simplifies generation of some families of lambda terms.

A joke about the de Bruijn indices representation of lambda terms  is that it can be used to tell apart Cylons from humans \cite{numhack}. Arguably, the compressed de Bruijn representation that we introduce in here is taking their fictional use  one step further. To alleviate the legitimate fears of our (most likely, for now, human) reader, these representations will be  mapped bijectively to  conventional ones.
To be able to use the most natural representation for each
of the proposed algorithms,
we implement bijective transformations between
lambda terms in standard as well as de Bruijn and
compressed de Bruijn representation.

A merit of our compressed representation is to simplify the underlying combinatorial structure of lambda terms, by exploiting their connection to the 
Catalan family of combinatorial objects \cite{StanleyEC}.
This leads to algorithms that focus on
their (bijective) natural number encodings - known 
to combinatorialists as {\em ranking/unranking} functions \cite{combi99}
and to logicians as {\em G\"odel-numberings} \cite{Goedel:31}.
Among the most obvious practical applications, such encodings can be used to generate random terms for testing tools like compilers or source-to-source program transformers.
At the same time, as our encodings are ``size-proportionate'', they provide a compact serialization mechanism for lambda terms.

To derive a bijection to $\N$ (seen as made of conventional bitstring-represented numbers),
that is size-proportionate, we will
first extract a ``Catalan skeleton'' abstracting away the recursive structure of the compressed
de Bruijn term,
then implement a bijection from it to $\N$. The ``content'' fleshing out the term, represented
as a list of natural numbers, will have its own bijection to $\N$ by using a generalized Cantor
tupling / untupling function, that will also help pairing / unpairing the code of the skeleton
and the code of the content of the term.

{\em Combinators} are closed lambda terms without bound variables.
They actually
predate lambda calculus , being discovered in the 1920s by
Sch\"onfinkel and then independently by Curry. 
With {\em function application} as their unique 
operation, and a convenient {\em base},
(for instance, $K = ~\lambda x. ~\lambda y. ~ x$ and 
$S = ~\lambda f. ~\lambda g. ~\lambda x.(f~x)~(g~x)$),
 they form a Turing-complete
subset of the lambda calculus. 
We will focus, using some essential
Prolog ingredients, on synergies
between generation and type inference
on the language of S and K combinators.
SK-combinator expressions are
binary trees with leaves labeled $S$ or $K$ and internal nodes representing
function application.
While working with a drastically simplified model of computation,
interesting patterns emerge, some of which may extend
to the richer combinator languages used in compilers for
functional languages like Haskell and ML 
and proof assistants like Coq and Agda.

Of particular interest are type inference algorithms
and the possibility to melt them together with 
generators of terms of a limited size.
Evaluation of SK-combinator expressions
has the nice property
of always terminating on terms that have simple
types. This suggests looking into how this
property (called {\em strong normalization}) can
be used to  simplify
untypable (and possibly not normalizable)
terms through normalization of their
maximal typable subterms.
At the same time, this suggests trying to
discover some empirical hints about
the distribution of well-typed
terms and the structure induced by typable subterms
of untypable terms.

We also follow some of the
consequences of a very simple idea: what
can happen if combinators, their types,
their computationally equivalent lambda terms
would all share the same basic representation
as the natural numbers that have been used
as encodings of formulas and proofs in such
important fundamental results as G\"odel's
incompleteness theorems, as well as
for mundane purposes like doing 
arithmetic operations in a programming language.

We have shown in the past 
\cite{ppdp14tarau,lata14,sacs14tarau,serpro} that
arithmetic operations and encodings of
various data structures can be performed with 
{\em tree-based numbering
systems} in average time and space complexity
that is comparable with the 
traditional  binary numbers.
One of the properties that singles out
such numbering systems  is their ability
to favor objects with a regular structure
on which representation size and complexity of
operations can be significantly better than
with the usual bitstring representations.
At the same time, we will take advantage of the fact that
Rosser's X-combinator expressions
\cite{fokker92} (a 1-point basis for combinatory logic)
can also be hosted, together with  function application nodes
on top of our ubiquitous binary tree representation.
While the representation of X-combinator
expressions and their types collapses with that
 of binary trees representing natural numbers,
with a few additional steps, we can also
derive size proportionate
ranking and unranking algorithms for 
general lambda terms, this time having
tree-based natural numbers as targets.

Ranking and unranking of lambda terms (i.e., their bijective
 mapping to unique natural number codes) has practical
applications to testing compilers that rely on
lambda calculus as an intermediate language,
as well as in generation of random tests
for user-level programs and data types. 
At the same time, several instances of 
lambda calculus are of significant theoretical
interest given their correspondence with
logic and proofs.
This results in a shared representation
of combinators, simple types, natural numbers
and general lambda terms defining a
common declarative playground for experiments
connecting their computational properties.
Prolog's ability to support ``relational'' queries
enables us to easily explore the population of
de Bruijn terms up to a given size and answer questions
like the following:
\BE
\I How many distinct types occur for terms up to a given size?
\I What are the most popular types?
\I What are the terms that share a given type?
\I What is the smallest term that has a given type?
\I What smaller terms have the same type as this term?
\EE

%%%%%%%%
{\em The remaining of the paper  (with the content of the most salient subsections also pointed out) is organized as follows.}

Section \ref{morphings}  introduces the compressed
de Bruijn terms (\ref{comp}) and bijective transformations
from them to standard lambda terms.

Section \ref{typinf} describes a type inference algorithms for lambda terms.

Section \ref{gener} describes
generators for several classes
of lambda terms, including closed,
simply typed, linear, affine as well as terms
with bounded unary height and terms
in the binary lambda calculus encoding.
Subsection \ref{dbgen} % describes a generator for Motzkin trees and
introduces a generator for lambda terms in de Bruijn form. % from it.
Subsection \ref{typedgen} introduces an algorithm
combining term generation and type inference.

Section \ref{eval} describes a normal order
reduction algorithm for lambda terms
relaying on their de Bruijn representation.

Section \ref{combs} describes combinators with emphasis on the SK and X combinator bases.
SK-combinator trees together with a generator
and an evaluation algorithm.
Subsection \ref{sktypes} defines simple types
for SK-combinator expressions and
describes a type inference algorithm on for SK-combinator trees. 
Subsection \ref{xctree} introduces
X-combinator trees together with a generator
and an evaluation algorithm.
Subsection \ref{xtypes} defines simple types
for X-combinator expressions via their equivalent
lambda terms,
describes type inference algorithms for X-combinator expressions. It also
explores consequences of expressions and
types sharing the same binary tree representation.

Section \ref{sprop} describes
size-proportionate bijective encodings of lambda terms and combinators.
Subsection \ref{cantor} builds mappings
from lambda terms to Catalan families
of combinatorial objects, with
focus on binary trees representing
their inferred types and their
applicative skeletons.
These mappings lead
in subsection \ref{ranks} to 
size-proportionate ranking
and unranking algorithms for
lambda terms  and their
inferred types.
Subsection \ref{ntree} interprets
X-combinator trees as natural numbers
on which it defines arithmetic operations.
Subsection \ref{goedel} describes a bijection
from lambda terms to
binary trees implementing tree-based arithmetic operations
that leads to a different mechanism for
size-proportionate ranking
and unranking algorithms for
lambda terms.

Section \ref{play} shows examples of applications of our declarative playground.
Subsection \ref{pats} uses our combined term generation and type inference 
 algorithm to discover frequently occurring type patterns.
Subsection \ref{typedir} describes a type-directed algorithm for the
generation of closed typable lambda terms.
We also
explore consequences that emerge
from interactions between such heterogeneous
computational objects sharing the same 
binary tree representation.
Subsection \ref{wtf} introduces the
well-typed frontier of an untypable SK-expression
and describes a partial normalization-based
simplification algorithm
that terminates on all SK-expressions.

Section \ref{rels} discusses related work and section \ref{concl}
concludes the paper.

The paper is structured as a literate Prolog program.
The  code has been tested with SWI-Prolog 6.6.6 and YAP 6.3.4.

It is available % as a separate file at
\url{http://www.cse.unt.edu/~tarau/research/2015/play.pro}.

\begin{codeh}
% a Prolog playground for lambda terms, combinators, types
% and tree-based arithmetic
:-use_module(library(lists)).
\end{codeh}

\section{Morphing between representations of lambda terms} \label{morphings}

Logic variables can be used in Prolog for
 connecting a lambda binder and its related
 variable occurrences. This representation can be made canonical
 by ensuring that each lambda binder is marked with a distinct logic
 variable. 
 For instance, the lambda term 
 $\lambda a.((\lambda b.(a (b~b))) (\lambda c.(a (c~c))))$
 will be represented as {\tt l(A,a(l(B, a(A,a(B,B))), l(C, a(A,a(C,C)))))}.
It is however convenient in most algorithms to avoid any confusion between
variables in our meta-language (Prolog) and lambda variables.
We will achieve this by using the 
{\em de Bruijn representation of lambda terms}.

{\subsection{De Bruijn Indices}
De Bruijn indices \cite{dbruijn72} 
provide a {\em name-free} representation of lambda
terms. All closed terms that can be 
transformed by a renaming of variables ($\alpha$-conversion)
will share a unique representation.
Variables following lambda abstractions are omitted and their occurrences are marked
with positive integers {\em counting the number of lambdas until the one binding them} 
is found on the  way up to the root of the  term. We represent them using
the constructor {\tt a/2} for application, {\tt l/1} for lambda abstractions
(that we will call shortly {\em binders})
and {\tt v/1} for marking the integers corresponding to the 
de Bruijn indices.

For instance, the term {\tt l(A,a(l(B,a(A,a(B,B))),l(C,a(A,a(C,C)))))} is represented
as {\tt l(a(l(a(v(1),a(v(0),v(0)))),l(a(v(1),a(v(0),v(0))))))}, given
that {\tt v(1)} is bound by the outermost lambda (two steps away, counting from {\tt 0}) and the occurrences of {\tt v(0)} are bound each by the closest lambda, represented
by the constructor {\tt l/1}.

We will also define the size of a lambda expression in
de Bruijn form as the number of its internal nodes, 
implemented by the predicate {\tt dbTermSize}.
\begin{code}
dbTermSize(v(_),0).
dbTermSize(l(A),R):-
  dbTermSize(A,RA),
  R is RA+1.
dbTermSize(a(A,B),R):-
  dbTermSize(A,RA),
  dbTermSize(B,RB),
  R is 1+RA+RB.
\end{code}

\subsection{Open and closed terms}
A lambda term is called {\em closed} if it
contains no free variables.
The predicate {\tt isClosedB}  defines 
this property for de Bruijn terms.
% ok
\begin{code}
isClosedB(T):-isClosed1B(T,0).

isClosed1B(v(N),D):-N<D.
isClosed1B(l(A),D):-D1 is D+1,
  isClosed1B(A,D1).
isClosed1B(a(X,Y),D):-
  isClosed1B(X,D),
  isClosed1B(Y,D).
\end{code}

Besides being closed, lambda terms interesting for functional
languages and proof assistants, are also well-typed.
We will start with an algorithm inferring types directly on
the de Bruijn terms.

\subsection{From   de Bruijn to lambda terms with canonical names}
The predicate {\tt b2l} converts from the de Bruijn representation 
to lambda terms whose canonical names are provided by logic variables.
We will call them terms in {\em standard notation}.
% ok
\begin{code}
b2l(A,T):-b2l(A,T,_Vs).

b2l(v(I),V,Vs):-nth0(I,Vs,V).
b2l(a(A,B),a(X,Y),Vs):-b2l(A,X,Vs),b2l(B,Y,Vs).
b2l(l(A),l(V,Y),Vs):-b2l(A,Y,[V|Vs]).
\end{code}
Note the use of the built-in {\tt nth0/3} that associates
to an index {\tt I} a variable {\tt V} on the list {\tt Vs}.
As we initialize in {\tt b2l/2} the list of logic variables
as a free variable {\tt \_Vs}, free variables in
open terms, represented with indices larger than the number of
available binders will also be consistently 
mapped to logic variables. By replacing {\tt \_Vs} with {\tt []}
in the definition pf {\tt b2l/2} one could enforce that only
closed terms (having no free variables) are accepted.
\BX
illustrates the bijection defined by predicates {\tt l2b} and {\tt b2l}.
\begin{codex}
?- LT=l(A,l(B,l(C,a(a(A,C),a(B,C))))),l2b(LT,BT),b2l(BT,LT1),LT=LT1. 
LT = LT1, LT1 = l(A, l(B, l(C, a(a(A, C), a(B, C))))),
BT = l(l(l(a(a(v(2), v(0)), a(v(1), v(0)))))).
\end{codex}
\EX

\subsection{From lambda terms with canonical names to de Bruijn terms}
Logic variables provide canonical names for lambda variables.
An easy way to manipulate them at meta-language level is to
turn them into special ``\$VAR/1'' terms - a mechanism provided by Prolog's
built-in {\tt numbervars/3} predicate. Given that ``\$VAR/1'' is distinct from
the constructors lambda terms are built from (
{\tt l/2 and a/2}) this is a safe (and invertible) transformation. 
To avoid any side effect on the original term, we will uniformly rename its variables to fresh ones with Prolog's
{\tt copy\_term/2} built-in. We will adopt this technique through the paper
each time our operations would mutate an input argument otherwise.

% from lambda to de Bruijn  
\begin{code}
l2b(A,T):-copy_term(A,CA),numbervars(CA,0,_),l2b(CA,T,_Vs).

l2b('$VAR'(V),v(I),Vs):-once(nth0(I,Vs,'$VAR'(V))).
l2b(a(X,Y),a(A,B),Vs):-l2b(X,A,Vs),l2b(Y,B,Vs).
l2b(l(V,Y),l(A),Vs):-l2b(Y,A,[V|Vs]).
\end{code}

\subsection{A compressed de Bruijn representation of lambda terms}\label{comp}.

As a step further, we will not look into compressing blocks of lambdas.
Iterated lambdas 
(represented as a block of constructors {\tt l/1} in the de Bruijn notation)
can be seen as a successor arithmetic representation of
a number that counts them. So it makes sense to
represent that number more efficiently in the usual binary
notation. Note that in de Bruijn notation
blocks of  lambdas can wrap either applications
or variable occurrences represented as indices.
This suggests using just two constructors: {\tt v/2}
indicating in a term {\tt v(K,N)} that we have {\tt K}
lambdas wrapped around the de Bruijn index {\tt v(N)} or
{\tt a/3} indicating in a term {\tt a(K,X,Y)} that
{\tt K} lambdas are wrapped around  the application {\tt a(X,Y)}.

We call the terms built this way with the constructors
{\tt v/2} and {\tt a/3} {\em compressed de Bruijn
terms}.

\subsection{From de Bruijn to compressed}

We can make precise the definition of 
compressed deBruijn terms by providing
a bijective transformation between them
and the usual de Bruijn terms.

The predicate {\tt b2c} converts from the usual de Bruijn representation to the compressed one.
It proceeds by case analysis on {\tt v/1, a/2, l/1} and counts the binders {\tt l/1} as
it descends toward the leaves of the tree. Its steps are controlled by 
the predicate {\tt up/2} that increments the counts when crossing 
a binder.
% ok
\begin{code}
b2c(v(X),v(0,X)). 
b2c(a(X,Y),a(0,A,B)):-b2c(X,A),b2c(Y,B). 
b2c(l(X),R):-b2c1(0,X,R).

b2c1(K,a(X,Y),a(K1,A,B)):-up(K,K1),b2c(X,A),b2c(Y,B).
b2c1(K, v(X),v(K1,X)):-up(K,K1).
b2c1(K,l(X),R):-up(K,K1),b2c1(K1,X,R).  

up(From,To):-From>=0,To is From+1.
\end{code}

\subsubsection{From compressed to de Bruijn}
The predicate {\tt c2b} converts from the compressed
to the usual de Bruijn representation.
It reverses the effect of {\tt b2c} by expanding
the {\tt K} in {\tt v(K,N)} and {\tt a(K,X,Y)}
into {\tt K+1} {\tt l/1} binders (as counts start at {\tt 0}).
The predicate {\tt iterLam} performs this operation in both cases,
and the predicate {\tt down/2} computes the decrements
at each step. We will reuse the predicates {\tt up/2} and {\tt down/2}
that can be seen as abstracting away the successor/predecessor operation.
% ok
\begin{code}
c2b(v(K,X),R):-X>=0,iterLam(K,v(X),R).
c2b(a(K,X,Y),R):-c2b(X,A),c2b(Y,B),iterLam(K,a(A,B),R).

iterLam(0,X,X).
iterLam(K,X,l(R)):-down(K,K1),iterLam(K1,X,R).

down(From,To):-From>0,To is From-1. 
\end{code}

\BX
illustrates the bijection defined by the predicates {\tt b2c} and {\tt c2b}.
\begin{codex}
?- BT=l(l(l(a(a(v(2), v(0)), a(v(1), v(0)))))),b2c(BT,CT),c2b(CT,BT1).
BT = BT1, BT1 = l(l(l(a(a(v(2), v(0)), a(v(1), v(0)))))),
CT = a(3, a(0, v(0, 2), v(0, 0)), a(0, v(0, 1), v(0, 0))) .
\end{codex}
\EX

A convenient way to simplify defining chains of such conversions is
by using Prolog's DCG transformation. For instance, the
predicate {\tt c2l/2} converts from
compressed de Bruijn terms and standard lambda terms using
de Bruijn terms as an intermediate step,
while {\tt l2c/2}  works the other way around.
\begin{code}
c2l --> c2b,b2l.
l2c --> l2b,b2c.
\end{code}

Closed terms can be easily identified by ensuring
that the lambda binders on a given path from root
outnumber the de Bruijn index of a variable occurrence 
ending the path.
The predicate {\tt isClosedC} does that
for compressed de Bruijn terms.
\begin{code}  
isClosedC(T):-isClosedC(T,0).
  
isClosedC(v(K,N),S):-N<S+K.
isClosedC(a(K,X,Y),S1):-S2 is S1+K,isClosedC(X,S2),isClosedC(Y,S2).
\end{code}

\section{Inferring simple types for lambda terms} \label{typinf}

\subsection{Type Inference on standard terms with logic variables}
{\em Simple types}, represented as binary trees 
built with the constructor ``\verb~>/2~'' with empty leaves
representing the unique primitive type ``{\tt x}'', can be seen
as a ``Catalan approximation'' of lambda terms, centered
around ensuring their safe and terminating
evaluation (strong normalization).

While in a functional language inferring types
requires implementing unification with occur check, 
as shown for instance in \cite{grygielGen}, 
this operation is available in Prolog as a built-in.
Also a ``post-mortem'' verification that unification
has not introduced any cycles is provided
by the built-in {\tt acyclic\_term/1}.

The predicate {\tt extractType/2} works by seeing
each logical variable {\tt X} as denoting its type.
As logic variable bindings propagate between binders
and occurrences, this ensures that types are 
consistently inferred.
\begin{code}  
extractType(X,TX):-var(X),!,TX=X. % this matches all variables
extractType(l(TX,A),(TX>TA)):-extractType(A,TA).
extractType(a(A,B),TY):-extractType(A,(TX>TY)),extractType(B,TX).
\end{code}
Instead of (inefficiently) using unification with occurs-check
at each step, we ensure that at the end, our
term is still {\em acyclic}, with the built-in ISO-Prolog predicate
{\tt acyclic\_term/1}.
\begin{code}
polyTypeOf(LTerm,Type):-
  extractType(LTerm,Type),
  acyclic_term(LTerm).
\end{code}
At this point, most general types are inferred by {\tt extractType}
 as fresh variables,  similar to polymorphic types 
 in functional languages, if one interprets
 logic variables as universally quantified.
 Such variables stand for any type expression, as
 schemata in the case of propositional or predicate logic axioms. 
\BX
Type inference for canonically represented standard terms. Note the need
to use {\tt copy\_term} to avoid binding the object-level variables.
\begin{codex}
?- polyTypeOf(l(X,a(X,l(Y,Y))),T).
X = ((A>A)>B),
Y = A,
T = (((A>A)>B)>B).

?- copy_term(l(X,a(X,l(Y,Y))),LT),polyTypeOf(LT,T).              
LT = l((A>A)>B, a((A>A)>B, l(A, A))),
T = (((A>A)>B)>B).
\end{codex}
\EX
However, as we are only interested in simple types, we will
bind uniformly the leaves of our type tree to the constant
``{\tt x}'' representing our only primitive type, by
using the predicate {\tt bindTypeB/1}.
% ok
\begin{code}
bindTypeB(x):-!.
bindTypeB((A>B)):-bindTypeB(A),bindTypeB(B).  
\end{code}
The simple type of a compressed de Bruijn term is then defined as:
\begin{code}
hasType(CTerm,Type):-
  c2l(CTerm,LTerm),
  polyTypeOf(LTerm,Type),
  bindTypeB(Type).
\end{code}  
We can also define the predicate {\tt typable/1}
that checks if a lambda term is well typed,
by trying to infer and then ignoring its inferred 
type.
% ok
\begin{code}
typable(Term):-hasType(Term,_Type).
\end{code}

\BX
illustrates typability of the term corresponding to the
{\tt S} combinator $ ~\lambda x_0. ~\lambda x_1. ~\lambda x_2.((x_0~x_2)~(x_1~x_2))   $ and untypabilty of the term corresponding to the {\tt Y} combinator $ ~\lambda x_0.( ~\lambda x_1.(x_0~(x_1~x_1)) ~ ~\lambda x_2.(x_0~(x_2~x_2)) ) $, in de Bruijn form.
\begin{codex}
?- hasType(a(3,a(0,v(0,2),v(0,0)),a(0,v(0,1),v(0,0))),T).
T = ((x> (x>x))> ((x>x)> (x>x))) .
?- hasType(
   a(1,a(1,v(0,1),a(0,v(0,0),v(0,0))),a(1,v(0,1),a(0,v(0,0),v(0,0)))),T).
false.
\end{codex}
\EX

\subsection{Type inference for lambda terms in de Bruijn notation}\label{dbinf}

As lambda terms represent functions, inferring their types
provides information on what kind of argument(s) they can be applied to.
For simple types, type inference is decidable \cite{hindley2008lambda} 
and it uses unification
to recursively propagate type information between application sites
of variable occurrences  covered by a given lambda binder. 
We will describe next a type inference algorithm using de Bruijn indices
in Prolog - a somewhat unusual choice, given that logic variables can
play the role of lambda binders directly. One of the reasons we chose them
is that they will be simpler to manipulate at meta-language level,
as they handle object-level variables implicitly. At the same time
this might be useful for other purposes, as
we are not aware of any Prolog implementation of type inference
with this representation of lambda terms.

{\em Simple types} will be defined here also as binary trees 
built with the constructor ``\verb~>/2~'' with empty leaves,
representing the unique primitive type ``{\tt x}''.
Types can be seen as
as a ``binary tree approximation'' of lambda terms, centered
around ensuring their safe and terminating
evaluation (strong normalization), as it is well-known
(e.g., \cite{bar93}) that
lambda terms that have simple types are strongly normalizing.
When a term {\tt X} has a type {\tt T} we say that the
type {\tt T} is {\em inhabited} by the term {\tt X}.

While in a functional language inferring types
requires implementing unification with occur check, 
as shown for instance in the appendix of \cite{grygielGen}, 
this is readily available in Prolog.

The predicate {\tt boundTypeOf/3} works by associating the same
logical variable, denoting its type, to each of its occurrences.
As a unique logic variable is associated to each leaf {\tt v/1}
corresponding 
via its de Bruijn index to the same binder, types are consistently inferred. 
This is ensured by the use of the built-in {\tt nth0(I,Vs,V0)}
that unifies {\tt V0} with the {\tt I}-th element
of the type context {\tt Vs}. 
Note that unification with occurs-check
needs to be used to avoid cycles in the inferred
type formulas.
\begin{code}  
deBruijnTypeOf(v(I),V,Vs):-
  nth0(I,Vs,V0),
  unify_with_occurs_check(V,V0).
deBruijnTypeOf(a(A,B),Y,Vs):-
  deBruijnTypeOf(A,(X>Y),Vs),
  deBruijnTypeOf(B,X,Vs).
deBruijnTypeOf(l(A),(X>Y),Vs):-
  deBruijnTypeOf(A,Y,[X|Vs]).
\end{code}

At this point, most general types are inferred by {\tt deBruijnTypeOf}
 as fresh variables, similar to polymorphic types 
 in functional languages, if one interprets
 logic variables as universally quantified.

% what about non-well-founded set-like (Aczel) types? 
 
\BX
Type inferred for the
{\tt S} combinator $ ~\lambda x_0. ~\lambda x_1.  ~\lambda x_2.((x_0~x_2)~(x_1~x_4))$ in 
 de Bruijn form.
\begin{codex}
?- X=l(l(l(a(a(v(2), v(0)), a(v(1), v(0)))))),deBruijnTypeOf(X,T,0).
X = l(l(l(a(a(v(2), v(0)), a(v(1), v(0)))))),
T = ((A> (B>C))> ((A>B)> (A>C))).
\end{codex}
\EX

However, as we are only interested 
in simple types of closed terms with only one basic type, we will
bind uniformly the leaves of our type tree to the constant
``{\tt x}'' representing our only primitive type, by
using the predicate {\tt bindTypeB/1}.

\begin{code}
boundTypeOf(A,T):-deBruijnTypeOf(A,T0,[]),bindTypeB(T0),!,T=T0.
\end{code}

\BX
Simple type inferred for the 
{\tt S} combinator and failure to assign a type to the {\tt Y} combinator 
$ ~\lambda x_0.( ~\lambda x_1. (x_0~(x_1~x_2)) ~\lambda x_2.(x_1~(x_2~x_2)) ) $.
\begin{codex}
?- boundTypeOf(l(l(l(a(a(v(2), v(0)), a(v(1), v(0)))))),T).
T = T = ((x> (x>x))> ((x>x)> (x>x))).
?- boundTypeOf(l(a(l(a(v(1), a(v(0), v(0)))), 
                      l(a(v(1), a(v(0), v(0)))))),T). 
false.
\end{codex}
\EX

\section{Generating  families of lambda terms} \label{gener}

We can see our compressed de Bruijn terms 
as binary trees decorated with integer labels.
The binary trees provide a skeleton
that  describes the
applicative structure of the
underlying lambda terms.
At the same time, types in the {\em simple typed
lambda calculus} \cite{bar93} share  a similar
binary tree structure.

\subsection{Generating binary trees}\label{cats}

Binary trees are among the most well-known members of the 
Catalan family of combinatorial objects \cite{StanleyEC}, that
has at least 58 structurally distinct members,
covering several data structures, geometric objects and
formal languages.

We will build  binary trees with the constructor \verb~>/2~ for branches
and the constant {\tt x} for its leaves. This will match the usual notation for 
simple types \cite{bar93} 
of lambda terms that can be represented as binary trees.

\subsubsection{Generating binary trees of given depth}
A generator / recognizer of binary trees of a limited depth, 
counted by  entry {\tt A003095} in \cite{intseq}
is defined by the predicate {\tt genTreeByDepth/2}.
\begin{code}
genTreeByDepth(_,x).
genTreeByDepth(D1,(X>Y)):-down(D1,D2),
  genTreeByDepth(D2,X),
  genTreeByDepth(D2,Y).
\end{code}
% A003095 - catalan of limited depth

\BX
illustrates trees of depth at most 2 generated  by the predicate {\tt genTreeByDepth}.
\begin{codex}
?- genTreeByDepth(2,T).
T = x ;
T = (x>x) ;
T = (x> (x>x)) ;
T = ((x>x)>x) ;
T = ((x>x)> (x>x)).
\end{codex}
\EX

\subsubsection{Generating binary trees of given size} \label{gentree}

A generator / recognizer of binary trees of a fixed size (seen
as the number of internal nodes,
counted by  entry {\tt A000108} in \cite{intseq})
is defined by the predicate {\tt genTree/2}.

\begin{code}  
genTree(N,T):-genTree(T,N,0).
 
genTree(x)-->[].
genTree((X>Y))-->down,
  genTree(X),
  genTree(Y).
\end{code}
Note the creative use of Prolog's DCG-grammar transformation. After the DCG expansion,
the code for {\tt genTree/3} becomes something like:
\begin{codex}
genTree(x,K,K).
genTree((X>Y),K1,K3):-down(K1,K2),
  genTree(X,K2,K3),
  genTree(K3,K4).
\end{codex}
Given that down(K1,K2) unfolds to \verb~K1>0,K2 is K1-1~ it is clear that
this code ensures that the total number of nodes {\tt N} passed
by {\tt genTree/2} to {\tt genTree/3}
controls the size of the generated trees.
We will reuse this pattern through the paper, as it simplifies the writing of
generators for various combinatorial objects.
It is also convenient to standardize on the number of 
{\em internal nodes} as defining the {\em size} of our terms.
\BX
illustrates trees with 3 internal nodes (built with the constructor ``\verb~>/2~'')
generated  by {\tt genTree/2}.
\begin{codex}
?- genTree(3,BT).                                       
BT = (x> (x> (x>x))) ;
BT = (x> ((x>x)>x)) ;
BT = ((x>x)> (x>x)) ;
BT = ((x> (x>x))>x) ;
BT = (((x>x)>x)>x) .
\end{codex}
\EX

The predicate {\tt tsize} defines the size
of a binary tree in terms of the
number of its internal nodes.
\begin{code}  
tsize(x,0).
tsize((X>Y),S):-tsize(X,A),tsize(Y,B),S is 1+A+B.
\end{code}

\subsection{Generating Motzkin trees}
Motzkin-trees (also called binary-unary trees)
have internal nodes of arities 1 or 2. Thus they can be seen
as an abstraction of lambda terms that ignores de Bruijn indices at the leaves.
% Motzkin 1, 1, 2, 4, 9, 21, 51, 127, 323
% Large Schroder numbers: 1, 2, 6, 22, 90, 394, 1806, 8558, 41586
% free unary-binary trees mapped to nats 0-A006318 1-A001006
The predicate {\tt motzkinTree/2} generates Motzkin trees
with L internal and leaf nodes. 
\begin{code}
motzkinTree(L,T):-motzkinTree(T,L,0).

motzkinTree(u)-->down.
motzkinTree(l(A))-->down,motzkinTree(A).
motzkinTree(a(A,B))-->down,motzkinTree(A),motzkinTree(B).
\end{code}
Motzkin-trees are counted by the sequence
{\tt A001006} in \cite{intseq}.
If we replace the first clause  of {\tt motzkinTree/2} with
\verb~motzkinTree(u)-->[]~, we obtain binary-unary trees with {\tt L}
internal nodes, counted
by the entry {\tt A006318} (Large Schr\"oder Numbers)
of \cite{intseq}.

\subsection{Generating closed lambda terms in standard notation}
With logic variables representing binders and their
occurrences, one can also generate lambda 
terms in standard notation directly.
The predicate {\tt genLambda/2} equivalent to {\tt genStandard/2},
builds a list of logic variables as it generates binders. 
When generating a leaf,
it picks nondeterministically one of the binders among
the list of binders available, {\tt Vs}. As usual, the predicate
{\tt down/2} controls the number of internal nodes.

\begin{code}
genLambda(L,T):-genLambda(T,[],L,0).

genLambda(X,Vs)-->{member(X,Vs)}.
genLambda(l(X,A),Vs)-->down,genLambda(A,[X|Vs]).
genLambda(a(A,B),Vs)-->down,genLambda(A,Vs),genLambda(B,Vs).
\end{code}

% A220471: 1, 2, 9, 40, 238, 1564, 11807

To generate lambda terms of a given size, we can write
generators similar to the ones for 
binary trees in section \ref{cats}.
Moreover, we have the choice to use generators for
standard, de Bruijn or compressed de Bruijn terms
and then bijectively morph the resulting terms
in the desired representation, as outlined is section \ref{comp}.

\subsection{Deriving a generator for lambda terms in de Bruijn form}\label{dbgen}

We can derive a generator for closed lambda terms in de Bruijn form
by extending a {\em Motzkin} or {\em unary-binary} tree 
generator to keep track of the lambda binders. 
When reaching a leaf {\tt v/1}, one of the 
available binders (expressed as a de
Bruijn index) will be assigned to it
nondeterministically.

The predicate {\tt genDBterm/4}  generates
closed de Bruijn terms with a fixed number of
internal (non-index) nodes, as counted by entry
{\tt A220894} in \cite{intseq}.

% closed lambda terms, 1-st compressed de Bruijn notation, 0-A220894 1-A135501
% closed lambda terms, compressed de Bruijn notation, 0-A220894 1-A135501
\begin{code}
genDBterm(v(X),V)-->
  {down(V,V0)},
  {between(0,V0,X)}.
genDBterm(l(A),V)-->down,
  {up(V,NewV)},
  genDBterm(A,NewV).
genDBterm(a(A,B),V)-->down,
  genDBterm(A,V),
  genDBterm(B,V).

\end{code}
The range of possible indices is provided by
Prolog's built-in integer range generator
{\tt between/3}, that provides
values from {\tt 0} to {\tt V0}.
Note also the use of {\tt down/2}
abstracting away the predecessor operation and
{\tt up/2} abstracting away the successor operation.
Together, they control the amount of available nodes
and the incrementing of de Bruijn indices at each lambda node.

Our generator of de Bruijn terms is exposed through two interfaces: {\tt genDBterm/2}
that generates closed de Bruijn 
terms with exactly {\tt L} non-index nodes
and {\tt genDBterms/2} that generates terms with up to {\tt L}
non-index nodes, by not enforcing that
exactly {\tt L} internal nodes must be used.
\begin{code}
genDBterm(L,T):-genDBterm(T,0,L,0).

genDBterms(L,T):-genDBterm(T,0,L,_).
\end{code}
% Like in the case of Motzkin trees, 
Inserting a {\tt down} operation in
the first clause of {\tt genDBterm/4}
will enumerate terms counted by sequence {\tt A135501} instead of {\tt A220894},
 as this would imply assuming size 1 for variables.
in \cite{intseq}.
\BX
Generation of terms with up to {\tt 2} internal nodes.
\begin{codex}
?- genDBterms(2,T).
T = l(v(0)) ;
T = l(l(v(0))) ;
T = l(l(v(1))) ;
T = l(a(v(0), v(0))).
\end{codex}
\EX

\subsection{Deriving generators for closed terms in compressed de Bruijn form}
A generator for compressed de Bruijn terms can be derived by using
 {\tt DCG} syntax to compose a generator for closed
de Bruijn terms {\tt genDBterm} and {\tt genDBterms} 
and a transformer to compressed terms {\tt b2c/2}. 
\begin{code}
genCompressed --> genDBterm,b2c.
genCompresseds--> genDBterms,b2c.
\end{code}

\subsection{Generators for closed terms in standard notation}
% closed lambda terms, standard notation
\begin{code}
genStandard-->genDBterm,b2l.
genStandards-->genDBterms,b2l.
\end{code}

\BX
illustrates generators for closed terms in compressed de Bruijn and standard notation
with logic variables providing lambda variable names.
\begin{codex}
?- genCompressed(2,T).
T = v(2, 0) ;
T = v(2, 1) ;
T = a(1, v(0, 0), v(0, 0)).

?- genStandard(2,T).
T = l(_G3434, l(_G3440, _G3440)) ;
T = l(_G3434, l(_G3440, _G3434)) ;
T = l(_G3437, a(_G3437, _G3437)).
\end{codex}
\EX

\subsection{Generating normal forms}
Normal forms are lambda terms that cannot be further reduced.
A normal form should not be an application with a lambda as its left branch 
and, recursively, its subterms should also be normal forms.
The predicate {\tt nf/4} defines this inductively and generates
all normal forms with {\tt L} internal nodes in de Bruijn form.
\begin{code}
nf(v(X),V)-->{down(V,V0),between(0,V0,X)}.
nf(l(A),V)-->down,{up(V,NewV)},nf(A,NewV).
nf(a(v(X),B),V)-->down,nf(v(X),V),nf(B,V).  
nf(a(a(X,Y),B),V)-->down,nf(a(X,Y),V),nf(B,V).  
\end{code}
As we standardize our generators to produce compressed de Bruijn terms,
we combine {\tt nf/4} and the converter {\tt b2c/2} to produce normal forms
of size exactly  {\tt L} (predicate {\tt nf/2}) and with size 
up to {\tt L} (predicate {\tt nfs/2}).

\begin{code}
nf(L,T):-nf(B,0,L,0),b2c(B,T).
nfs(L,T):-nf(B,0,L,_),b2c(B,T).
\end{code}

\BX
illustrates normal forms with exactly 2 non-index nodes.
\begin{codex}
?- nf(2,T).
T = v(2, 0) ;
T = v(2, 1) ;
T = a(1, v(0, 0), v(0, 0)) .
\end{codex}
\EX
The number of solutions of our 
generator replicates entry {\tt A224345} % = 1, 3, 11, 53, 323, ...
in \cite{intseq} that counts closed normal forms of various
sizes.

\subsection{Generation of linear lambda terms}
{\em Linear lambda terms} \cite{BCI13} restrict binders to {\em exactly
one} occurrence.

The predicate {\tt linLamb/4}
uses logic variables both as leaves and as lambda binders and generates
terms in standard form.
In the process, binders accumulated on
the way down from the root, must be split between the two
branches of an application node.
The predicate
{\tt subset\_and\_complement\_of/3} achieves this by 
generating all such possible splits
of the set of binders.
\begin{code}
linLamb(X,[X])-->[].
linLamb(l(X,A),Vs)-->down,linLamb(A,[X|Vs]).
linLamb(a(A,B),Vs)-->down,
  {subset_and_complement_of(Vs,As,Bs)},
  linLamb(A,As),linLamb(B,Bs).
\end{code}
At each step of {\tt subset\_and\_complement\_of/3},
{\tt place\_element/5} is called to distribute each element
of a set to exactly one of two disjoint subsets.
\begin{code}
subset_and_complement_of([],[],[]).
subset_and_complement_of([X|Xs],NewYs,NewZs):-
  subset_and_complement_of(Xs,Ys,Zs),
  place_element(X,Ys,Zs,NewYs,NewZs).
\end{code}
\begin{code}
place_element(X,Ys,Zs,[X|Ys],Zs).
place_element(X,Ys,Zs,Ys,[X|Zs]).
\end{code}
As usual, we standardize the generated terms
by converting them with {\tt l2c} to
compressed de Bruijn terms.
\begin{code}
linLamb(L,CT):-linLamb(T,[],L,0),l2c(T,CT).
\end{code}

\BX
illustrates linear lambda terms for {\tt L=3}.
\begin{codex}
?- linLamb(3,T).
T = a(2, v(0, 1), v(0, 0)) ;
T = a(2, v(0, 0), v(0, 1)) ;
T = a(1, v(0, 0), v(1, 0)) ;
T = a(1, v(1, 0), v(0, 0)) ;
T = a(0, v(1, 0), v(1, 0)) .
\end{codex}
\EX

\subsection{Generation of affine linear lambda terms}
Linear affine lambda terms \cite{BCI13} restrict binders to {\em at most
one} occurrence.

% for affine linear terms - a073950 ???
\begin{code}
afLinLamb(L,CT):-afLinLamb(T,[],L,0),l2c(T,CT).

afLinLamb(X,[X|_])-->[].
afLinLamb(l(X,A),Vs)-->down,afLinLamb(A,[X|Vs]).
afLinLamb(a(A,B),Vs)-->down,
  {subset_and_complement_of(Vs,As,Bs)},
  afLinLamb(A,As),afLinLamb(B,Bs).
\end{code}

\BX
illustrates generation of affine linear lambda terms 
in compressed de Bruijn form.
\begin{codex}
?- afLinLamb(3,T).          
T = v(3, 0) ;
T = a(2, v(0, 1), v(0, 0)) ;
T = a(2, v(0, 0), v(0, 1)) ;
T = a(1, v(0, 0), v(1, 0)) ;
T = a(1, v(1, 0), v(0, 0)) ;
T = a(0, v(1, 0), v(1, 0)) ;
\end{codex}
\EX
Clearly all linear terms are affine. It is also known
that all affine terms are typable.

\subsubsection{Generating lambda terms of bounded unary height}

Lambda terms of bounded unary height are introduced in \cite{bodini11} where 
it is argued that  such terms are naturally occurring in programs 
and it is shown that their asymptotic behavior is
easier to study.

They are specified by giving a bound on the number of lambda binders
from a de Bruijn index to the root of the term.

% limited unary height - K in base lim for v(K and a(K

\begin{code}
boundedUnary(v(X),V,_D)-->{down(V,V0),between(0,V0,X)}.
boundedUnary(l(A),V,D1)-->down,
  {down(D1,D2),up(V,NewV)},
  boundedUnary(A,NewV,D2).
boundedUnary(a(A,B),V,D)-->down,
  boundedUnary(A,V,D),boundedUnary(B,V,D).  
\end{code}

The predicate {\tt boundedUnary/5} generates lambda terms
of size {\tt L} 
in compressed de Bruijn form with unary hight {\tt D}.
\begin{code}
boundedUnary(D,L,T):-boundedUnary(B,0,D,L,0),b2c(B,T).
boundedUnarys(D,L,T):-boundedUnary(B,0,D,L,_),b2c(B,T).
\end{code}

\BX
illustrates terms of unary height 1 with size up to 3.
\begin{codex}
?- boundedUnarys(1,3,R).
R = v(1, 0) ;
R = a(1, v(0, 0), v(0, 0)) ;
R = a(1, v(0, 0), a(0, v(0, 0), v(0, 0))) ;
R = a(1, a(0, v(0, 0), v(0, 0)), v(0, 0)) ;
R = a(0, v(1, 0), v(1, 0)) .
\end{codex}
\EX

%?- count(boundedUnary(1),8,Xs). 
%Xs = [1, 1, 3, 7, 21, 62, 197, 637].

%?- count(boundedUnary(2),8,Xs).
%Xs = [1, 3, 11, 51, 261, 1446, 8473, 51817].

%?- count(boundedUnary(3),8,Xs).
%Xs = [1, 3, 14, 78, 510, 3714, 29335, 246442].

\subsection{Generating terms in binary lambda calculus encoding}

Generating de Bruijn terms based 
on the size of their binary lambda calculus encoding
\cite{binlamb}
works by using a DCG mechanism to build the actual code as
a list {\tt Cs} of {\tt 0} 
and {\tt 1} digits and specifying the size of the code in advance.

% BLC: A114852  - binary lambda calculus Jon Tromp
% 0,0,0,1,0,1,1,2,1,6,5,13,14,37,44,101
\begin{codeh}
blc(L,T):-blc(L,T,_Cs).
\end{codeh}
\begin{code}
blc(L,T,Cs):-length(Cs,L),blc(B,0,Cs,[]),b2c(B,T).

blc(v(X),V)-->{between(1,V,X)},encvar(X).
blc(l(A),V)-->[0,0],{NewV is V+1},blc(A,NewV).
blc(a(A,B),V)-->[0,1],blc(A,V),blc(B,V).  
\end{code}
Note that de Bruijn binders are encoded as {\tt 00},
applications as {\tt 01} and de Bruijn indices
in unary notation are encoded as {\tt 00$\ldots$01}.
This operation is preformed by the predicate {\tt encvar/3},
that, in DCG notation, uses {\tt down/2} at each step to generate the
sequence of {\tt 1} terminated {\tt0} digits.
\begin{code}  
encvar(0)-->[0].
encvar(N)-->{down(N,N1)},[1],encvar(N1).
\end{code}

\BX
illustrates generation of 8-bit binary lambda terms ({\tt Cs})
together with their compressed de Bruijn form ({\tt T}).
\begin{codex}
?- blc(8,T,Cs).
T = v(3, 1),
Cs = [0, 0, 0, 0, 0, 0, 1, 0] ;
T = a(1, v(0, 1), v(0, 1)),
Cs = [0, 0, 0, 1, 1, 0, 1, 0] .
\end{codex}
\EX
Note that while not bijective, the binary encoding
has the advantage of being a self-delimiting code. This
facilitates its use in an unusually compact interpreter.

\subsection{Generating typable terms}
% A220471  1, 2, 9, 40, 238, 1564
% typable closed lambda terms, compressed de Bruijn notation
The predicate {\tt genTypable/2} generates closed typable terms of size {\tt L}.
These are counted by entry {\tt A220471} in \cite{intseq}.
\begin{code}
genTypable(L,T):-genCompressed(L,T),typable(T).
genTypables(L,T):-genCompresseds(L,T),typable(T).
\end{code}

\BX
illustrates a generator for
closed typable terms.
\begin{codex}
?- genCompressed(2,T).
T = v(2, 0) ;
T = v(2, 1) ;
T = a(1, v(0, 0), v(0, 0)).
\end{codex}
\begin{comment}
\begin{codex}
?- l(2,T).
T = l(_G3434, l(_G3440, _G3440)) ;
T = l(_G3434, l(_G3440, _G3434)) ;
T = l(_G3437, a(_G3437, _G3437)).

?- genTypable(2,T).                   
T = v(2, 0) ;
T = v(2, 1) .
\end{codex}
\end{comment}
\EX

\subsection{Combining term generation and type inference}

One could combine a generator for closed terms and a type inferrer 
in  a ``generate-and-test'' style as follows:
\begin{code}
genTypedTerm1(L,Term,Type):-
  genDBterm(L,Term),
  boundTypeOf(Term,Type).
\end{code}

\begin{codeh}  
genTypedTerms1(L,Term,Type):-
  genDBterms(L,Term),
  boundTypeOf(Term,Type).
\end{codeh}
Note that when one wants to select only terms having a given type
this is quite inefficient. Next, we will show how to
combine size-bound term generation,
testing for closed terms and type inference into a single
predicate. This will enable efficient querying
about {\em what terms inhabit a given type}, as one
would expect from Prolog's multi-directional
execution model.

\subsection{Generating closed well-typed terms of a given size} \label{typedgen}

One can derive, from the type inferrer
{\tt boundTypeOf}, a more efficient generator for
de Bruijn terms with a given number
of internal nodes.

The predicate {\tt genTypedTerm/5}
relies on Prolog's DCG notation
to thread together the steps
controlled by the predicate {\tt down}.
Note also the nondeterministic use of
the built-in {\tt nth0} that enumerates
values for both {\tt I} and {\tt V}
ranging over the list 
of available variables {\tt Vs},
as well as the use of {\tt
unify\_with\_occurs\_check}
to ensure that unification
of candidate types 
does not create cycles.
%The predicate {\tt genTypedTerm}

\begin{code}
genTypedTerm(v(I),V,Vs)-->
  {
   nth0(I,Vs,V0),
   unify_with_occurs_check(V,V0)
  }.
genTypedTerm(a(A,B),Y,Vs)-->down,
  genTypedTerm(A,(X>Y),Vs),
  genTypedTerm(B,X,Vs).
genTypedTerm(l(A),(X>Y),Vs)-->down,
  genTypedTerm(A,Y,[X|Vs]).  
\end{code}
Two interfaces are offered:
{\tt genTypedTerm} that generates de Bruijn terms
of with exactly {\tt L} internal nodes and
{\tt genTypedTerms} that generates terms with {\tt L}
internal nodes or less.
\begin{code}
genTypedTerm(L,B,T):-
  genTypedTerm(B,T,[],L,0),
  bindTypeB(T).

genTypedTerms(L,B,T):-
  genTypedTerm(B,T,[],L,_),
  bindTypeB(T).
\end{code} 
As expected, the number of solutions, computed as the sequence 
{1, 2, 9, 40, 238, 1564, 11807, 98529, 904318, 9006364, 96709332, 1110858977~$~\dots$}
 for sizes  $1,2,3, ~\ldots~$,12,$~\ldots~$
 matches entry {\tt 
A220471} in \cite{intseq}.
Note that the last 2 terms are not (yet) in the {\tt 
A220471} in \cite{intseq} as the generate and filter method used in
\cite{grygielGen} is limited by the super-exponential growth of
the closed lambda terms among which the relatively 
few well-typed ones need to be found (e.g. more than 12 billion terms for size 12).
Interestingly, by interleaving generation of closed terms and
type inference in the predicate {\tt genTypedTerm} the
time to generate all the well-typed terms
is actually shorter than the time to generate
all closed terms of the same size, e.g.. 3.2 vs 4.3 seconds for size 9 
with SWI-Prolog. As via the Curry-Howard isomorphism closed simply typed terms
correspond to proofs of tautologies in minimal logic, co-generation of terms and
types corresponds to co-generation of tautologies and their proofs for
proofs of given length.

\BX
Generation of well-typed closed de Bruijn terms of size 3.
\begin{codex}
?- genTypedTerm(3,Term,Type).
Term = a(l(v(0)), l(v(0))),
Type = (x>x) ;
Term = l(a(v(0), l(v(0)))),
Type = (((x>x)>x)>x) ;
Term = l(a(l(v(0)), v(0))),
Type = (x>x) ;
Term = l(a(l(v(1)), v(0))),
Type = (x>x) ;
Term = l(l(a(v(0), v(1)))),
Type = (x> ((x>x)>x)) ;
Term = l(l(a(v(1), v(0)))),
Type = ((x>x)> (x>x)) ;
Term = l(l(l(v(0)))),
Type = (x> (x> (x>x))) ;
Term = l(l(l(v(1)))),
Type = (x> (x> (x>x))) ;
Term = l(l(l(v(2)))),
Type = (x> (x> (x>x))) .
\end{codex}
\EX

\begin{comment}
?- time(count3(genTypedTerm,9,R)),write(R),nl,!,fail. 
% 37,716,615 inferences, 3.232 CPU in 3.235 seconds (100% CPU, 11669738 Lips)
[1,2,9,40,238,1564,11807,98529,904318]
false.

?- time(count2(genDBterm,9,R)),write(R),nl,!,fail.   
% 38,197,938 inferences, 4.799 CPU in 4.803 seconds (100% CPU, 7960023 Lips)
[1,3,14,82,579,4741,43977,454283,5159441]
false.
\end{comment}

%\subsection{Generating BCK(p) terms}

%{\Huge TODO}

\section{Normalization of lambda terms}  \label{eval}

Evaluation of lambda terms involves {\em $\beta$-reduction}, 
 a transformation 
of a term like {\tt a(l(X, A),B)} by replacing every 
occurrence of {\tt X} in {\tt A} by {\tt B},
under the assumption that {\tt X} does not occur in {\tt B} and 
{\em $\eta$-conversion}, the transformation of an application term
{\tt a(l(X,A),X)} into {\tt A}, under the assumption that
X does not occur in {\tt A}.

The first tool we need to implement normalization of lambda
terms is a safe substitution operation.
In lambda-calculus based functional languages this can be achieved
through a HOAS (Higher-Order Abstract Syntax) mechanism, that borrows
the substitution operation from the underlying ``meta-language''.
To this end, lambdas are implemented as functions which
get executed (usually  lazily) when
substitutions occur.
We refer to \cite{hoas} for the original description of this mechanism,
widely used these days for implementing embedded 
domain specific languages and proof assistants in 
languages like Haskell or ML.

While logic variables offer a fast and easy way to perform
{\em substitutions}, they do not offer any elegant mechanism
to ensure that substitutions are {\em capture-free}.
Moreover, no HOAS-like mechanism exists in Prolog for
borrowing anything close to {\em normal order reduction} from the 
underlying system, as Prolog would provide, through 
meta-programming, only a {\em call-by-value} model.

We will devise here a simple and safe interpreter
for lambda terms supporting normal order $\beta$-reduction 
by using de Bruijn terms, which also ensures that terms
are unique up to $\alpha$-equivalence. 
As usual, we will omit $\eta$-conversion, known to interfere 
with things like type inference, as the redundant argument(s) 
that it removes might carry useful type information.

The predicate {\tt beta/3} implements the $\beta$-conversion operation
corresponding to the binder {\tt l(A)}. It calls {\tt subst/4} that
replaces in {\tt A} occurrences corresponding the the binder {\tt l/1}.
\begin{code}
beta(l(A),B,R):-subst(A,0,B,R).
\end{code}
The predicate {\tt subst/4} counts, starting from {\tt 0}
the lambda binders down to an occurrence {\tt v(N)}.
Replacement occurs at at level {\tt I} when {\tt I=N}.
\begin{code}
subst(a(A1,A2),I,B,a(R1,R2)):-I>=0,
  subst(A1,I,B,R1),
  subst(A2,I,B,R2).   
subst(l(A),I,B,l(R)):-I>=0,I1 is I+1,subst(A,I1,B,R).
subst(v(N),I,_B,v(N1)):-I>=0,N>I,N1 is N-1. 
subst(v(N),I,_B,v(N)):-I>=0,N<I.
subst(v(N),I,B,R):-I>=0,N=:=I,shift_var(I,0,B,R).
\end{code}
When the right occurrence {\tt v(N)} is reached, the term
substituted for it is shifted such that its variables
are marked with the new, incremented distance to their binders.
The predicate {\tt shift\_var/4} implements
this operation.
\begin{code}
shift_var(I,K,a(A,B),a(RA,RB)):-K>=0,I>=0,
  shift_var(I,K,A,RA),
  shift_var(I,K,B,RB).
shift_var(I,K,l(A),l(R)):-K>=0,I>=0,K1 is K+1,shift_var(I,K1,A,R).
shift_var(I,K,v(N),v(M)):-K>=0,I>=0,N>=K,M is N+I.
shift_var(I,K,v(N),v(N)):-K>=0,I>=0,N<K.
\end{code}

Normal order evaluation of a lambda term, if it terminates,
leads to a unique normal form, as a consequence
of the Church-Rosser theorem,
elegantly proven in \cite{dbruijn72} using de Bruijn terms.
Termination holds, for instance, in the case of simply typed lambda terms.
Its implementation is well known; 
we will follow here the algorithm described in \cite{sestoftLam}.
We first compute the {\em weak head normal form} using {\tt wh\_nf/2}.

\begin{code}
wh_nf(v(X),v(X)).
wh_nf(l(E),l(E)).
wh_nf(a(X,Y),Z):-wh_nf(X,X1),wh_nf1(X1,Y,Z).
\end{code}
The predicate {\tt wh\_nf1/3} does the case analysis of application terms {\tt a/2}.
The key step is the $\beta$-reduction 
in its second clause, when it detects an ``eliminator'' lambda
expression as its left argument, in which case it performs the
substitution of its binder, with  its right argument.
\begin{code}
wh_nf1(v(X),Y,a(v(X),Y)).
wh_nf1(l(E),Y,Z):-beta(l(E),Y,NewE),wh_nf(NewE,Z).
wh_nf1(a(X1,X2),Y,a(a(X1,X2),Y)).
\end{code}
The predicate {\tt to\_nf} implements normal order reduction.
It follows the same skeleton as {\tt wh\_nf},
which is called in the third clause to perform
reduction to weak head normal form, starting from
the outermost lambda binder.

\begin{code}
to_nf(v(X),v(X)).
to_nf(l(E),l(NE)):-to_nf(E,NE).
to_nf(a(E1,E2),R):-wh_nf(E1,NE),to_nf1(NE,E2,R).
\end{code}
Case analysis of application terms for possible $\beta$-reduction
is performed by {\tt to\_nf1/3}, where the second clause
calls {\tt beta/3} and recurses on its result.
\begin{code}
to_nf1(v(E1),E2,a(v(E1),NE2)):-to_nf(E2,NE2).
to_nf1(l(E),E2,R):-beta(l(E),E2,NewE),to_nf(NewE,R).
to_nf1(a(A,B),E2,a(NE1,NE2)):-to_nf(a(A,B),NE1),to_nf(E2,NE2).
\end{code}

Therefore, the predicate {\tt evalDeBruijn} 
\begin{code}
evalDeBruijn --> to_nf.
\end{code}
provides a
Turing-complete lambda calculus interpreter
working on de Bruijn terms. It is
guaranteed to compute a normal form, if it exists.
The predicate {\tt evalStandard/2} works on standard lambda terms,
that in converts to de Bruijn terms and then back after evaluation.
The predicate {\tt evalCompressed/2} works
in a similar way on compressed de Bruijn terms.
We express them as a composition of functions (first argument in, second out)
using Prolog's DCG notation.
\begin{code}
evalStandard-->l2b,to_nf,b2l.
evalCompressed-->c2b,to_nf,b2c.
\end{code}
\BX
illustrates evaluation of the lambda term
$SKK=$\\$(( ~\lambda x_0. ~\lambda x_1. ~\lambda x_2.((x_0~x_2)~(x_1~x_2))   ~ ~\lambda x_3. ~\lambda x_4.x_3  )~ ~\lambda x_5. ~\lambda x_6.x_5  )$ in compressed de Brijn form, resulting
in the definition of the identity combinator $I=\lambda x_0.x_0$.
\begin{codex}
?- S=a(3,a(0,v(0,2),v(0,0)),a(0,v(0,1),v(0,0))),K=v(2,1),
     evalCompressed(a(0,a(0,S,K),K),R).
S = a(3, a(0, v(0, 2), v(0, 0)), a(0, v(0, 1), v(0, 0))),
K = v(2, 1),
R = v(1, 0).
\end{codex}
\EX

\section{Combinators} \label{combs}

Combinators are closed lambda terms placed exclusively as labels at the leaves of application trees. 
Thus combinator expressions are lambda terms represented
as binary trees having applications as internal nodes 
and  combinators
as leaves.
We will explore here two families of combinator expressions, one well-known (SK-combinators) and another that has been mostly forgotten for a more than a half century  (Rosser's X-combinator).

\subsection{SK-Combinator Trees}\label{ctree}

 A {\em combinator basis} is
a set of combinators in  terms of which any other 
combinators can be expressed.

The most well known basis for combinator calculus 
consists of $ K = ~\lambda x_0. ~\lambda x_1.x_0 $ 
and $S = ~\lambda x_0. ~\lambda x_1. ~\lambda x_2.((x_0~x_2)~(x_1~x_2)) $.
$SK$-combinator expressions can be seen as binary trees with leaves labeled
with symbols $S$ and $K$, having function 
applications as internal nodes.
Together with the primitive operation of application, $K$ and $S$
can be used as a 2-point basis to  define a Turing-complete language.

\subsubsection{Generating combinator trees}\label{xcoSK}
Prolog is an ideal language to define in a few lines
generators for various classes of combinatorial objects.
The predicate {\tt genSK} generates SK-combinator trees
with a limited number of internal nodes.

\begin{code}
genSK(k)-->[].
genSK(s)-->[].
genSK(X*Y)-->down,genSK(X),genSK(Y).
\end{code}

% A025225 - cat n2^(n+1)
% 2,4,16,80,448,2688,16896,109824,732160,4978688,34398208

Note the use of Prolog's definite clause grammar (DCG) notation 
in combination with the predicate {\tt down/2}
that counts downward the number of available internal nodes.

The predicate {\tt genSK/3}
 provides two interfaces: {\tt genSK/2}
that generates trees with exactly $N$
internal nodes and {\tt genSKs/2} that
generates trees with $N$ or less internal nodes.

\begin{code}  
genSK(N,X):-genSK(X,N,0).
\end{code} 
\begin{code} 
genSKs(N,X):-genSK(X,N,_).
\end{code}

\BX
SK-combinator trees with up to 1 internal nodes (and up to 2 leaves).
\begin{codex}
?- genSKs(1,T).
T = k ;
T = s ;
T = k*k ;
T = k*s ;
T = s*k ;
T = s*s .
\end{codex}
\EX

The predicate {\tt csize} defines the size
of an SK-combinator tree in terms of the
number of its internal nodes.

\begin{code}  
csize(k,0).
csize(s,0).
csize((X*Y),S):-csize(X,A),csize(Y,B),S is 1+A+B.
\end{code}

\subsubsection{A Turing-complete evaluator for SK-combinator trees}\label{evalSK}

An  evaluator for SK-combinator trees recurses over
application nodes, evaluates their subtrees and then
applies the left one to the right one.
\begin{code}
evalSK(k,k).
evalSK(s,s). 
evalSK(F*G,R):-evalSK(F,F1),evalSK(G,G1),appSK(F1,G1,R).
\end{code}
In the predicate {\tt app}, handling
the application of the first argument to
the second, we describe
in the first two clauses the actions
corresponding to {\tt K} and {\tt S}.
The final clause returns the unevaluated
application as its third argument.
\begin{code}

appSK((s*X)*Y,Z,R):-!,  % S
  appSK(X,Z,R1),
  appSK(Y,Z,R2),
  appSK(R1,R2,R).
appSK(k*X,_Y,R):-!,R=X. % K  
appSK(F,G,F*G).
\end{code}
\BX
Applications of SKK and SKS, both
implementing the identity combinator $I=\lambda x.x$.
\begin{codex}
?- appSK(s*k*k,s,R).
R = s.

?- appSK(s*k*s,k,R).
R = k.
\end{codex}
\EX

\subsubsection{De Bruijn equivalents of SK-combinator expressions}

De Bruijn indices \cite{dbruijn72} 
provide a {\em name-free} representation of lambda
terms. All terms closed that can be 
transformed by a renaming of variables ($\alpha$-conversion)
will share a unique representation.
Variables following lambda abstractions are omitted and their occurrences are marked
with positive integers {\em counting the number of lambdas until the one binding them} 
is found on the  way up to the root of the  term. We represent them using
the constructor {\tt a/2} for application, {\tt l/1} for lambda abstractions
(that we will call shortly {\em binders})
and {\tt v/1} for marking the integers corresponding to the 
de Bruijn indices.
%\begin{comment}

For instance,
$ ~\lambda x_0.( ~\lambda x_1.(x_0~(x_1~x_1)) ~ ~\lambda x_2.(x_0~(x_2~x_2)) )$ 
becomes {{\tt l(a(l(a(v(1), a(v(0),v(0)))), l(a(v(1), a(v(0), v(0))))))}}, 
corresponding to
the fact that {\tt v(1)} is bound by the 
outermost lambda (two steps away, counting 
from {\tt 0}) and the occurrences of {\tt v(0)} 
are bound each by the closest lambda, represented
by the constructor {\tt l/1}.
%\end{comment}
The predicates {\tt kB} and {\tt sB} define the $K$ and $S$ combinators
in de bruijn form.
% ok
\begin{code}
kB(l(l(v(1)))).

sB(l(l(l(a(a(v(2),v(0)),a(v(1),v(0))))))).
\end{code}

The predicate {\tt sk2b} transforms an SK-combinator tree
in its lambda expression form, in de Bruijn notation,
by replacing leaves with their de Bruijn form of the S and K combinators
and replacing recursively 
the constructor ``{\tt *}''/2 with the application nodes ``{\tt a}''/2.
\begin{code}
sk2b(s,S):-sB(S).
sk2b(k,K):-kB(K).
sk2b((X*Y),a(A,B)):-sk2b(X,A),sk2b(Y,B).
\end{code}
%\begin{comment}
\BX
Expansion of some small SK-combinator trees to de Bruijn forms.
\begin{codex}
?- sk2b(k*k,R).
R = a(l(l(v(1))), l(l(v(1)))).

?- sk2b(k*s,R).
R = a(l(l(v(1))),l(l(l(a(a(v(2),v(0)),a(v(1),v(0))))))).
\end{codex}
\EX
Clearly
their de Bruijn
equivalents are significantly larger
than the corresponding combinator trees,
but it is easy to see that this is only by
a constant factor, i.e. at most 
the size of the S combinator.
%\end{comment}

A lambda term is called {\em closed} if it
contains no free variables. 
\begin{prop}
The lambda terms equivalent to SK-combinators computed
by {\tt sk2b} are closed.
\end{prop}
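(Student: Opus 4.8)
The plan is to prove a more general statement by structural induction, since the claim as stated (closedness of the output of \texttt{sk2b}) does not directly carry through the recursion on application nodes — the subterms \texttt{A} and \texttt{B} produced by \texttt{sk2b(X,A)} and \texttt{sk2b(Y,B)} are themselves closed, but to combine them into \texttt{a(A,B)} one only needs that closedness is preserved under application, which it is. So in fact the straightforward induction does close. First I would recall the definition of \texttt{isClosedB} from the earlier subsection: a de Bruijn term \texttt{T} is closed iff \texttt{isClosed1B(T,0)} holds, where \texttt{isClosed1B(v(N),D)} requires \texttt{N<D}, \texttt{isClosed1B(l(A),D)} recurses with \texttt{D+1}, and \texttt{isClosed1B(a(X,Y),D)} recurses into both subterms with the same \texttt{D}. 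The key observation I would isolate as a lemma is monotonicity: if \texttt{isClosed1B(T,D)} holds then \texttt{isClosed1B(T,D')} holds for every \texttt{D'}$\ge$\texttt{D}, proved by a trivial induction on \texttt{T}.

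Next I would carry out the main induction on the structure of the SK-combinator tree argument to \texttt{sk2b}. Base cases: \texttt{sk2b(k,K)} with \texttt{kB(K)}, i.e. \texttt{K = l(l(v(1)))}; checking \texttt{isClosed1B(l(l(v(1))),0)} unfolds to \texttt{isClosed1B(v(1),2)}, which holds since \texttt{1<2}. Similarly \texttt{sk2b(s,S)} with \texttt{S = l(l(l(a(a(v(2),v(0)),a(v(1),v(0))))))}; unfolding three binders brings \texttt{D} to \texttt{3}, and each index \texttt{v(2)}, \texttt{v(1)}, \texttt{v(0)} satisfies \texttt{N<3}. Inductive case: for \texttt{sk2b((X*Y),a(A,B))} we have \texttt{sk2b(X,A)} and \texttt{sk2b(Y,B)}, so by the induction hypothesis \texttt{A} and \texttt{B} are closed, i.e. \texttt{isClosed1B(A,0)} and \texttt{isClosed1B(B,0)}; then \texttt{isClosed1B(a(A,B),0)} follows immediately from the application clause of \texttt{isClosed1B}. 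Hence \texttt{isClosedB} holds of the output in all cases.

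The only mildly subtle point — and the one I would flag as the ``main obstacle,'' though it is more a matter of care than of difficulty — is that closedness is a global property (no free variables at \emph{any} depth), so one must phrase the induction invariant correctly: it is not enough to know that the top-level call \texttt{isClosed1B(-,0)} succeeds for the pieces; one should really prove the stronger statement that \emph{for every} depth \texttt{D}$\ge 0$, \texttt{isClosed1B(sk2b-output, D)} holds, which then specializes to \texttt{D=0}. With the monotonicity lemma this stronger statement is no harder, and it makes the application case (where the two subterms are plugged in without changing the depth counter) go through transparently. Everything reduces to the two finite syntactic checks on \texttt{kB} and \texttt{sB} plus one line for the inductive step, so no real calculation is needed beyond unfolding the definitions.
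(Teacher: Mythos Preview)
Your proof is correct and follows essentially the same structural-induction approach as the paper: the base cases $K$ and $S$ are closed by inspection, and application of closed terms yields a closed term. The paper states this in one sentence without unfolding the definitions. Your monotonicity lemma and the ``stronger'' invariant for all $D\ge 0$ are harmless but unnecessary here: since the application clause of \texttt{isClosed1B} recurses with the \emph{same} depth $D$, knowing \texttt{isClosed1B(A,0)} and \texttt{isClosed1B(B,0)} already gives \texttt{isClosed1B(a(A,B),0)} directly, so the plain induction at $D=0$ closes without any strengthening.
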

\begin{proof}
As the lambda term equivalent of the
SK-combinator term is clearly a closed expression,
the proposition follows from the definition of {\tt sk2b}, as it
builds terms that apply
closed terms to closed terms.
\end{proof}
This well-known property holds, in fact, for all combinator expressions.
It follows that combinator expressions have a stronger, {\em hereditary}
closedness property: every subtree
of a combinator tree also represents a closed expression.

Besides being closed, lambda terms interesting for functional
languages and proof assistants are also {\em well-typed}.
While the K and S combinators are known to be well-typed,
we would like to see how this property extends to
SK-combinator trees. In particular, we would like
to have an idea on the asymptotic density of
well-typed SK-combinator tree expressions.
We will take advantage of Prolog's sound unification
algorithm to define a type inferrer directly on SK-terms.

\subsection{Inferring simple types for SK-combinator trees}\label{sktypes}
A natural way to define types for combinator expressions is
to borrow them from their lambda calculus equivalents.
This makes sense, as they represent the same function i.e., they
are extensionally the same.
However, this is equivalent to just borrowing
the well-known types of the {\tt S} and {\tt K} combinators
and then recurse over application nodes.

We will next describe an algorithm for inferring types directly on
SK-combinator trees.
  
\subsubsection{A type inference algorithm for SK-terms}\label{dbtypes}

{\em Simple types} will be defined here also as binary trees 
built with the constructor ``\verb~>/2~'' with empty leaves,
representing the unique primitive type ``{\tt x}''.
For brevity, we will mean simple types when mentioning
types, from now on.
Types can be seen as
as a ``binary tree approximation'' of lambda terms, centered
around ensuring their safe and terminating
evaluation (called {\em strong normalization}), 
as the following
well known property states \cite{bar93}.

\begin{prop}
Lambda terms (and combinator expressions, in particular)
that have simple types are strongly normalizing.
\end{prop}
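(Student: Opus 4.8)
The plan is to prove this by Tait's method of reducibility (logical relations), first for pure lambda terms in de Bruijn form and then transferring the result to SK-combinator trees via the translation {\tt sk2b}. For each simple type $T$ (a binary tree built from $>$ with leaf $x$) I would define a set $\mathrm{Red}_T$ of terms by induction on $T$: for the base type $x$, $\mathrm{Red}_x$ is the set of all strongly normalizing de Bruijn terms; for an arrow type $A>B$, a term $t$ lies in $\mathrm{Red}_{A>B}$ iff $a(t,u)\in\mathrm{Red}_B$ for every $u\in\mathrm{Red}_A$. The goal is to show every well-typed term is reducible, since reducible terms will be strongly normalizing by construction.

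First I would establish the three standard closure properties, by simultaneous induction on $T$: (CR1) every $t\in\mathrm{Red}_T$ is strongly normalizing; (CR2) $\mathrm{Red}_T$ is closed under $\beta$-reduction; and (CR3) if $t$ is \emph{neutral} --- i.e.\ not of the form $l(\cdot)$ --- and every one-step reduct of $t$ lies in $\mathrm{Red}_T$, then $t\in\mathrm{Red}_T$. Instantiating (CR3) with the empty set of reducts shows that variables $v(I)$, and more generally normal neutral terms, are reducible. The delicate case is the arrow type in (CR3): given $u\in\mathrm{Red}_A$, one shows $a(t,u)\in\mathrm{Red}_B$ by a side induction on the (finite, by CR1) reduction length of $u$, analysing where a $\beta$-step in $a(t,u)$ can occur --- inside $t$ or inside $u$ --- there being no head redex because $t$ is neutral.

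Next I would prove the abstraction lemma: if $t[0:=u]\in\mathrm{Red}_B$ for every $u\in\mathrm{Red}_A$, then $l(t)\in\mathrm{Red}_{A>B}$. This follows from (CR2) and (CR3) by induction on the sum of the reduction lengths of $t$ and $u$, tracking the three possible locations of a $\beta$-step in $a(l(t),u)$: a step inside $t$, a step inside $u$, or the head contraction yielding $t[0:=u]$, which is reducible by hypothesis. Then the main lemma is proved by induction on the typing derivation with the strengthened statement that if $t$ has type $T$ in a context $A_1,\dots,A_n$ and $u_i\in\mathrm{Red}_{A_i}$ for all $i$, then the simultaneous substitution of the $u_i$ into $t$ lies in $\mathrm{Red}_T$; the variable case is immediate, the application case is the definition of $\mathrm{Red}$ at arrow types, and the abstraction case invokes the abstraction lemma. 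Taking each $u_i$ to be the corresponding variable (reducible by CR3) gives that every well-typed $t$ is in $\mathrm{Red}_T$, hence strongly normalizing by (CR1). For the combinator corollary: by the preceding proposition {\tt sk2b} sends a well-typed SK-tree to a closed, well-typed lambda term (its type being that obtained by borrowing the standard types of $S$ and $K$ and recursing on applications), and each {\tt appSK} contraction --- a $K$- or $S$-step --- is simulated by a non-empty sequence of $\beta$-steps on the translation, so an infinite {\tt evalSK} reduction would induce an infinite $\beta$-reduction, contradicting strong normalization of the translated term.

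The main obstacle is getting the reducibility predicate and its closure properties exactly right, in particular the interlocking of (CR2), (CR3) and the abstraction lemma, where one must carefully enumerate the admissible redex positions in an application and feed the strong-normalization measures supplied by (CR1) into the nested inductions; this is the technical heart of the argument. By contrast, the combinator transfer step is routine once one checks that a $K$- or $S$-contraction never maps to an empty $\beta$-sequence (indeed each uses at least two $\beta$-steps), so that strong normalization of the lambda-term image pulls back to termination of {\tt evalSK}.
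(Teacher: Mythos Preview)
Your argument is the standard Tait reducibility proof and is correctly sketched: the three candidate conditions, the abstraction lemma, the substitution (adequacy) lemma, and the transfer to SK-trees via {\tt sk2b} are all set up in the right way, and your check that each $K$- or $S$-contraction is simulated by a nonempty $\beta$-sequence is exactly what is needed to pull termination back along the translation.

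However, the paper does not prove this proposition at all. It is stated as a ``well known property'' with a citation to Barendregt's survey \cite{bar93}, and the text immediately moves on to the type-inference code. So you have supplied a genuine proof where the paper offers only a reference. What this buys the paper is economy --- the result is classical and not the paper's contribution --- whereas your version makes the argument self-contained and, in the combinator transfer step, ties it explicitly to the paper's own {\tt sk2b} and {\tt appSK} predicates, which is a nice touch the paper does not spell out.
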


When modeling lambda terms in a functional or procedural language,
inferring types requires implementing unification with occurs-check, 
as shown for instance in the appendix of \cite{grygielGen}. On the other hand 
this operation is readily available in today's Prolog systems.

\begin{code}
skTypeOf(k,(A>(_B>A))).  
skTypeOf(s,(((A>(B>C))> ((A>B)>(A>C) )))).
skTypeOf(A*B,Y):-
  skTypeOf(A,T),
  skTypeOf(B,X),
  unify_with_occurs_check(T,(X>Y)).
\end{code}

At this point, most general types are inferred by {\tt skTypeOf}
 as fresh variables,  similar to multi-parameter polymorphic types 
 in functional languages, if one interprets
 logic variables as universally quantified.

\BX
Type inferred for some SK-combinator expressions. Note the failure
to infer a type for $SSI=SS(SKK)$.
\begin{codex}
?- skTypeOf((((k*k)*k)*k)*k,T).
T = (A>(B>A)).
?- skTypeOf((k*s)*k,T).
T = ((A> (B>C))> ((A>B)> (A>C))).
?- skTypeOf((s*s)*((s*k)*k),T).
false.
\end{codex}
\EX

As we are only interested 
in simple types with only one base type, we will
bind uniformly the leaves of our type tree to the constant
``{\tt x}'' representing our only primitive type, by
using the predicate {\tt bindWithBaseType/1}.

\begin{code}
simpleTypeOf(A,T):-
  skTypeOf(A,T),
  bindWithBaseType(T).

% bind all variables with type 'x'
bindWithBaseType(x):-!. 
bindWithBaseType((A>B)):-
  bindWithBaseType(A),
  bindWithBaseType(B).
\end{code}

\BX
Simple type inferred for combinators $KSK$, $B$ and $C$.
\begin{codex}
?- simpleTypeOf(k*s*k,T).
T = ((x> (x>x))> ((x>x)> (x>x))).
?- B=s*(k*s)*k,C=s*(B*B*s)*(k*k),simpleTypeOf(B,TB),simpleTypeOf(C,TC).
B = s* (k*s)*k,
C = s* (s* (k*s)*k* (s* (k*s)*k)*s)* (k*k),
TB = ((x>x)> ((x>x)> (x>x))),
TC = ((x> (x>x))> (x> (x>x))).
\end{codex}
\EX
It is also useful to define the predicate {\tt typableSK} that
succeeds when a type can be inferred.
\begin{code}
typableSK(X):-skTypeOf(X,_).
\end{code}

\subsection{Rosser's X-combinator}\label{xctree}

We will know explore expressions built with a less well-known
combinator, that provides a 1-point basis for combinator calculi.

It is shown in \cite{OnePoint} %mayerInf 
that a countable number of 
(somewhat artificially constructed) 1-point bases
exist for combinator calculi, but
we will focus here on {\em Rosser's $X$-combinator}, one of the
simplest 1-point bases that is naturally connected
through mutual definitions to the combinators $K$ and $S$.

\subsubsection{The X-combinator in terms of S and K and vice-versa}\label{xco}

A derivation of Rosser's X-combinator
is described in \cite{fokker92}.

Defined as  $X = \lambda f.fKSK$, this combinator 
has the nice property of expressing
both $K$ and $S$ in a symmetric way.
\begin{equation}\label{defK}
 K = (X X) X
\end{equation} 
\begin{equation}\label{defS}
 S=X (X X)
\end{equation}
Moreover, as shown in \cite{fokker92}
the following holds.
\begin{equation}\label{kk}
 K K = X X = ~\lambda x_0. ~\lambda x_1. ~\lambda x_2.x_1
\end{equation} 
As a result, X-combinator expressions are within a (small, see Prop. \ref{const15})
constant factor of their equivalent $SK$-expressions.

Denoting ``{\tt x}'' the empty leaf corresponding to the X-combinator
and ``{\tt >}'' the (non-associative, infix)
constructor for the binary tree's internal nodes,
the predicates {\tt sT}, {\tt kT} and {\tt xxT} define the
Prolog expressions for the $S$, $K$ and $K K = X X $ combinators,
respectively.
\begin{code}
sT(x>(x>x)).
kT((x>x)>x).
xxT(x>x).
\end{code}
This symmetry is part of the motivation for choosing the X-combinator
basis, rather than any of the more well-known ones (see \cite{hindley2008lambda}).

\paragraph{Generating the combinator trees}
As X-combinator trees are just plain binary trees, with leaves denoted {\tt x} and
internal nodes denoted \verb~>~,
we can reuse the predicate {\tt genTree} defined in \ref{gentree}
to generate X-combinator trees
with a given number of internal nodes.

\subsubsection{An evaluator for the Turing-complete language of X-combinator trees}\label{evalT}
We can derive an evaluator for X-combinator trees from a well-known
evaluator for SK-combinator trees.
\begin{code}
evalX((F>G),R):-!,evalX(F,F1),evalX(G,G1),appX(F1,G1,R).
evalX(X,X).
\end{code}
In the predicate {\tt appX/3} handling
the application of the first argument to
the second, we describe
in the first two clauses the actions
corresponding to {\tt K} and {\tt S}.
The final clause returns the unevaluated
application as its third argument.
\begin{code}
appX((((x>x)>x)>X),_Y,R):-!,R=X. % K
appX((((x>(x>x))>X)>Y),Z,R):-!,  % S
  appX(X,Z,R1),
  appX(Y,Z,R2),
  appX(R1,R2,R).
%app((((x>x)>_X)>Y),_Z,R):-!,R=Y. 
%app((x>x)>x,(x>x)>x,R):-!,app(x,x,R).
appX(F,G,(F>G)).
\end{code}
Note also the commented out clauses, that
can shortcut some evaluation steps,
using the identity (\ref{kk}).
\BX
Evaluation of SKK and SKX, equivalent
implementations of the identity combinator $I=\lambda x.x$.
\begin{codex}
?- SKK=(((x>(x>x))>((x>x)>x))>((x>x)>x)),evalX(SKK>x,R).
SKK = (((x>(x>x))>((x>x)>x))>((x>x)>x)),
R = x.

?- SKX=(((x> (x>x))> ((x>x)>x))>x),evalX(SKX>x,R).         
SKX = (((x> (x>x))> ((x>x)>x))>x),
R = x.
\end{codex}
\EX

\subsubsection{De Bruijn equivalents of X-combinator expressions}

De Bruijn indices \cite{dbruijn72} 
provide a {\em name-free} representation of lambda
terms. All terms that can be 
transformed by a renaming of variables ($\alpha$-conversion)
will share a unique representation.
Variables following lambda abstractions are omitted and their occurrences are marked
with positive integers {\em counting the number of lambdas until the one binding them} 
is found on the  way up to the root of the  term. We represent them using
the constructor {\tt a/2} for application, {\tt l/1} for lambda abstractions
(that we will call shortly {\em binders})
and {\tt v/1} for marking the integers corresponding to the 
de Bruijn indices.
For instance,
$ ~\lambda x_0.( ~\lambda x_1.(x_0~(x_1~x_1)) ~ ~\lambda x_2.(x_0~(x_2~x_2)) )$ 
becomes {\small {\tt l(a(l(a(v(1), a(v(0),v(0)))), l(a(v(1),a(v(0),v(0))))))}}, corresponding to
the fact that {\tt v(1)} is bound by the outermost lambda (two steps away, counting from {\tt 0}) and the occurrences of {\tt v(0)} are bound each by the closest lambda, represented
by the constructor {\tt l/1}.

We obtain the X-combinator's definition in terms of $S$ and $K$, in de Bruijn form, by
using the equation $X f = f K S K$ derived from its lambda expression $\lambda f.fKSK$.
The predicate {\tt xB} implements it.
\begin{code}
xB(X):-F=v(0),kB(K),sB(S),X=l(a(a(a(F,K),S),K)).
\end{code}

The predicate {\tt t2b} transforms an X-combinator tree
in its lambda expression form, in de Bruijn notation,
by replacing leaves with the de Bruijn form of the X-combinator
and replacing recursively 
the constructor ``{\tt >}''/2 with the application nodes ``{\tt a}''/2.
\begin{code}
t2b(x,X):-xB(X).
t2b((X>Y),a(A,B)):-t2b(X,A),t2b(Y,B).
\end{code}

\BX
Expansion of small X-combinator trees to de Bruijn forms.
\begin{codex}
?- t2b(x,X).  
X = l(a(a(a(v(0), l(l(v(1)))), l(l(l(a(a(v(2), v(0)), 
                    a(v(1), v(0))))))), l(l(v(1))))).
      
?- t2b(x>x,XX).
XX=a(
     l(a(a(a(v(0),l(l(v(1)))),l(l(l(a(a(v(2),v(0)),
                    a(v(1),v(0))))))),l(l(v(1))))),
     l(a(a(a(v(0),l(l(v(1)))),l(l(l(a(a(v(2),v(0)),
                   a(v(1),v(0))))))),l(l(v(1)))))
   ).     
\end{codex}
\EX
Clearly
their de Bruijn
equivalents are significantly larger
than the corresponding combinator trees,
but we will show that this is only by
a constant factor.
We will also see that often normalization can bring down
significantly the size of such expressions, given
that nodes like {\tt x>x} are equivalent to 
smaller lambda expressions like
$~\lambda x_0. ~\lambda x_1. ~\lambda x_2.x_1$.

\begin{prop}\label{const15}
The size of the lambda term equivalent to an X-combinator
tree with N internal nodes is 15N+14.
\end{prop}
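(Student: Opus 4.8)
The plan is to prove the formula by structural induction on the X-combinator tree, using the definition of {\tt t2b} together with the size measure {\tt dbTermSize}, and keeping in mind that {\tt dbTermSize} charges $1$ for each {\tt l/1} binder and each {\tt a/2} application node, and $0$ for each variable leaf (so a block of $k$ leading lambdas contributes exactly $k$).

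First I would settle the base case, a single leaf {\tt x}, which {\tt t2b} maps to the de Bruijn term $X$ produced by {\tt xB}, namely {\tt l(a(a(a(v(0),K),S),K))}, where $K$ and $S$ are the de Bruijn forms of the $K$ and $S$ combinators supplied by {\tt kB} and {\tt sB}. A direct computation with {\tt dbTermSize} gives $|K| = 2$ (two binders over a variable) and $|S| = 6$ (three binders over the application tree {\tt a(a(v(2),v(0)),a(v(1),v(0)))}, which contributes three application nodes). Unfolding the three nested applications in the body of $X$ then gives $1 + (1 + (1 + 0 + |K|) + |S|) + |K| = 13$, and the outer binder adds one more, so $|X| = 14 = 15\cdot 0 + 14$.

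For the inductive step, a tree with $N \ge 1$ internal nodes has the form $(T_1 > T_2)$ where $T_i$ has $N_i$ internal nodes and $N = 1 + N_1 + N_2$. By the second clause of {\tt t2b}, its de Bruijn equivalent is $a(A,B)$, with $A$ and $B$ the equivalents of $T_1$ and $T_2$. The {\tt a/2} clause of {\tt dbTermSize} gives $|a(A,B)| = 1 + |A| + |B|$, and the induction hypothesis gives $|A| = 15N_1 + 14$ and $|B| = 15N_2 + 14$; hence $|a(A,B)| = 1 + (15N_1 + 14) + (15N_2 + 14) = 15(N_1 + N_2) + 29 = 15(N-1) + 29 = 15N + 14$, which closes the induction.

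There is essentially no deep step here: once the constant is pinned down the argument is a routine one-line induction. The only place demanding care is the base-case bookkeeping — correctly unfolding the three layers of application in {\tt xB} and the nested binders inside {\tt kB} and {\tt sB} so that the constant comes out as exactly $14$ and not off by one — which is why I would spell out $|K|$ and $|S|$ explicitly before assembling $|X|$.
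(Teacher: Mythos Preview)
Your proof is correct. The paper takes a slightly different and more compact route: instead of structural induction, it argues directly by counting. An X-combinator tree with $N$ internal nodes has $N+1$ leaves; under {\tt t2b}, the $N$ internal \verb|>| nodes become $N$ application nodes, and each of the $N+1$ leaves is replaced by the de Bruijn form of the X-combinator, which contributes $14$ internal nodes. Hence the total is $N + 14(N+1) = 15N + 14$.

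Your induction unfolds the same bookkeeping recursively; the only extra work you do is the explicit verification that $|X| = 14$, which the paper simply asserts. Either argument is fine: the paper's version is shorter, while yours has the virtue of actually checking the base constant rather than taking it on faith.
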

\begin{proof}
Note that the an X-combinator tree with N internal
nodes has N+1 leaves. The de Bruijn tree
built by the predicate {\tt t2b} has also N application nodes,
and is obtained by having leaves replaced in the X-combinator
term, with terms  bringing 14 internal nodes each, 
corresponding to {\tt x}. 
Therefore it has a total of $N+14(N+1)=15N+14$ internal nodes.
\end{proof}

Note also that the lambda terms equivalent to X-combinators computed
by {\tt t2b} are closed, given that the lambda term equivalent to the
X-combinator is a closed expression and {\tt t2b}
builds terms that apply only closed terms to closed terms.

\subsection{Comparing the two evaluators} %ok

While, as shown in subsection \ref{evalT},  X-combinator
trees can be evaluated directly, it makes sense
to investigate if more compact 
equivalent normal forms can be obtained for them
via their mapping to lambda terms.

One can now compare the evaluation
performed on X-combinator trees
to that performed on their
corresponding lambda expressions.
The predicate {\tt evalAsT}
first evaluates and then
converts while the predicate
{\tt evalAsB} first converts to
a de Bruijn terms and then evaluates
it, with opportunities 
for additional reductions.
\begin{code}
evalAsT --> evalX,t2b.
evalAsB --> t2b,evalDeBruijn.
\end{code}
We express these two predicates
as a {\em composition of functions} (first argument in, second out)
using Prolog's DCG notation.
\BX
Additional reductions obtained from a term of size 29 to a term of size 3 
on the de Bruijn terms associated to an X-combinator expression.
\begin{codex}
?- evalAsT(x>x,R),dbTermSize(R,Size),write(Size),nl,fail.
29

?- evalAsB(x>x,R),dbTermSize(R,Size).
R = l(l(l(v(1)))),
Size = 3 .
\end{codex}
\EX
Note however, as predicted by
the Church-Rosser theorem \cite{dbruijn72,bar84},
applying normalization via {\tt evalDeBruijn} to 
the result of {\tt evalAsT} reaches the same final
normal form. This property is  called {\em confluence}.

\BX
Confluence of evaluation as X-combinator tree and as lambda term.
\begin{codex}
?- evalAsT(x>x,R),evalDeBruijn(R,FinalR).
R = a(l(a(a(a(v(0),...,l(l(v(1)))))),
FinalR = l(l(l(v(1)))) .
\end{codex}
\EX

\subsection{Inferring simple types for X-combinator trees}\label{xtypes}
A natural way to define types for combinator expressions is
to borrow them from their lambda calculus equivalents.
This makes sense, as they represent the same function i.e., they
are extensionally the same.

We will start with an algorithm inferring types  on
the de Bruijn equivalents of X-combinator trees.

\subsubsection{Type trees as combinator trees}\label{tlam}

Besides being closed, lambda terms interesting for functional
languages and proof assistants are also well-typed.
While the K and S combinators are known to be well-typed,
we would like to see how this property extends to
X-combinator trees. In particular, we would like
to have an idea on the asymptotic density of
well-typed X-combinator tree expressions, that
we will explore in subsection \ref{xdens}.

We can define the type of a combinator expression as
the type of its  lambda expression translation.
The predicate {\tt xtype} defines a function from
binary trees to binary trees mapping an X-combinator expression
to its type, as inferred on its equivalent 
lambda term in de Bruijn notation.

% type inferences on tree terms

\begin{code}
xtype(X,T):-t2b(X,B),boundTypeOf(B,T).
\end{code}
Observe that this only makes sense if the
combinator basis is well-typed. Fortunately
this is the case of the X-combinator
$~\lambda x_0.(((x_0~ \\~\lambda x_1. ~\lambda x_2.x_1  )~ ~\lambda x_3. ~\lambda x_4. ~\lambda x_5.((x_3~x_5)~(x_4~x_5))   )~ ~\lambda x_6. ~\lambda x7.x_6  )$.

\BX
The X-combinator is well-typed.
\begin{codex}
?- xtype(x,T).
T = (((x> (x>x))> (((x> (x>x))> ((x>x)> (x>x)))> ((x> (x>x))>x)))>x).
\end{codex}
\EX

\subsubsection{Inferring types of X-combinator trees directly}\label{dirinf}
The predicate {\tt xt}, that can be seen as a ``partially evaluated''
version of {\tt xtype}, infers the type of the combinators directly.
\begin{code}
xt(X,T):-poly_xt(X,T),bindTypeB(T).

xT(T):-t2b(x,B),boundTypeOf(B,T,[]).

poly_xt(x,T):-xT(T).
poly_xt(A>B,Y):-poly_xt(A,T),poly_xt(B,X),
  unify_with_occurs_check(T,(X>Y)).
\end{code}
It proceeds by first borrowing the type of {\tt x}
from its de Bruijn equivalent. Then, after
calling {\tt poly\_xt}
to infer polymorphic types,
it binds them to our simple-type representation
by calling {\tt bindTypeB}. 

\BX
Simple type inferred directly on X-combinator trees. 
\begin{codex}
?- skkT(X),xt(X,DirectT),xtype(X,BorrowedT).
X = (((x> (x>x))> ((x>x)>x))> ((x>x)>x)),
DirectT = BorrowedT, BorrowedT = (x>x).
\end{codex}
\EX

\section{Size-proportionate bijective encodings of lambda terms and combinators} \label{sprop}

We will describe here two encodings. The first one, in subsections
\ref{cantor} and \ref{ranks} does it ``the hard way''
by working with bitstring-represented natural number codes.
The second one, in subsections
\ref{ntree} and \ref{goedel}
 does it ``the easy-way'', by defining an alternate, tree-based
natural number representation to which fairly strait-forward bijections exist
from combinator expressions, lambda terms and types.

\subsection{An encoding based on Cantor's $\N^k$ to $\N$ bijection} \label{cantor}

We can see our compressed de Bruijn terms 
as binary trees decorated with integer labels.
The underlying binary trees provide a skeleton
that  describes the
applicative structure of the
terms.

\subsubsection{The Catalan family of combinatorial objects} 

Binary trees are among the most well-known members of the 
Catalan family of combinatorial objects \cite{StanleyEC}, that
has at least 58 structurally distinct members,
covering several data structures, geometric objects and
formal languages.

\subsubsection{Size-proportionate encodings} 

In the presence of a bijection between two, usually infinite sets of data objects,
it is possible that representation sizes on one side or the other are exponentially larger that on the other. Well-known encodings like Ackermann's bijection for hereditarily finite sets to natural numbers, defined as $f(\{\})=0,f(x)$ =  $\sum_{a \in
x}2^{f(a)}$, fall in this category.

We will say that a bijection is {\em size-proportionate} if the representation sizes
for corresponding terms on its two sides are ``close enough'' up to a constant factor multiplied with at most the logarithm of any of the sizes.
\begin{df}
Given a bijection between sets of terms of two datatypes denoted $M$ and $N$,
$f : M \to N$, let $m(x)$ be the representation size
of a term $x \in M$ and $n(y)$ be the representation size
of $y \in N$. Then $f$ is called {\em size-proportionate} if 
$|m(x) - n(y)| \in O(log(max(m(x),n(y))))$.
\end{df}
 Informally we also assume that the constants involved are small enough such that the printed representation of two data objects connected by the bijections is about the same.

\subsubsection{The language of balanced parentheses}

Binary trees are in a well-known size-proportionate bijection with 
the language of  balanced parentheses \cite{StanleyEC},
from which we will borrow an efficient ranking/unranking bijection. 
The reversible predicate {\tt catpar/2} transforms
between binary trees and lists of balanced parentheses,
with {\tt 0} denoting the open parentheses and {\tt 1} 
denoting the closing one.
%%%
\begin{code} 
catpar(T,Ps):-catpar(T,0,1,Ps,[]).
catpar(X,L,R) --> [L],catpars(X,L,R).

catpars(x,_,R) --> [R].
catpars((X>Xs),L,R)-->catpar(X,L,R),catpars(Xs,L,R). 
\end{code}

\BX
illustrates the work of the reversible predicate {\tt catpar/2}.
\begin{codex}
?- catpar(((x>x)>(x>x)),Ps),catpar(T,Ps).   
Ps = [0, 0, 0, 1, 1, 0, 1, 1], T = ((x>x)> (x>x)) .
\end{codex}
\EX
%%%
Note the extra opening/closing parentheses, compared to the
usual definition of Dyck words \cite{StanleyEC}, that make the
sequence self-delimiting.

\subsubsection{A bijection from the language of balanced parenthesis lists to $\N$}

This algorithm follows closely the procedural implementation
described in \cite{combi99}. 

The code of the helper predicates called by {\tt rankCatalan} and {\tt unrankCatalan}
is provided in  \url{http://www.cse.unt.edu/~tarau/research/2015/dbr.pro}.
The details of the algorithms for computing {\tt localRank} and {\tt localunRank}
are described at 
\url{http://www.cse.unt.edu/~tarau/research/2015/dbrApp.pdf}.

The predicate {\tt rankCatalan} uses the Catalan numbers computed by
{\tt cat} in {\tt rankLoop} to shift the ranking over the ranks
of smaller sequences, after calling {\tt localRank}.
\begin{code}
rankCatalan(Xs,R):-
  length(Xs,XL),XL>=2, 
  L is XL-2, I is L // 2,
  localRank(I,Xs,N),
  S is 0, PI is I-1,
  rankLoop(PI,S,NewS),
  R is NewS+N.
\end{code}

The predicate {\tt unrankCatalan} uses the Catalan numbers computed by
{\tt cat} in {\tt unrankLoop} to shift over 
smaller sequences, before calling {\tt localUnrank}.
\begin{code} 
unrankCatalan(R,Xs):- 
  S is 0, I is 0,
  unrankLoop(R,S,I,NewS,NewI),
  LR is R-NewS, 
  L is 2*NewI+1,
  length(As,L),
  localUnrank(NewI,LR,As),
  As=[_|Bs], 
  append([0|Bs],[1],Xs).
\end{code}
The following example illustrates the ranking
and unranking algorithms:
\begin{codex}
?- unrankCatalan(2015,Ps),rankCatalan(Ps,Rank).  
Ps = [0,0,1,0,1,0,1,0,0,0,0,1,0,1,1,1,1,1],Rank = 2015
\end{codex}

\subsubsection{Ranking and unranking simple types}
After putting together the bijections between binary trees and balanced parentheses
with the ranking/unranking of the later we obtain the 
size-proportionate ranking/unranking algorithms
for simple types.
\begin{code} 
rankType(T,Code):-
  catpar(T,Ps),
  rankCatalan(Ps,Code).
  
unrankType(Code,Term):-
  unrankCatalan(Code,Ps),
  catpar(Term,Ps).
\end{code}

\BX
illustrates the ranking and unranking of simple types.
\begin{codex}
?- I=100, unrankType(I,T),rankType(T,R).
I = R, R = 100,
T = (((x>x)> ((x> (x>x))>x))>x) .
\end{codex}
\EX

%%%
As there are $O({{4^n} \over {n^{3 \over 2}}})$) binary trees of size n  corresponding to $2^n$ natural numbers of bitsize up to $n$ and
our ranking algorithm visits them in lexicographic order,
it follows that:
\begin{prop}\label{treeSP}
The bijection between types and their ranks is size-proportionate.
\end{prop}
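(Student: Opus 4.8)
The plan is to compare the two quantities appearing in the definition of size-proportionality — the representation size $m(x)$ of a type $x$ and the bit-length $n(r)$ of its rank $r=\mathtt{rankType}(x)$ — by expressing both as functions of the number $n$ of internal nodes of $x$, and showing that each equals $2n+O(\log n)$, so that their difference is $O(\log n)$, which is $O(\log\max(m(x),n(r)))$ since $n=\Theta(m(x))=\Theta(n(r))$.

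First I would fix the representation size. A type $x$ with $n$ internal nodes has $n+1$ leaves, hence $2n+1$ nodes, and under {\tt catpar} it corresponds to a balanced-parenthesis word of length $2n+2$; either way $m(x)=2n+O(1)$.

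Next I would locate the rank. From the structure of {\tt rankCatalan}: it first recovers $I=n$ from the word length $2n+2$, then computes a local rank $N=\mathtt{localRank}(n,\cdot)\in\{0,1,\dots,C_n-1\}$ (with $C_n$ the $n$-th Catalan number, equal to the number of binary trees with $n$ internal nodes), and finally adds the offset $\mathtt{rankLoop}(n-1,\cdot)=\sum_{k=0}^{n-1}C_k$. Hence the rank of every size-$n$ type lies in the contiguous block $[\,T_{n-1},\,T_n\,)$ where $T_m=\sum_{k=0}^{m}C_k$, so $n(r)=\lfloor\log_2 r\rfloor+1$ is governed by $\log_2 T_n$. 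Using only elementary bounds — $C_k<4^k$ gives $T_n<4^{n+1}$, while $C_n=\frac{1}{n+1}\binom{2n}{n}\ge \frac{4^n}{(n+1)(2n+1)}$ gives $T_n\ge C_n\ge 4^n/p(n)$ for a fixed polynomial $p$ (the sharper $C_n\sim 4^n/(\sqrt\pi\,n^{3/2})$ would pin the constant $\tfrac32$ in front of $\log_2 n$, but is not needed) — we get $\log_2 T_{n-1}\le\log_2 r<\log_2 T_n$, i.e. $n(r)=2n+O(\log n)$. Combining, $|m(x)-n(r)|=|2n+O(1)-(2n+O(\log n))|=O(\log n)$, and since $n=\Theta(m(x))=\Theta(n(r))$ this is $O(\log\max(m(x),n(r)))$, which is precisely the condition in the definition.

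The only delicate point is confirming, from the helper predicates {\tt localRank} and {\tt rankLoop} (documented at the cited URLs), that {\tt localRank} is indeed a bijection onto $\{0,\dots,C_n-1\}$ and that {\tt rankLoop} accumulates exactly $\sum_{k<n}C_k$ — this is what makes the assertion ``the ranks of the size-$n$ types form a block immediately following those of the smaller types'' literally true, and it is the structural heart of the argument. The Catalan estimates are classical, and as noted even crude polynomial-factor bounds suffice, so I expect the verification of this block structure to be the main obstacle; once it is in place, the size comparison is routine.
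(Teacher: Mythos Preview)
Your proposal is correct and follows essentially the same approach as the paper: both arguments rest on (i) the block structure of the ranking—trees of size $n$ occupy the contiguous interval $[\sum_{k<n}C_k,\sum_{k\le n}C_k)$—and (ii) the Catalan asymptotic $C_n=\Theta(4^n/n^{3/2})$, from which the bit-length of the rank is $2n+O(\log n)$, matching the $2n+O(1)$ representation size of the tree. The paper compresses this into a single sentence preceding the proposition (``there are $O(4^n/n^{3/2})$ binary trees of size $n$ \ldots\ and our ranking algorithm visits them in lexicographic order''), whereas you spell out the block structure from the code of \texttt{rankCatalan} and \texttt{rankLoop} and give explicit two-sided Catalan bounds; your version is more rigorous but the idea is identical.
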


\subsubsection{Catalan skeletons of compressed de Bruijn terms}

As compressed de Bruijn terms can be seen as binary trees
with labels on their leaves and internal nodes, 
their ``Catalan skeleton''
is simply the underlying binary tree.
The predicate {\tt cskel/3} extracts this skeleton
as well as the list of the labels, in depth-first order, as
encountered in the process.
% to/from catalan skeleton of a Catalan compressed de Bruijn term in bin tree form
\begin{code}
cskel(S,Vs, T):-cskel(T,S,Vs,[]).

cskel(v(K,N),x)-->[K,N].
cskel(a(K,X,Y),(A>B))-->[K],cskel(X,A),cskel(Y,B).
\end{code}

The predicates {\tt toSkel} and {\tt fromSkel} add
conversion between this binary tree and lists
of balanced parenthesis by using the (reversible)
predicate {\tt catpar}.
% to skeleton, as list of balanced parentheses 
\begin{code} 
toSkel(T,Skel,Vs):-
  cskel(T,Cat,Vs,[]),
  catpar(Cat,Skel).
  
fromSkel(Skel,Vs, T):-
  catpar(Cat,Skel),
  cskel(T,Cat,Vs,[]).
\end{code}

\BX
%%%
illustrates the Catalan skeleton {\tt Skel} and the list of variable labels {\tt Vs}
extracted from a compressed de Bruijn term corresponding to the {\tt S} combinator.
\begin{codex}
?- T = a(3, a(0, v(0, 2), v(0, 0)), a(0, v(0, 1), v(0, 0))),
   toSkel(T,Skel,Vs),fromSkel(Skel,Vs,T1).
T = T1, T1 = a(3, a(0, v(0, 2), v(0, 0)), a(0, v(0, 1), v(0, 0))),
Skel = [0,0,0,1,1,0,1,1],Vs = [3,0,0,2,0,0,0,0,1,0,0] .
\end{codex}
\EX

\subsubsection{The generalized Cantor $k$-tupling bijection}  \label{cantorgen}
As we we have already solved the problem of ranking and unranking lists
of balanced parentheses, the remaining problem is that
of finding a bijection between the lists of labels collected from
the nodes of a compressed de Bruijn term and natural numbers.

%%%
We will use the generalized Cantor bijection between $\N^n$ and $\N$
as the first step in defining this bijection.
The formula, given in \cite{ceg99} p.4,
looks as follows:

\begin{equation}\label{kn}
K_n(x_1,\ldots,x_n) = 
  {\displaystyle\sum_{k=1}^n{{k-1+s_k} \choose k}} 
  \textit{ where } {s_k=\displaystyle\sum_{i=1}^k{x_i}} 
\end{equation}
Note that $n \choose k$ represents the number of subsets 
of $k$ elements of a set of $n$ elements, that also
corresponds to the binomial coefficient of $x^k$ 
in the expansion of $(x+y)^n$, and $K_n(x_1,\ldots,x_n)$ denotes
the natural number associated to the tuple $(x_1,\ldots,x_n)$.
It is easy to see that the generalized Cantor $n$-tupling function 
defined by equation (\ref{kn}) is 
a polynomial of degree $n$
in its arguments.

\subsubsection{The bijection between sets and sequences of natural numbers}
We recognize in the equation (\ref{kn}) the {\em prefix sums} $s_k$ incremented
with values of $k$ starting at $0$.
It represents the ``set side'' of the bijection between sequences
of $n$ natural numbers and sets of $n$ natural numbers
described in \cite{ppdp09pISO}.
It is implemented in the online Appendix as the bijection {\tt list2set}
together with its inverse {\tt set2list}.
%%%
For example, {\tt list2set} transforms {\tt [2,0,1,5]} to [2, 3, 5, 11]
as {\tt 3=2+0+1,5=3+1+1,11=5+5+1} and {\tt set2list} transforms it back
by computing the differences between consecutive members, reduced by 1.

\subsubsection{The $\N^n \to \N$ bijection} \label{revisit}
  
The bijection $K_n:\N^n \to \N$ is
basically just summing up a set
of binomial coefficients. 
The predicate {\tt fromCantorTuple} implements the
the $\N^n \to \N$ bijection in Prolog,
using the predicate {\tt fromKSet} that sums up
the binomials in formula \ref{kn} using 
the predicate {\tt untuplingLoop},
as well as the sequence to set transformer 
{\tt list2set}.
\begin{code}
fromCantorTuple(Ns,N):-
  list2set(Ns,Xs),
  fromKSet(Xs,N).

fromKSet(Xs,N):-untuplingLoop(Xs,0,0,N).

untuplingLoop([],_L,B,B).
untuplingLoop([X|Xs],L1,B1,Bn):-L2 is L1+1, 
  binomial(X,L2,B),B2 is B1+B,
  untuplingLoop(Xs,L2,B2,Bn).  
\end{code}

\subsubsection{The $\N \to \N^n$ bijection}

We split our problem in two  simpler ones:
inverting {\tt fromKSet}   and then
applying {\tt set2list} to get back from sets to lists.

We observe that the predicate 
{\tt untuplingLoop} used by {\tt fromKSet}
implements
the sum of the combinations 
${X_1 \choose 1}+{X_2 \choose 2}+\ldots+{X_K \choose K}=N$,
which is nothing but the representation of N in the
{\em combinatorial number system of degree $K$}
due to \cite{lehmer64}. Fortunately, efficient
conversion algorithms between the conventional and the
combinatorial number system are well 
known, \cite{knuth_comb}.

We are ready to implement the Prolog predicate {\tt toKSet(K,N,Ds)}, 
which, given the degree {\tt K}, indicating the number 
of ``combinatorial digits'',
finds and repeatedly subtracts the greatest 
binomial smaller than {\tt N}.
It calls the predicate {\tt combinatoriallDigits}
that returns these ``digits''
in increasing order, providing the canonical 
set representations that
{\tt set2list} needs.
\begin{code}
toKSet(K,N,Ds):-combinatoriallDigits(K,N,[],Ds).

combinatoriallDigits(0,_,Ds,Ds).
combinatoriallDigits(K,N,Ds,NewDs):-K>0,K1 is K-1,
  upperBinomial(K,N,M),M1 is M-1,
  binomial(M1,K,BDigit),N1 is N-BDigit,
  combinatoriallDigits(K1,N1,[M1|Ds],NewDs).
\end{code}

\begin{code}  
upperBinomial(K,N,R):-S is N+K,
  roughLimit(K,S,K,M),L is M // 2,
  binarySearch(K,N,L,M,R).
\end{code}
The predicate {\tt roughLimit} compares successive powers of
{\tt 2} with binomials $I \choose K$ and finds the first
{\tt I} for which the binomial is between successive powers
of {\tt 2}.
\begin{code}  
roughLimit(K,N,I, L):-binomial(I,K,B),B>N,!,L=I.
roughLimit(K,N,I, L):-J is 2*I,
  roughLimit(K,N,J,L).
\end{code}
The predicate {\tt binarySearch} finds the exact value
of the combinatorial digit in the interval {\tt [L,M]},
narrowed down by {\tt roughLimit}.
\begin{code}
binarySearch(_K,_N,From,From,R):-!,R=From.
binarySearch(K,N,From,To,R):-Mid is (From+To) // 2,binomial(Mid,K,B),
  splitSearchOn(B,K,N,From,Mid,To,R).

splitSearchOn(B,K,N,From,Mid,_To,R):-B>N,!,
  binarySearch(K,N,From,Mid,R).
splitSearchOn(_B,K,N,_From,Mid,To,R):-Mid1 is Mid+1,
  binarySearch(K,N,Mid1,To,R).  
\end{code}

The predicates {\tt toKSet} and {\tt fromKSet} implement
inverse functions, mapping natural numbers to
canonically represented sets of {\tt K} natural numbers.
\begin{codex}
?- toKSet(5,2014,Set),fromKSet(Set,N).
Set = [0, 3, 4, 5, 14], N = 2014 .
\end{codex}
The efficient inverse of Cantor's N-tupling is now simply:
\begin{code}
toCantorTuple(K,N,Ns):-
  toKSet(K,N,Ds),
  set2list(Ds,Ns).
\end{code}
\BX
illustrates the work of the generalized cantor bijection, on some large numbers:
\begin{codex}
?- K=1000,pow(2014,103,N),toCantorTuple(K,N,Ns),fromCantorTuple(Ns,N).
K = 1000, N = 208029545585703688484419851459547264831381665...567744,
Ns = [0, 0, 2, 0, 0, 0, 0, 0, 1|...] .
\end{codex}
\EX

As the image of a tuple is a polynomial
of degree $n$ it means that the its bitsize is within constant
factor of the sum of the bitsizes of the members of the tuple,
thus:
\begin{prop}\label{cantorSP}
The bijection between $\N^n$ and $\N$ is size-proportionate.
\end{prop}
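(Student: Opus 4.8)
The plan is to bound the bitsize $b(N)$ of $N = K_n(x_1,\ldots,x_n)$ from above and below by constant multiples (the constants depending only on the fixed arity $n$) of the natural representation size of the tuple, $m(x_1,\ldots,x_n) = \sum_{i=1}^n b(x_i)$, where $b(m) = \lfloor \log_2(m+1) \rfloor + 1$ is the number of bits of $m$; the $O(n)$ overhead for separating the $n$ components is absorbed into additive constants, since $n$ is fixed. Write $M = \max_i x_i$ and $\sigma = s_n = \sum_{i=1}^n x_i$, so that $M \le \sigma \le nM$ and $b(M) \le m(x_1,\ldots,x_n) \le n\,b(M)$. The whole argument then reduces to comparing $b(N)$ with $b(M)$.

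For the \emph{upper bound} on $b(N)$ I would invoke the already-established fact that $K_n$, given by~(\ref{kn}), is a polynomial of degree $n$ in its arguments with nonnegative coefficients (evident from the product form of each binomial): hence $N \le c_n\,\sigma^n \le c_n(nM)^n$ for a fixed constant $c_n$ and all $\sigma \ge 1$, so $b(N) \le n\,b(M) + O(1) \le n\,m(x_1,\ldots,x_n) + O(1)$. For the \emph{lower bound} I would isolate the $k=n$ summand of~(\ref{kn}): since $s_n = \sigma$, we get $N \ge \binom{\,n-1+\sigma\,}{n}$, and for $\sigma \ge n$ the elementary inequality $\binom{a+n}{n} \ge (a/n)^n$ gives $N \ge (\sigma/n)^n \ge (M/n)^n$, hence $b(N) \ge n\bigl(b(M) - \log_2 n\bigr) - O(1)$. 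Rearranging, $b(M) \le \tfrac{1}{n}b(N) + \log_2 n + O(1)$, and therefore $m(x_1,\ldots,x_n) \le n\,b(M) \le b(N) + O(n\log_2 n) = b(N) + O(1)$ for fixed $n$. The finitely many tuples with $\sigma < n$ (where the crude polynomial bounds are not yet in force) contribute only a bounded additive correction. Combining the two estimates yields $\tfrac{1}{n}\,m(x_1,\ldots,x_n) - O(1) \le b(N) \le n\,m(x_1,\ldots,x_n) + O(1)$, which is exactly the claimed size-proportionality.

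The step I expect to be the main obstacle is the lower bound --- arguing cleanly that a tuple with a single large coordinate cannot be mapped to a small number. The algebra is routine once one commits to extracting the degree-$n$ term $\binom{n-1+s_n}{n}$ from the Cantor sum and to the bound $\binom{a+n}{n} \ge (a/n)^n$, but some care is needed near the boundary where $\sigma$ is comparable to $n$, and in being explicit about which additive $O(1)$ terms (component separators, the $\log_2 n$ shift, the small-tuple corrections) are being swept into the constants. None of this is deep, so I would state it as a short remark rather than belabour the routine inequalities in the main text.
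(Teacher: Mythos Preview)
Your approach is essentially the same as the paper's: both rest on the observation that $K_n$ is a polynomial of degree $n$ in its arguments, so bitsizes scale proportionally. The paper's justification is a single sentence giving only the upper-bound direction (``its bitsize is within constant factor of the sum of the bitsizes of the members of the tuple''), whereas you also supply the lower bound explicitly via the dominant summand $\binom{n-1+s_n}{n}$---a detail the paper leaves implicit.
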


\subsection{Ranking/unranking of compressed de Bruijn terms}\label{ranks}
We will implement a size-proportionate bijective encoding of
compressed de Bruijn terms following the technique
described in \cite{serpro}. The algorithm will split
a lambda tree into its {\em Catalan skeleton} and
the list of atomic objects labeling its nodes.
In our case, 
the Catalan skeleton abstracts away the applicative structure
of the term. It also provides the key for decoding 
unambiguously the integer labels in both the leaves (two integers)
and internal nodes (one integer).
Our ranking/unranking algorithms will rely on
the encoding/decoding of the Catalan skeleton provided by
the predicates {\tt rankCatalan/2} and {\tt unrankCatalan/2}
as well as for the encoding/decoding of
the labels, provided by the
predicates {\tt toCantorTuple/3} and {\tt fromCantorTuple/2}.

The predicate {\tt rankTerm/2}
defines the bijective encoding of a (possibly open) compressed 
de Bruijn term.
\begin{code}
rankTerm(Term,Code):-
  toSkel(Term,Ps,Ns),
  rankCatalan(Ps,CatCode),
  fromCantorTuple(Ns,VarsCode),
  fromCantorTuple([CatCode,VarsCode],Code).
\end{code}
The predicate {\tt rankTerm/2}
defines the bijective decoding of a natural
number into a (possibly open) compressed  
de Bruijn term.
\begin{code}
unrankTerm(Code,Term):-
  toCantorTuple(2,Code,[CatCode,VarsCode]),
  unrankCatalan(CatCode,Ps), 
  length(Ps,L2),L is (L2-2) div 2, L3 is 3*L+2,
  toCantorTuple(L3,VarsCode,Ns),
  fromSkel(Ps,Ns,Term).
\end{code}
Note that given the unranking of {\tt CatCode}
as a list of balanced parentheses of 
length {\tt 2*L+2}, we can determine the number
{\tt L} of internal nodes of the tree and
the number {\tt L+1} of leaves.
Then we have {\tt 2*(L+1)} labels for
the leaves and {\tt L} labels for the internal nodes,
for a total of {\tt 3L+2}, value needed to
decode the  labels encoded as {\tt VarsCode}.

It follows from Prop. \ref{treeSP} and Prop. \ref{cantorSP} that:
\begin{prop}
A compressed de Bruijn terms is size-proportionate to its rank. 
\end{prop}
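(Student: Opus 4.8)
The plan is to follow the three stages of \texttt{rankTerm/2} --- splitting $t$ into a Catalan skeleton together with the list of its integer labels, encoding each half, and pairing the two resulting codes --- and to verify that every stage changes the representation size only by a constant factor plus an additive $O(\log)$ term, so that the composition does too. To fix notation: if $t$ is a compressed de Bruijn term with $L$ application nodes, and hence $L+1$ leaves, then \texttt{toSkel} returns a balanced-parenthesis word $Ps$ of length $2L+2$ --- equivalently, via \texttt{catpar}, a binary tree of size $L$ --- and the list $Ns$ of the $3L+2$ integer labels of $t$: two per leaf (the $K$ and $N$ of each $v(K,N)$) and one per internal node (the $K$ of each $a(K,\cdot,\cdot)$). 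I take the representation size $m(t)$ of $t$ to be $\Theta(L)$ for the tree shape plus the sum $\sum_i \beta(\ell_i)$ of the bitsizes $\beta(\ell_i)$ of the labels $\ell_i$ --- which is the printed cost of $t$ up to a constant factor --- and I write $n(t)$ for the bitsize of the rank.

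I would then bound the three stages in turn. The skeleton $Ps$ is just the balanced-parenthesis encoding of a binary tree, i.e.\ precisely the simple-type datatype of Section~\ref{typinf}, so Proposition~\ref{treeSP} applies verbatim and the bitsize of \texttt{CatCode} agrees with $L$ up to $O(\log L)$. The label list is encoded by \texttt{fromCantorTuple}, an instance of the $\N^n\to\N$ bijection, so by Proposition~\ref{cantorSP} the bitsize of \texttt{VarsCode} is controlled by $\sum_i \beta(\ell_i)$. Finally \texttt{fromCantorTuple([CatCode,VarsCode],Code)} is the same bijection with $n=2$, so by Proposition~\ref{cantorSP} again the bitsize of \texttt{Code} is within a constant factor of the sum of the two preceding bitsizes. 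Chaining these estimates gives $n(t)=\Theta\!\left(L+\sum_i\beta(\ell_i)\right)=\Theta(m(t))$ up to the $O(\log L)$ slack; and since the decoder \texttt{unrankTerm} recovers $L$ from the length of $Ps$ \emph{before} it ever needs the labels, the decomposition is unambiguous, so the identical chain run in reverse supplies the matching estimate, and absorbing $O(\log L)\subseteq O(\log\max(m(t),n(t)))$ gives what ``size-proportionate'' requires.

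The step I expect to be the genuine obstacle is the middle use of Proposition~\ref{cantorSP}, applied to the \emph{list} of labels rather than to a fixed-arity pair: here the tuple length $3L+2$ grows with the term, whereas the ``polynomial of degree $n$'' reasoning behind Proposition~\ref{cantorSP} is clean only when $n$ is bounded or the entries are balanced --- the dominant binomial $\binom{(n-1)+\sum_i\ell_i}{n}$ can be inflated (by roughly a further logarithmic factor, and worse if a single label is disproportionately large) relative to $\sum_i\beta(\ell_i)$. To handle this I would first isolate a dedicated lemma that quantifies the bitsize of the Cantor code of an $n$-tuple in terms of $n$ and the entry sizes, specialised to the regime that actually occurs --- for closed terms every $K$ and every de Bruijn index is bounded by the de Bruijn size of $t$ --- and invoke it in place of the bare citation of Proposition~\ref{cantorSP}. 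Correspondingly, and consistently with the informal gloss in Section~\ref{cantor}, I read ``size-proportionate'' here as ``agreeing up to a constant, or at worst polylogarithmic, factor in the printed representation''.
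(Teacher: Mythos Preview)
Your approach is exactly the paper's: the proof there is a single line, ``It follows from Prop.~\ref{treeSP} and Prop.~\ref{cantorSP}'', which is precisely your decomposition into the Catalan-skeleton step (handled by \ref{treeSP}) and the two Cantor-tupling steps (handled by \ref{cantorSP}). Your worry about applying Proposition~\ref{cantorSP} with a growing arity $3L+2$ is well-taken and in fact more careful than the paper itself, which does not isolate or address that regime; your proposed remedy --- a dedicated estimate exploiting that the labels of a closed term are bounded by its size --- is a reasonable way to close a gap the paper leaves implicit.
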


\BX
illustrates the ``size-proportionate'' encoding of the compressed de Bruijn terms
corresponding to the combinators {\tt S} and {\tt Y}.
\begin{codex}
?- T = a(3,a(0,v(0,2),v(0,0)),a(0,v(0, 1),v(0,0))),
   rankTerm(T,R),unrankTerm(R,T1).          
T = T1,T1 = a(3,a(0,v(0,2),v(0,0)),a(0,v(0, 1),v(0,0))),
R = 56493141 .

?- T=a(1,a(1,v(0,1),a(0,v(0,0),v(0,0))),a(1,v(0,1),a(0,v(0,0),v(0,0)))),
   rankTerm(T,R),unrankTerm(R,T1).                         
T=T1,T1=a(1,a(1,v(0,1),a(0,v(0,0),v(0,0))),a(1,v(0,1),a(0,v(0,0),v(0,0)))),
R = 261507060 .
\end{codex}
\EX

\subsubsection{Generation of lambda terms via unranking}
While direct enumeration of terms constrained by number of nodes or depth
is  straightforward in Prolog, an unranking algorithm is also
usable for term generation, including generation of random terms.

\paragraph{Generating open terms in compressed de Bruijn form}
Open terms are generated simply by iterating over an initial segment of $\N$ with the
built-in {\tt between/3} and calling the predicate {\tt unrankTerm/2}.
\begin{code}  
ogen(M,T):-between(0,M,I),unrankTerm(I,T).  
\end{code}
Reusing unranking-based open term generators for more constrained families
of lambda terms works when their
asymptotic density is relatively high. 
%%%

\paragraph{Generating closed and well-typed terms in compressed de Bruijn form}
The extensive quantitative analysis available in the literature
\cite{grygielGen,ranlamb09,normalizing13} indicates that density
of closed and typed terms decreasing very quickly with size,
making generation by filtering impractical for very large
terms.

The predicate {\tt cgen/2} generates closed terms by filtering
the results of {\tt ogen/2} with the predicate {\tt isClosedC}
and {\tt tgen} generates typable terms by filtering
the results of {\tt cgen/2} with {\tt typable/2}.
\begin{code}
cgen(M,IT):-ogen(M,IT),isClosedC(IT).

tgen(M,IT):-cgen(M,IT),typable(IT).
\end{code}

\BX
Generation of well-typed terms via unranking.
\begin{codex}
?- tgen(200,T).
T = v(1, 0) ;
T = v(2, 0) ;
T = v(2, 1) ;
T = v(3, 0) ;
T = v(3, 1) ;
T = v(4, 0) ;
T = a(0, v(1, 0), v(1, 0)) ;
T = a(1, v(0, 0), v(1, 0)) ;
T = v(3, 2) ;
T = v(4, 1) .
false.
\end{codex}
\EX

\subsection{X-combinator trees as natural numbers}\label{ntree}
G\"odel-numberings seen as injective mappings from formulas and proofs
to natural numbers have been used for important
theoretical results in the past \cite{conf/icalp/HartmanisB74}
among which G\"odel's
incompleteness theorems are the most significant \cite{Goedel:31}.

In the form of ranking and unranking functions, bijections
from families of combinatorial objects to natural numbers
have been devised with often practical uses in mind,
like generation of random inputs for software testing.

Ensuring that such bijections are also size-proportionate,
adds an additional challenge to the problem, as the fast
growth of the number of combinatorial objects of a given
size makes it difficult to impossible to associate to all of them
comparably small unique natural numbers. 
As another challenge,  computation of the
unranking function often involves some form of binary
or multiway tree search to
locate the object corresponding to a given 
natural number \cite{grygielGen,serpro},
which precludes their use on very large objects.
Our solution described here consists in two steps, the second one
involving an arguably surprising twist.

First, we define a bijection between natural numbers and trees.
Next we define arithmetic operations directly on trees
and ensure that they mimic exactly their natural number
equivalents.
{\em This turns our trees into natural numbers
(they become yet another model or Peano's axioms),
hence we can make them the target of ranking algorithms
and the source of unranking ones.}

As we are now dealing with bijections between trees and tree-like
data structures, making them size proportionate  becomes
surprisingly easy. 
We will define such a bijection to general lambda terms
in section \ref{goedel}.

\subsubsection{A bijection from binary trees to natural numbers}\label{tn}
The  (big-endian) binary representation of a natural number can be written as a concatenation of binary digits of the form
\begin{equation} \label{bin}
n=b_0^{k_0}b_1^{k_1}\ldots b_i^{k_i} \ldots b_m^{k_m}
\end{equation}
with  $b_i \in \{0,1\}$ and the highest digit  $b_m=1$.
The following hold.
\begin{prop}
An even number of the form $0^ij$ corresponds
to the operation $2^ij$ and an odd number of the form $1^ij$ corresponds
to the operation $2^i(j+1)-1$. 
\end{prop}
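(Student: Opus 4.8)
The plan is to unfold the run-length notation and reduce both claims to the elementary observation that a binary word splits multiplicatively into its high and low parts. Write $\mathrm{val}(\cdot)$ for the natural number denoted by a binary word and $|\cdot|$ for its length; then a word whose high-order part denotes $h$ and whose low-order part is a word $v$ denotes $h\cdot 2^{|v|}+\mathrm{val}(v)$, immediately from the positional expansion. In the notation of~(\ref{bin}), reading the runs $b_0^{k_0}b_1^{k_1}\dots$ from the least significant end upward, ``$0^i j$'' means the number whose bits are those of $j$ followed (at the low end) by a run of $i$ zeros, and ``$1^i j$'' the same with a run of $i$ ones; note that for $i\ge 1$ the trailing bit is $0$ in the first case and $1$ in the second, so these are indeed an even and an odd number respectively.

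First I would dispatch the even case: since $\mathrm{val}(0^i)=0$, the split identity gives $\mathrm{val}(0^i j)=j\cdot 2^{i}+0=2^{i}j$. Equivalently, a one-line induction on $i$ works, the step being that appending one further low-order zero doubles the value. The boundary case $i=0$ is consistent, as $2^0 j = j$.

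Next I would handle the odd case. The one extra ingredient is $\mathrm{val}(1^i)=2^i-1$, a geometric sum (or a trivial induction on $i$). Feeding this into the split identity, $\mathrm{val}(1^i j)=j\cdot 2^{i}+(2^{i}-1)=2^{i}(j+1)-1$, which is the asserted form; and $i=0$ again degenerates correctly to $j$. As an internal check, the recursive step $1^{i+1}j$ equals $2\cdot(1^i j)+1$, and $2\bigl(2^{i}(j+1)-1\bigr)+1 = 2^{i+1}(j+1)-1$, confirming the closed form.

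I expect no real obstacle: both statements are routine once the run-length notation is pinned down. The only points demanding care are the indexing/endianness convention implicit in~(\ref{bin}) --- i.e.\ that the exponent $i$ counts a run of identical \emph{low-order} bits --- and the degenerate case $i=0$, which I would mention explicitly so that the two forms are seen to cover $j$ itself. If one further wanted the decomposition in~(\ref{bin}) to be canonical, one would additionally note that taking the runs maximal forces $b_k \neq b_{k+1}$ and, with $b_m=1$, pins down the pair $(i,j)$ uniquely in each form; but that uniqueness is not required for the proposition as stated.
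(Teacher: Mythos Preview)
Your proof is correct and essentially coincides with the paper's. The paper dismisses the even case as obvious and handles the odd case by defining $f(x)=2x+1$ and proving by induction that the $i$-th iterate satisfies $f^{i}(j)=2^{i}(j+1)-1$; this inductive step is exactly your ``internal check'' $2\bigl(2^{i}(j+1)-1\bigr)+1=2^{i+1}(j+1)-1$, while your primary route via the positional split and $\mathrm{val}(1^{i})=2^{i}-1$ is just the same identity read off directly rather than recursively.
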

\begin{proof}
It is clearly the case that $0^ij$ corresponds to multiplication by a power of $2$. If $f(i)=2i+1$, then it can be shown by induction that the $i$-th iterate of $f$, $f^i$ is computed as in the equation (\ref{fiter})
\begin{equation}
f^i(j)=2^i(j+1)-1 \label{fiter}
\end{equation}
Observe that each block $1^i$ in $n$, represented as $1^ij$ in equation (\ref{bin}), corresponds to the iterated application of $f$, $i$ times,
$n=f^i(j)$.
\end{proof}

\begin{prop} \label{parity}
A number $n$ is even if and only if it contains an even number of blocks of the form 
$b_i^{k_i}$ in equation (\ref{bin}). 
A number $n$ is odd if and only if it contains an odd number of blocks of the form 
$b_i^{k_i}$ in equation (\ref{bin}). 
\end{prop}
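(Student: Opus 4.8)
The plan is to reduce Proposition \ref{parity} to two elementary facts about the run-length decomposition (\ref{bin}): (i) the parity of $n$ equals the digit of the block sitting at the least significant end of the numeral, and (ii) because the blocks $b_i^{k_i}$ are \emph{maximal} runs of equal digits, the base digits of consecutive blocks differ, so the digit values attached to the blocks alternate strictly between $0$ and $1$ all the way along the decomposition. Neither fact is deep, and together they force the conclusion.

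First I would pin down the most significant end: a binary numeral has no leading zeros, so the block at the high-order end carries the digit $1$ — this is exactly the hypothesis $b_m = 1$ in (\ref{bin}). Combining this with the strict alternation in (ii), a straightforward parity count along the alternating sequence of block-digits shows that the block at the low-order end carries digit $1$ precisely when the total number of blocks is odd, and digit $0$ precisely when that number is even.

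Second, I would record fact (i) in the form already supplied by the preceding proposition. If the low-order block is a run of $k_0 \geq 1$ zeros over a remaining high-order part $j$, then $n = 2^{k_0} j$, which is even; if instead it is a run of $k_0 \geq 1$ ones over $j$, then $n = 2^{k_0}(j+1) - 1$, which is odd. Chaining this with the previous paragraph yields: $n$ is odd $\iff$ the low-order block has digit $1$ $\iff$ the number of blocks is odd, and dually $n$ is even $\iff$ the low-order block has digit $0$ $\iff$ the number of blocks is even. These are precisely the two equivalences asserted in Proposition \ref{parity}.

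I do not expect a genuine obstacle here; the one point that needs a moment of care is the boundary case $n = 0$, whose decomposition (\ref{bin}) is empty (zero blocks — an even count) or a lone $0$-block depending on how one reads the ``no leading zeros / $b_m = 1$'' convention, and in either admissible reading the statement survives a one-line check. As an alternative to the direct argument, one could instead run an induction on the number of blocks, peeling off the low-order block $b_0^{k_0}$ and invoking the preceding proposition (in particular the iterate formula (\ref{fiter}) for a block of ones) at each step; this gives the same conclusion at slightly greater length.
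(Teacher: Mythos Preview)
Your argument is correct and follows the same route as the paper's proof: the highest-order block carries digit $1$, adjacent blocks alternate digits, hence the low-order block digit is determined by the parity of the block count, which in turn fixes the parity of $n$. The paper compresses this into a single sentence, while you spell out the chain of equivalences and add the $n=0$ boundary check and an alternative induction; the underlying idea is identical.
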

\begin{proof}
It follows from the fact that the highest digit (and therefore the last block in big-endian representation) is $1$ and the parity of the blocks alternate.
\end{proof}

This suggests defining a {\tt cons} operation on natural numbers
as follows.
\begin{equation}\label{cnat}
cons(i,j)=
\begin{cases}
{2^{i+1}j}  & {\text{if } j \text{ is odd}},\\
{2^{i+1}(j+1)-1} & {\text{if } j \text{ is even}}.
\end{cases}
\end{equation}
Note that the exponents are $i+1$ instead of $i$ as we start counting at $0$. Note also
that  $cons (i,j)$ will be even when $j$ is odd and odd when $j$ is even.

\begin{prop}
The equation (\ref{cnat}) defines a bijection  $c: \N \times \N \to  \N^+=\N-\{0\}$.
\end{prop}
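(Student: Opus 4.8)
The plan is to show that the map $c$ defined by equation (\ref{cnat}) is a bijection from $\N \times \N$ onto $\N^+ = \N \setminus \{0\}$ by establishing three things: that $c$ always lands in $\N^+$, that it is injective, and that it is surjective. The cleanest route exploits the block decomposition (\ref{bin}) together with Propositions \ref{parity}: every positive natural number $n$ has a unique big-endian binary representation as a nonempty sequence of maximal blocks $b_0^{k_0} b_1^{k_1} \ldots b_m^{k_m}$ with $b_m = 1$, alternating digits, and $k_i \geq 1$. I will read off $i$ and $j$ from this decomposition and check that $c(i,j) = n$ recovers it.

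First I would handle well-definedness: if $j$ is odd then $2^{i+1} j$ is a positive even number, and if $j$ is even (including $j = 0$) then $2^{i+1}(j+1) - 1$ is a positive odd number (it equals $2^{i+1} - 1 \geq 1$ when $j = 0$). So $c(i,j) \in \N^+$ always, and moreover the parity of $c(i,j)$ is determined: even iff $j$ is odd. Next, for surjectivity, given $n \in \N^+$ write its binary form as in (\ref{bin}) with lowest block $b_0^{k_0}$. If $b_0 = 0$, then $n = 2^{k_0} j'$ where $j'$ is the number whose binary form is $b_1^{k_1} \ldots b_m^{k_m}$; since the blocks alternate and $b_1 = 1$, $j'$ ends in a $1$-block, so $j'$ is odd, and setting $i = k_0 - 1$, $j = j'$ gives $c(i,j) = 2^{i+1} j = n$. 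If instead $b_0 = 1$, use the identity from the proof of the earlier proposition: a trailing block $1^{k_0}$ corresponds to $k_0$-fold iteration of $f(j) = 2j+1$, i.e. $n = f^{k_0}(j') = 2^{k_0}(j'+1) - 1$ where $j'$ has binary form $b_1^{k_1}\ldots b_m^{k_m}$ (or $j' = 0$ if $n = 2^{k_0}-1$); here $b_1 = 0$ so $j'$ is even, and $i = k_0 - 1$, $j = j'$ gives $c(i,j) = 2^{i+1}(j+1) - 1 = n$.

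For injectivity I would argue that $c$ is invertible on its image by reversing the above: given $n = c(i,j)$, its parity tells us which branch of (\ref{cnat}) was used (even $\Rightarrow$ $j$ odd $\Rightarrow$ first branch; odd $\Rightarrow$ $j$ even $\Rightarrow$ second branch), and in the first branch $i+1$ is exactly the $2$-adic valuation of $n$ with $j = n / 2^{i+1}$, while in the second branch $i+1$ is the length of the trailing run of $1$s in $n$ with $j = (n+1)/2^{i+1} - 1$. Both $i$ and $j$ are thus uniquely recovered, so $c(i_1,j_1) = c(i_2,j_2)$ forces $(i_1,j_1) = (i_2,j_2)$. Combining injectivity with surjectivity gives the bijection.

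The main obstacle, such as it is, is bookkeeping around the degenerate cases — particularly $j = 0$ (which must map somewhere, and does: to $2^{i+1} - 1$, the all-ones strings) and making sure the block decomposition argument is airtight about the alternation of digits and the nonemptiness of the trailing quotient's representation. None of this is deep, but the parity argument via Proposition \ref{parity} is what makes the case split clean rather than ad hoc, so I would lean on it explicitly rather than re-deriving it.
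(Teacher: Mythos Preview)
Your proof is correct and in fact more complete than what the paper provides. The paper does not give a self-contained argument here: it simply asserts the proposition and then says that \texttt{cons} has an inverse \texttt{decons}, ``that we will constructively define together with it,'' deferring justification to the Prolog code for \texttt{decons} (which reads off parity, adjusts by the parity bit, and extracts the dyadic valuation). Your argument is the mathematical unpacking of exactly that code: the parity of $n$ selects the branch of (\ref{cnat}), and then the $2$-adic valuation of $n$ (respectively of $n+1$) recovers $i+1$ and hence $j$. So the underlying idea is the same, but you supply the verification---well-definedness into $\N^+$, surjectivity via the block decomposition (\ref{bin}), and injectivity via explicit inversion---that the paper leaves implicit in the implementation.
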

Therefore {\tt cons} has an inverse {\tt decons}, that we will constructively define
together with it. 
\begin{code}
cons(I,J,C) :- I>=0,J>=0,
  D is mod(J+1,2),
  C is 2^(I+1)*(J+D)-D.
\end{code}
The definition of the inverse {\tt decons} relies on the {\em dyadic valuation} of a number $n$, $\nu_2(n)$, defined as the largest exponent of 2 dividing $n$,
 implemented as the helper predicate {\tt dyadicVal}, which computes the least significant bit of
its first argument with help from the built-in {\tt lsb}.
\begin{code}
decons(K,I1,J1):-K>0,B is mod(K,2),KB is K+B,
  dyadicVal(KB,I,J),
  I1 is max(0,I-1),J1 is J-B.

dyadicVal(KB,I,J):-I is lsb(KB),J is KB // (2^I).
\end{code}

\BX
The inverse {\tt cons} and {\tt decons} operations.
\begin{codex}
?- decons(2016,A,B),cons(A,B,N).
A = 4,
B = 63,
N = 2016.
\end{codex}
\EX

We can compute a natural number from an X-combinator tree
by mapping recursively the ``{\tt >}'' constructor to {\tt cons}.
\begin{code}
n(x,0).
n((A>B),K):-n(A,I),n(B,J),cons(I,J,K).
\end{code}
Similarly, we can build an X-combinator tree
from a natural number by recursing over {\tt decons}.
\begin{code}
t(0,x).
t(K,(A>B)):-K>0,decons(K,I,J),t(I,A),t(J,B).
\end{code}
Note the small codes corresponding to some interesting combinators.
\BX
Encodings of combinators X, S, K and XX=KK.
\begin{codex}
?- n(x,N).
N = 0.
?- n(x>x,N).    
N = 1.
?- sT(X),n(X,N).
X = (x> (x>x)), N = 2.
?- kT(X),n(X,N).
X = ((x>x)>x), N = 3.
\end{codex}
\EX

%The following holds.
\begin{prop}
The predicates {\tt n} and {\tt t} define inverse functions
between natural numbers and X-combinator trees.
\end{prop}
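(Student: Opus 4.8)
The plan is to prove the two inverse identities $t(n(T)) = T$ and $n(t(K)) = K$ separately, each by an appropriate induction, leveraging the already-established fact that {\tt cons} and {\tt decons} are mutually inverse: the proposition following equation (\ref{cnat}) states that $c : \N \times \N \to \N^+$ is a bijection, so $decons(cons(i,j)) = (i,j)$ for all $i,j \in \N$ and $cons(decons(k)) = k$ for every $k > 0$.

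First I would show $t(n(T)) = T$ for every X-combinator tree $T$, by structural induction on $T$. The base case is immediate, since $n(x) = 0$ and $t(0) = x$. For the inductive step $T = (A > B)$ we have $n(T) = cons(n(A), n(B))$; because $cons$ always returns a value in $\N^+$, the predicate {\tt t} enters its second clause on input $n(T)$ and computes $decons(cons(n(A), n(B))) = (n(A), n(B))$ by the inverse property, whence $t(n(T)) = (t(n(A)) > t(n(B))) = (A > B) = T$ by the induction hypothesis applied to $A$ and $B$.

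For the converse, $n(t(K)) = K$ for all $K \in \N$, I would use course-of-values induction on $K$. The base case $K = 0$ gives $t(0) = x$ and $n(x) = 0$. For $K > 0$, write $decons(K) = (I,J)$, so that $cons(I,J) = K$. The engine of this direction is a decrease lemma: $I < K$ and $J < K$, which both guarantees termination of {\tt t} and licenses applying the induction hypothesis to the recursive calls $t(I)$ and $t(J)$. This lemma follows from the closed form $cons(i,j) = 2^{i+1}(j+d) - d$ with $d \in \{0,1\}$ the parity correction: minimizing over $j$ shows $cons(i,j) \ge 2^{i+1} - 1 \ge i + 1 > i$, and a two-line case split on whether $j = 0$ gives $cons(i,j) > j$. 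Granting it, $n(t(K)) = n(t(I) > t(J)) = cons(n(t(I)), n(t(J))) = cons(I,J) = K$ by the induction hypothesis.

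Combining the two identities yields that {\tt n} and {\tt t} are inverse bijections between $\N$ and the set of X-combinator (binary) trees. The one genuinely non-formal point — and the step I would flag as the main obstacle — is the decrease lemma $I, J < K$; everything else is bookkeeping that rides on the {\tt cons}/{\tt decons} bijection. That lemma reduces to the elementary estimate $2^{i+1} - 1 \ge i + 1$ together with the parity case analysis on $d$, so it is routine but does need to be stated to make the recursion well-founded.
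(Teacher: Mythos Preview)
Your proof is correct and rests on the same core fact the paper invokes---that {\tt cons} and {\tt decons} are mutual inverses---but you have supplied far more detail than the paper does: its entire proof is the single sentence ``It follows from the fact that {\tt cons} and {\tt decons} implement inverse functions.'' In particular, your decrease lemma ($I,J < K$) and the explicit course-of-values induction are things the paper leaves implicit, so your write-up is strictly more complete.
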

\begin{proof}
It follows from the fact that {\tt cons} and {\tt decons}
implement inverse functions.
\end{proof}

\BX
The work of {\tt t} and {\tt n} on the first 8 natural numbers.
\begin{codex}
?- maplist(t,[0,1,2,3,4,5,6,7],Ts),maplist(n,Ts,Ns).
Ts = [x,x>x,x> (x>x), (x>x)>x, (x>x)> (x>x),
      x> (x> (x>x)),x> ((x>x)>x), (x> (x>x))>x],
Ns = [0, 1, 2, 3, 4, 5, 6, 7].
\end{codex}
\EX

\begin{comment}
Figure \ref{cats2014} illustrates the DAG obtained by applying the operation
{\tt decons} repeatedly and merging identical subtrees for three consecutive numbers. 
The order of the edges is
marked with {\tt 0} and {\tt 1}.
\FIG{cats2014}{DAG representing 2014,2015 and 2016}{0.35}{cats2014.pdf}
\end{comment}

\subsubsection{Binary tree arithmetic}\label{sucpred}

As we know for sure that
natural numbers support arithmetic operations, we will
try to mimic their behavior with  binary trees built with
the constructor ``{\tt >}'' and empty leaves {\tt x}
that we have interpreted so far as X-combinator expressions
and simple types.
 
The operations {\tt even\_} and {\tt odd\_} implement the
observation following from of Prop. \ref{parity} that parity (staring with $1$
at the highest block) alternates with each block of distinct $0$ or $1$ digits.
\begin{code}
parity(x,0).
parity(_>x,1).
parity(_>(X>Xs),P1):-parity(X>Xs,P0),P1 is 1-P0.

even_(_>Xs):-parity(Xs,1).
odd_(_>Xs):-parity(Xs,0).
\end{code}

We will now specify successor and predecessor
through two mutually recursive predicates, {\tt s} and {\tt p}.

They first decompose their arguments as if using {\tt decons}. Then,
after transforming them as a result of adding {\tt 1},
they place back the results as if using the {\tt cons} operation,
both emulated by the use of the constructor ``{\tt >}''.
Note that the two functions work on trees with steps corresponding to
{\em a block of {\tt 0} or {\tt 1}
digits at a time}. They are based on arithmetic observations
about the behavior of these blocks when incrementing or decrementing
a binary number by {\tt 1}.
% successor
\begin{code}
s(x,x>x).
s(X>x,X>(x>x)):-!.
s(X>Xs,Z):-parity(X>Xs,P),s1(P,X,Xs,Z).
\end{code}
After computing parity, 
the successor predicate {\tt s} delegates the transformation of
the blocks of $0$ and $1$ digits to predicate  {\tt s1}
handling both the {\tt even\_} and  {\tt odd\_} cases.

\begin{code}
s1(0,x,X>Xs,SX>Xs):-s(X,SX).
s1(0,X>Ys,Xs,x>(PX>Xs)):-p(X>Ys,PX).
s1(1,X,x>(Y>Xs),X>(SY>Xs)):-s(Y,SY).
s1(1,X,Y>Xs,X>(x>(PY>Xs))):-p(Y,PY).
\end{code}

The predecessor function {\tt p} inverts the work of {\tt s}
% predecessor
\begin{code}
p(x>x,x).
p(X>(x>x),X>x):-!.
p(X>Xs,Z):-parity(X>Xs,P),p1(P,X,Xs,Z).
\end{code}
After computing parity,
the predecessor predicate {\tt p} delegates the transformation of
the blocks of $0$ and $1$ digits to  {\tt p1}
handling separately the
{\tt even\_} and {\tt odd\_} cases.
\begin{code}
p1(0,X,x>(Y>Xs),X>(SY>Xs)):-s(Y,SY).
p1(0,X,(Y>Ys)>Xs,X>(x>(PY>Xs))):-p(Y>Ys,PY).
p1(1,x,X>Xs,SX>Xs):-s(X,SX).
p1(1,X>Ys,Xs, x>(PX>Xs)):-p(X>Ys,PX).
\end{code}

\begin{prop}
Assuming parity information is kept explicitly, 
the operations {\tt s} and {\tt p} work on
a binary tree of size $N$ in time 
constant on average and
and $O(log^*(N))$ in the worst case
\end{prop}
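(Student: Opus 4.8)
The plan is to analyze \texttt{s} in detail; \texttt{p} is handled symmetrically, since the clauses of \texttt{p} and \texttt{p1} are built to invert those of \texttt{s} and \texttt{s1} one for one. The analysis tracks how a call rewrites the big-endian block decomposition $n = b_0^{k_0}\cdots b_m^{k_m}$ of equation~(\ref{bin}), where a \emph{block} is a maximal run of equal bits and a node records the length of the current block in its left subtree and the remaining blocks in its right subtree.

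First I would establish a structural invariant: every clause of \texttt{s}, \texttt{s1} (and of \texttt{p}, \texttt{p1}) does only a bounded amount of node manipulation at the root, plus \emph{at most one} recursive call, and that call always has the shape \texttt{s(X,\_)} or \texttt{p(X,\_)} where \texttt{X} is the length counter of the block being edited --- never a call that walks down the tail of the remaining blocks. Inspecting \texttt{s1}, each case either (i) lengthens a block by one (replacing its counter \texttt{X} by \texttt{s(X,SX)}), (ii) shortens a block by one (replacing \texttt{X} by \texttt{p(X,PX)}), (iii) splits one block into two (inserting a fresh \texttt{x} counter beside a shrunk copy), or (iv) absorbs a length-one block into its neighbour; in all four the unprocessed tail is shared verbatim. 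Hence one top-level \texttt{s} touches only the last one or two blocks of $n$. This is exactly where the hypothesis that parity is stored rather than recomputed is needed: \texttt{parity/2} as written walks the whole spine, so without the stored value the cost would be linear in the number of blocks; with it, each level of the recursion costs $O(1)$.

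Second, for the worst case I would bound the depth of the cascade of counter updates. A counter update on \texttt{X} may in turn spawn a counter update on \texttt{X}'s own counter, and so on; but a block inside an $N$-node tree has length $O(N)$ (and the size-proportionality of the representation makes the length's \emph{counter} much smaller still), so the magnitudes along the chain counter $\to$ counter-of-counter $\to\cdots$ shrink at a logarithmic rate, and the chain terminates after $O(\log^* N)$ steps. Each step is $O(1)$ by the invariant, which yields the $O(\log^* N)$ worst-case bound.

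Third, for the ``constant on average'' claim I would run an amortized / potential argument over a counting run $0,1,2,\dots$: take the potential to be a constant times the number of $1$-blocks of the current value, and show that a cascade reaching depth $d$ is triggered only about once every ``tower of exponentials of height $d$'' successor steps, so that the amortized cost $\sum_{d}\, d\cdot\Pr[\text{depth}\ge d]$ telescopes to $O(1)$ per operation; equivalently, summing the exact cost over all $2^N$ numbers of bit-length at most $N$ yields $O(2^N)$. The main obstacle is precisely this last estimate: unlike the textbook binary counter, a ``carry'' here is a nested recursive counter update rather than a single bit flip, and the crux is proving that the frequency of depth-$d$ cascades decays like the reciprocal of a height-$d$ exponential tower --- fast enough that the series converges. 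Making that bookkeeping rigorous, rather than the routine per-clause checks, is where the real work lies.
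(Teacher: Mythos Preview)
The paper does not prove this proposition at all: its proof is the single line ``See \cite{arxiv_cats}.'' So there is no in-paper argument to compare against, and your sketch already goes well beyond what the present paper offers. Your structural invariant (each clause of \texttt{s1}/\texttt{p1} does $O(1)$ work at the root plus at most one recursive call on a block counter, never on the tail) is correct and is exactly the right starting point, and you are right that the stored-parity hypothesis is what prevents the linear spine walk in \texttt{parity/2} from dominating.

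There is, however, a genuine gap in your worst-case step. You write that ``a block inside an $N$-node tree has length $O(N)$''; this is false when $N$ is the tree size. Take the left-leaning chain $L_N=(\cdots((x>x)>x)>\cdots)>x$ with $N$ internal nodes: under the \texttt{cons}/\texttt{decons} bijection it represents a single $1$-block whose length is an exponential tower of height roughly $N$, not $O(N)$. Tracing \texttt{p} on $L_N$ shows the recursion descends $L_N\to L_{N-1}\to\cdots\to L_1$, i.e.\ depth $\Theta(N)$, not $O(\log^{*}N)$. What \emph{is} true is that the counter one recurses into represents a number at most the bit-length of the current number, so the \emph{represented magnitudes} drop logarithmically and the chain has length $O(\log^{*}n)$ where $n$ is the \emph{number} encoded (and for $L_N$ indeed $\log^{*}n\approx N$). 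The proposition's ``size $N$'' should therefore be read as the represented integer (or its bit-length), not the node count; with that reading your log-star argument goes through once you replace the faulty $O(N)$ block-length claim by the correct ``each counter encodes a value at most the bit-length of its parent''. Your amortized sketch for the constant-on-average part is the right shape; the delicate point you flag, that depth-$d$ cascades occur with frequency on the order of a reciprocal height-$d$ tower, is indeed the crux, and is what the cited paper establishes.
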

\begin{proof}
See \cite{arxiv_cats}.
\end{proof}

\begin{prop}
The operations {\tt s} and {\tt p} implement successor and predecessor
operations such that their results correspond to the same operations
on natural numbers,i.e., the following hold.
\begin{equation}
t(A,X),s(X,Y),B~is~A+1,n(Y,C) \rightarrow B=C
\end{equation}

\begin{equation}
t(A,X),p(X,Y),B~is~A-1,n(Y,C) \rightarrow B=C
\end{equation}
\end{prop}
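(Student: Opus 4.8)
The plan is to reduce both implications to a single statement about the tree representation. Since $t$ and $n$ are mutually inverse bijections between $\N$ and binary trees (by the preceding proposition, itself resting on \texttt{cons} and \texttt{decons} being inverse), it suffices to show: whenever $n(X,a)$ holds, the $Z$ with $s(X,Z)$ satisfies $n(Z,a+1)$, and — when $a>0$ — the $W$ with $p(X,W)$ satisfies $n(W,a-1)$. In other words, as functions on trees, $s = t\circ(\cdot+1)\circ n$ and $p = t\circ(\cdot-1)\circ n$. Along the way one also notes that \texttt{s} and \texttt{p} are total and deterministic on their intended domains (arbitrary trees for \texttt{s}; trees representing positive numbers for \texttt{p}), which is immediate because the guards of the clauses of \texttt{s}, \texttt{s1}, \texttt{p}, \texttt{p1} are exhaustive and mutually exclusive.

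Because each clause of \texttt{s1} can call \texttt{p} and each clause of \texttt{p1} can call \texttt{s}, the two claims cannot be proved in isolation; I would prove them \emph{simultaneously} by strong induction on the number of internal nodes of $X$. The structural point that makes this work is that every recursive call issued by \texttt{s}, \texttt{s1}, \texttt{p} and \texttt{p1} is on a \emph{proper} subtree of $X$ — the pattern matches in \texttt{s1}/\texttt{p1} only ever descend into components of $X$ — so the induction hypothesis covers all of them.

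For the base cases \texttt{s(x,x>x)}, \texttt{s(X>x,X>(x>x))}, \texttt{p(x>x,x)}, \texttt{p(X>(x>x),X>x)} I would simply unfold \texttt{cons}: from its two branches, $n(X>x)=2^{n(X)+1}-1$ and $n(X>(x>x))=2^{n(X)+1}$, so the second is the successor of the first, and symmetrically for \texttt{p}. For the inductive step I would case-split exactly as the code does: first on the parity $P$ returned by \texttt{parity}, which by Prop.~\ref{parity} is $a\bmod 2$, and then on the four subtree shapes handled by \texttt{s1} (respectively \texttt{p1}). The content of each case is the elementary binary-arithmetic fact that adding $1$ to a numeral of the form $\cdots 0\,1^{k}$ gives $\cdots 1\,0^{k}$, that adding $1$ to $\cdots 1\,0^{k}$ with $k\ge 1$ gives $\cdots 1\,0^{k-1}1$, and so on; the tree surgery performed by \texttt{s1}/\texttt{p1} is precisely the run-length-encoded form of these local rewrites, the leaf \texttt{x} absorbing the ``counting from $0$'' offset in block lengths. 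Each case then collapses, after one use of the induction hypothesis on the proper subtree being incremented or decremented, to a \texttt{cons}/\texttt{decons} identity.

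The main obstacle I anticipate is bookkeeping rather than any deep idea: one must keep straight at the same time (i) the off-by-one between a block of $\ell$ equal bits and the subtree whose $n$-value is $\ell-1$ that encodes it, (ii) the alternation of block parities guaranteed by Prop.~\ref{parity}, and (iii) the boundary behaviour when a $\pm1$ step merges or splits the two least significant blocks (exactly the cases in which \texttt{x} appears at, or disappears from, the front). A clean way to tame this is to first prove a lemma stating that $n$ is nothing but the run-length \emph{decoding} of the big-endian expansion $(\ref{bin})$; with that in hand, $\pm1$ becomes a purely local operation on the two lowest blocks and the remaining verification is routine. The finer quantitative points of this analysis are carried out in \cite{arxiv_cats}.
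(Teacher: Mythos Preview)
Your proposal is correct and, in fact, more detailed than what the paper itself provides: the paper's proof of this proposition is simply ``See \cite{arxiv_cats}.'' Your sketch---simultaneous strong induction on tree size for \texttt{s} and \texttt{p} together, case analysis mirroring the clause structure, and reduction to the run-length-encoded view of equation~(\ref{bin})---is the natural way to carry out such a proof and is presumably what the cited reference contains; you even close with the same citation.
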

\begin{proof}
See \cite{arxiv_cats}.
\end{proof}

\BX
{\tt s} and {\tt p} implement arithmetic correctly.
\begin{codex}
?- A=10,t(A,X),s(X,Y),B is A+1,n(Y,C).       
A = 10,X = (x> (x> (x> (x>x)))),Y = ((x>x)> (x> (x>x))),
B = C, C = 11 .
\end{codex}
\EX

Our binary trees can be seen as a model of {\em Peano Arithmetic},
in the same sense as unary or binary arithmetic. Note also,
that while any enumeration would provide  
unary arithmetic, our representation
implements the equivalent (or better) of {\em binary arithmetic}.
We refer to \cite{lata14} and \cite{sacs14tarau} for 
the description of algorithms covering
all the usual arithmetic operations with
equivalent representations working on other members of
the Catalan family and to \cite{arxiv_cats} for
a generic implementation using Haskell type classes.
Hence our X-combinator trees
can provide an implementation of arithmetic operations (including
extension to integers and rational numbers). 
Moreover, they can also
become the target of ranking and unranking
functions that associate unique natural number codes
to various combinatorial objects. In section
\ref{goedel} they will play this role for general
lambda terms.

We refer to \cite{arxiv_cats} for the development
of a complete arithmetic system for the Catalan family 
of combinatorial objects, of
which binary trees are the most well known instance.
\begin{comment}
Given that we can trust {\tt s} and {\tt p}
as tree-based implementations of $\lambda x.x+1$  
and $\lambda x.x-1$,
and for the purpose of illustrating arithmetic operations
we will here simply borrow them via the bijections {\tt n}
and {\tt t} between natural numbers and binary trees as
shown below.
\begin{code}
borrow(Op,X,Y,R):-n(X,A),n(Y,B),
  ArithOp=..[Op,A,B],C is ArithOp,t(C,R).
  
add(X,Y,R):-borrow((+),X,Y,R).
sub(X,Y,R):-borrow((-),X,Y,R).
\end{code}

\BX
Arithmetic operations on X-combinator trees.
\begin{codex}
?- t(3,X),t(4,Y),add(X,Y,R),sub(R,X,Z).
X = ((x>x)>x),Y = Z, Z = ((x>x)> (x>x)),
R = ((x> (x>x))>x) .
\end{codex}
\EX
\end{comment}

\subsection{A size-proportionate G\"odel-numbering bijection for lambda terms}\label{goedel}

We are finally ready to define our simple, linear time,
size-proportionate bijection between tree-represented natural
numbers and general lambda terms in de Bruijn notation.

\subsubsection{Ranking and unranking de Bruijn terms to binary-tree represented natural numbers}\label{binrank}

The predicate {\tt rank} defines
a bijection from lambda expressions in de Bruijn notation
to binary trees, seen here as implementing natural numbers.
Variables {\tt v/1} are represented as trees with the left {\tt x}
as their left branch, lambdas {\tt l/1} as trees with {\tt x} as their
right branch. To avoid ambiguity, ranks for application nodes will be 
incremented by one using the successor predicate {\tt s/2}.
\begin{code}
rank(v(0),x).
rank(l(A),x>T):-rank(A,T).
rank(v(K),T>x):-K>0,t(K,T).
rank(a(A,B),X1>Y1):-
  rank(A,X),s(X,X1),
  rank(B,Y),s(Y,Y1).
\end{code}
The predicate {\tt unrank} defines the inverse bijection
from binary trees, seen as natural numbers, to
lambda expressions in de Bruijn notation. It works by
case analysis on trees with branches marked with {\tt x}
and decrements branches using predicate 
{\tt p/2} to ensure it inverts the action of {\tt
rank} on application nodes. Note also that both predicates
use the bijections {\tt t} and respective {\tt n}
to convert between tree-based naturals and their standard
natural number equivalents.
\begin{code}
unrank(x,v(0)).
unrank(x>T,l(A)):-!,unrank(T,A).
unrank(T>x,v(N)):-!,n(T,N).
unrank(X>Y,a(A,B)):-
  p(X,X1),unrank(X1,A),
  p(Y,Y1),unrank(Y1,B).
\end{code}
\begin{prop}
Assuming variable indices are small (word-size) integers,
{\tt rank} and {\tt unrank} define a size-proportionate bijection between
lambda terms in de Bruijn form and X-combinator trees. 
Their runtime is proportional
to the size of their input.
\end{prop}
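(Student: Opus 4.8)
The plan is to separate the statement into its bijectivity part, which carries the substance, and its size-and-runtime part, which is bookkeeping once the bijection is understood.

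For bijectivity I would first show that \texttt{rank} maps de Bruijn terms \emph{onto} all binary trees, then check that \texttt{unrank} is its two-sided inverse. The key point is that \texttt{rank} records the top constructor of a de Bruijn term in the \emph{shape} of the resulting tree, using three mutually exclusive tags: \texttt{v(0)} becomes the leaf \texttt{x}; \texttt{l(A)} becomes a node whose \emph{left} subtree is literally \texttt{x}; \texttt{v(K)} with $K>0$ becomes \texttt{t(K)>x}, whose \emph{right} subtree is \texttt{x} while its left subtree \texttt{t(K)} is \emph{not} \texttt{x}, since \texttt{t} and \texttt{n} are mutually inverse so $\mathtt{n}(\mathtt{t}(K))=K>0$; and \texttt{a(A,B)} becomes \texttt{s(rank(A))>s(rank(B))}, whose two subtrees are \emph{both} non-leaf because $\mathtt{n}(\mathtt{s}(T))=\mathtt{n}(T)+1\ge 1$, so \texttt{s} never returns \texttt{x}. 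A one-line case split on whether each child of an internal node equals \texttt{x} shows these four images are pairwise disjoint and together exhaust the binary trees; injectivity of \texttt{rank} within each of them reduces to the structural induction hypothesis together with injectivity of \texttt{t} and of \texttt{s}. Hence \texttt{rank} is a bijection, and the very same case split is what \texttt{unrank} performs by clause order and cuts --- in particular \texttt{x>x} is decoded as \texttt{l(v(0))}, not \texttt{v(1)}, in agreement with \texttt{rank}.

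Next I would establish \texttt{unrank(rank(D))=D} by structural induction on \texttt{D}, reading the four cases directly off the clause bodies; the only case needing care is the application node, where one uses the earlier proposition that \texttt{p} inverts \texttt{s} (so \texttt{p(s(rank(A)))} simplifies to \texttt{rank(A)} before the recursive call) together with the fact, already noted, that both subtrees built for an \texttt{a}-node are non-leaf, forcing \texttt{unrank} into its fourth clause. Since \texttt{unrank} is total --- its clauses are exhaustive and its recursion strictly decreases the natural-number value $\mathtt{n}(\cdot)$ of its argument, hence terminates --- and \texttt{rank} is a bijection having \texttt{unrank} as a left inverse, we conclude \texttt{unrank}$=\mathtt{rank}^{-1}$.

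For the quantitative claims I would track the expansion of \texttt{rank} constructor by constructor: an \texttt{l}-node costs one internal node, an \texttt{a}-node costs one internal node plus two applications of \texttt{s}, a leaf \texttt{v(0)} costs one leaf, and a leaf \texttt{v(K)} costs the tree \texttt{t(K)}, of size $O(\log K)=O(1)$ under the word-size hypothesis on variable indices. By the earlier analysis of binary-tree arithmetic, \texttt{t(K)} and each call to \texttt{s} run in near-constant time and perturb the surrounding tree by only $O(\log^{*}(\cdot))$ nodes, so summing over the $O(|\mathtt{D}|)$ nodes of \texttt{D} shows that the size of \texttt{rank(D)} agrees with $|\mathtt{D}|$ up to the logarithmic slack of the definition of size-proportionateness given above, and that \texttt{rank} --- and symmetrically \texttt{unrank}, using \texttt{p} and \texttt{n} --- runs in time linear in the size of its input. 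The main obstacle here is not the inductions but making the partition argument airtight: everything hinges on the two small facts that \texttt{t(K)} for $K>0$ and \texttt{s(T)} for every \texttt{T} are never the bare leaf \texttt{x} --- which is exactly what makes the ``left child \texttt{x}'', ``right child \texttt{x}'', and ``both children non-leaf'' tags genuinely exclusive --- and on checking that \texttt{unrank}'s cuts realize that exclusivity in the correct order; a secondary point, affecting the size bound only, is confirming that the accumulated $O(\log^{*})$ overhead of the per-\texttt{a}-node successor and predecessor calls stays within the slack the definition permits, which is where one leans on the amortized-constant behaviour of \texttt{s} and \texttt{p}.
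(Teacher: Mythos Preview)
Your proposal is correct and considerably more thorough than the paper's own proof. The paper's argument is a two-sentence sketch that addresses only the runtime claim: it observes that under the word-size assumption \texttt{t} and \texttt{n} run in constant time, and that each recursive step of \texttt{rank}/\texttt{unrank} costs at most one call to \texttt{s} or \texttt{p}, summing to time proportional to the number of internal nodes. Bijectivity and the size bound are not argued at all in the paper; they are simply asserted in the proposition statement.

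Your partition argument---tagging the four constructor cases by whether the left and right children equal the leaf \texttt{x}, and verifying that \texttt{t(K)} for $K>0$ and \texttt{s(T)} for any \texttt{T} are never \texttt{x}---is exactly the missing ingredient, and your check that \texttt{unrank}'s clause order and cuts realize that partition (in particular that \texttt{x>x} is decoded as \texttt{l(v(0))} rather than \texttt{v(1)}) is the one place a careless reader could stumble. So your approach subsumes the paper's: the same per-step cost accounting for runtime, but with bijectivity and size-proportionateness actually established rather than left implicit.
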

\begin{proof}
If variable indices are fixed sized small integers, one can assume that
{\tt t} and {\tt n} work in constant time.
Then, observe that each step of both predicates works in time proportional to {\tt s} or
{\tt p} for a total proportional to the number of internal nodes.
\end{proof}
As an interesting variation, for very large terms,
one could actually {\em use binary tree-based natural numbers
for the indices of {\tt v/1} in de Bruijn terms},
and completely bypass the use of {\tt t} and {\tt n}, and
thus lifting the assumption about variable indices being fixed size integers.
\BX
Ranking and unranking of K and S combinators in de Bruijn form.
\begin{codex}
?- kB(K),rank(K,B),unrank(B,K1).
K = K1, K1 = l(l(v(1))),
B = (x> (x> ((x>x)>x))) .

?- sB(S),rank(S,B),unrank(B,S1).
S = S1, S1 = l(l(l(a(a(v(2), v(0)), a(v(1), v(0)))))),
B = (x> (x> (x> ((x> (((x> (x>x))>x)> (x>x)))> (x> (((x>x)>x)> (x>x))))))) .
\end{codex}
\EX

% use it to explore things like closed, well-typed etc
% for general lambda terms
% define an automorphism on general lambda
% terms, that bring them via ranking and then back via t2b

\section{Playing with the playground} \label{play}

The following quote from Donald Knuth, in answering a question of Frank Ruskey about
the short term economics behind research ({\small \url{http://www.informit.com/articles/article.aspx?p=2213858}})
 and prominently displayed at Mayer Goldberg's
home page at { \url{http://www.little-lisper.org/website/}},
summarizes our motivation behind building this declarative playground:
\begin{quote}
Everybody seems to understand that astronomers do astronomy because astronomy is interesting. Why don't they understand that I do computer science because computer science is interesting? 
\end{quote}
This being said, we will sketch here a few use cases, some of possible practical significance.

\subsection{Querying a generator for specific types} \label{tquery}

Coming with Prolog's unification and non-deterministic search,
is the ability to make more specific queries
by providing a type pattern, that selects only terms
that match it, while generating terms and inferring their types.

The predicate {\tt queryTypedTerm} finds
closed terms of a given type of
size exactly {\tt L}.
\begin{code}
queryTypedTerm(L,QueryType,Term):-
  genTypedTerm(L,Term,QueryType),
  boundTypeOf(Term,QueryType).
\end{code}
Similarly, the predicate {\tt queryTypedTerm} finds
closed terms of a given type of
size {\tt L} or less.
\begin{code}  
queryTypedTerms(L,QueryType,Term):-
  genTypedTerms(L,Term,QueryType),
  boundTypeOf(Term,QueryType).
\end{code}
Note that giving the query type ahead of executing
{\tt genTypedTerm} would unify with more general
``false positives'', as type checking, contrary 
to type synthesis, proceeds bottom-up.
This justifies filtering out the false positives 
simply by testing with the deterministic predicate
{\tt boundTypeOf} at the end. Despite
the extra call to {\tt boundTypeOf}, the
performance improvements are significant, as shown in
Figure \ref{perf}. The figure also shows that when the
slow generate-and-test predicate {\tt genTypedTerm1} is used,
the result (in ``logical-inferences-per-second'') does
not depend on the pattern, contrary to the fast {\tt
queryTypedTerm} that prunes mismatching types while
inferring the type of the  terms as it generates them.

\begin{figure}
\begin{center}
\begin{tabular}{|r||r|r|r|c|c||}
\cline{1-6}
\cline{1-6}
  \multicolumn{1}{|c||}{{Size}} & 
  \multicolumn{1}{c|}{{ {\small Slow \verb~x>x~}}} &
  \multicolumn{1}{c|}{{ {\small Slow \verb~x>(x>x)~}}} &
  \multicolumn{1}{c|}{{ {\small Fast \verb~x>x~}}} &
  \multicolumn{1}{c|}{{ {\small Fast \verb~x>(x>x)~}}} &
  \multicolumn{1}{c||}{{ {\small Fast \verb~x~}}} \\
\cline{1-6} 
\cline{1-6}

1 & 39& 39& 38 &27&15       \\ \cline{1-6}
2 & 126& 126& 60&109&36        \\ \cline{1-6}
3 &  552& 552&240&200&88        \\ \cline{1-6}
4 &  3,108&3,108& 634& 1,063& 290     \\ \cline{1-6}
5 &  21,840& 21,840& 3,213&3,001& 1,039   \\ \cline{1-6}
6 &  181,566&181,566& 12,721&19,598&4,762    \\ \cline{1-6}
7 &  1,724,131& 1,724,131&76,473&81,290& 23,142 \\ \cline{1-6}
8 &  18,307,585& 18,307,585&407,639&584,226&  133,554\\ \cline{1-6}
9 &  213,940,146& 213,940,146&2,809,853&3,254,363 &812,730\\ \cline{1-6}

\end{tabular} \\
\medskip
\caption{Number of logical inferences as counted by SWI-Prolog for our algorithms when querying generators
with type patterns given in advance
\label{perf}}
\end{center}
\end{figure}

\BX 
Terms of type {\tt x>x} of size 4.
\begin{codex}
?- queryTypedTerm(3,(x>x),Term). 
Term = a(l(v(0)), l(v(0))) ;
Term = l(a(l(v(0)), v(0))) ;
Term = l(a(l(v(1)), v(0))) .

?- queryTypedTerms(12,(x>x)>x,T). 
false.
\end{codex}
\EX
Note that the last query, taking about a minute,
shows that no  closed terms
of type {\tt (x>x)>x}
exist up to size 12. In fact, it is known that no such terms exist, as 
the corresponding logic formula is not a tautology in minimal logic.

\subsection{Same-type siblings}
Given a closed well-typed
lambda term, we can ask what other terms of the same
size or smaller share the same type. This can be interesting for
finding possibly alternative implementations of a given function
or for generation of similar siblings in genetic programming.

The predicate {\tt typeSiblingOf} lists all the terms of the same or smaller 
size having the same type as a given term.
\begin{code}
typeSiblingOf(Term,Sibling):-
  dbTermSize(Term,L),
  boundTypeOf(Term,Type),
  queryTypedTerms(L,Type,Sibling).
\end{code}

\BX
\begin{codex}
?- typeSiblingOf(l(l(a(v(0),a(v(0),v(1))))),T).
T = l(l(a(v(0), v(1)))) ; % <= smaller sibling
T = l(l(a(v(0), a(v(0), v(1))))) .
\end{codex}
\EX

\subsection{Discovering  frequently occurring type patterns}\label{pats}

The ability to run ``relational queries'' about terms and their types
extends to compute interesting statistics, giving a glimpse at
their distribution.

\subsubsection{The ``Popular'' type patterns}

As types can be seen as an approximation of their inhabitants,
we expect them to be shared among distinct terms. As we can enumerate
all the terms for small sizes and infer their types, we would like to know
what are the most frequently occurring ones. This can be meaningful
as a comparison base for types that are used in human-written programs
of comparable size. In approaches like \cite{palka11}, where types
are used to direct the generation of random terms, focusing on
the most frequent types might help with generation of more realistic
random tests.

Figure \ref{ratio} describes counts for terms and their types for small sizes. It also
shows the first two most frequent types with the count of terms they apply to.

\begin{codeh}  
countForType(GivenType,M,Rs):-
  findall(R,
    (
       between(1,M,L),sols(queryTypedTerm(L,GivenType,_B),R)
    ),
  Rs).  
\end{codeh}

\begin{figure}
\begin{center}
\begin{tabular}{|r||r|r|r|c|c|}
\cline{1-6}
\cline{1-6}
\multicolumn{1}{|c|}{{Term size}} & 
  \multicolumn{1}{c|}{{ Types}} &
  \multicolumn{1}{c|}{{ Terms}} &
  \multicolumn{1}{c|}{{ Ratio}} &
  \multicolumn{1}{c|}{{ 1-st frequent}} &
 \multicolumn{1}{c|}{{ 2-nd frequent}} \\
\cline{1-6} 
\cline{1-6}

1&1&1&1.0& 1: \verb~x>x~ & \\ \cline{1-6}
2&1&2&0.5&  2: \verb~x>(x>x)~ & \\ \cline{1-6}
3 & 5 & 9 & 0.555  &  3: \verb~x>(x>(x>x))~  &  3: \verb~x>x~ \\ \cline{1-6}
4 & 16 & 40 & 0.4 & 14: \verb~x>(x>x)~  & 4: \verb~x>x>(x>(x>x))~  \\ \cline{1-6}
5 & 55 & 238 & 0.231  & 38: \verb~x>(x>(x>x))~ & 31: \verb~x>x~  \\ \cline{1-6}
6 & 235 & 1564 & 0.150  & 201: \verb~x>(x>x)~ & 80: \verb~x>x>(x>(x>x))~  \\ \cline{1-6}
7 & 1102 & 11807 & 0.093  & 732: \verb~x>(x>(x>x))~ & 596: \verb~x>x~  \\ \cline{1-6}
8 & 5757 & 98529 & 0.058  & 4632: \verb~x>(x>x)~ & 2500: \verb~x>x~  \\ \cline{1-6}
9 &  33251 &  904318 &  0.036  &  20214: \verb~x>(x>(x>x))~ &  19855: \verb~(x>x)>(x>x)~  \\ \cline{1-6}

\end{tabular} \\
\medskip
\caption{Counts for terms and types for sizes 1 to 9 and the first two most frequent types
\label{ratio}}
\end{center}
\end{figure}

Figure \ref{freq} shows the ``most popular types'' for the about 1 million
closed well-typed terms up to size 9 and the count of their inhabitants.

\begin{figure}
\begin{center}
\begin{tabular}{|r|l|}
\cline{1-2}
\cline{1-2}
\multicolumn{1}{|c|}{{Count}} & \multicolumn{1}{c|}{{Type}}\\
\cline{1-2} 
\cline{1-2}
\cline{1-2}
23095 & \verb~x>(x>x)~ \\ \cline{1-2}
22811 & \verb~(x>x)>(x>x)~ \\ \cline{1-2}
22514 & \verb~x>x>(x>x)~ \\ \cline{1-2}
21686 & \verb~x>x~ \\ \cline{1-2}
18271 & \verb~x> ((x>x)>x)~ \\ \cline{1-2}
14159 & \verb~(x>x)>(x>(x>x))~ \\ \cline{1-2}
13254 & \verb~((x>x)>x)> ((x>x)>x)~ \\ \cline{1-2}
12921 & \verb~x> (x>x)>(x>x)~ \\ \cline{1-2}
11541 & \verb~(x>x)> ((x>x)>x)>x~ \\ \cline{1-2}
10919 & \verb~(x>(x>x))>(x>(x>x))~ \\ \cline{1-2}
\end{tabular} \\
\medskip
\caption{Most frequent types, out of a total of {\tt 33972} distinct types, of {\tt 1016508} terms up to size 9.
\label{freq}}
\end{center}
\end{figure}
We can observe that, like in some human-written programs, 
functions representing 
binary operations of type \verb~x>(x>x)~ are the most popular.
Ternary operations \verb~x>(x>(x>x))~ come third and unary operations \verb~x>x~ come fourth.
Somewhat surprisingly, a higher order function type
\verb~(x>x)>(x>x)~ applying a function to an argument
to return a result comes second and multi-argument variants of
it are also among the top 10.

\subsubsection{Growth sequences of some popular types}
We can make use of our generator's  efficient
specialization to a given type to explore
 empirical estimates for some types interesting to
 human  programmers.

Contrary to the total absence of the type {\tt (x>x)>x} among
terms of size up to 12, ``binary operations'' of type
{\tt x>(x>x)} turn 
out to be quite frequent, giving, by increasing sizes,
the sequence                    
[0, 2, 0, 14, 12, 201, 445, 4632, 17789, 158271, 891635].

{\em Transformers} of type
{\tt x>x}, by increasing sizes, give
the  sequence [1, 0, 3, 3, 31, 78, 596, 2500, 18474, 110265].
While type {\tt (x>x)>x} turns our to be absent up to size 12,
the type {\tt (x>x)>(x>x)}, describing {\em transformers of transformers}
turns out to be quite popular, as shown by the sequence
[0, 0, 1, 1, 18, 52, 503, 2381, 19855, 125599].
The same turns out to be true also for
{\tt (x>x)>((x>x)>(x>x))}, giving
[0, 0, 0, 0, 2, 6, 96, 505, 5287, 36769]
and {\tt ((x>x)>(x>x)) > ((x>x)>(x>x))} giving
[0, 0, 0, 0, 0, 6, 23, 432, 2450, 29924].
One might speculate that homotopy type theory \cite{htypes},
that focuses on such transformations 
and transformations of transformations etc.
has a rich population of lambda terms from which
to chose interesting inhabitants of such types!

Another interface, generating closed simply-typed terms
of a given size, restricted to have at most a given number of
free de Bruijn indices, is implemented by the predicate
{\tt genTypedWithSomeFree}.
% ok
\begin{code}
genTypedWithSomeFree(Size,NbFree,B,T):-
   between(0,NbFree,NbVs),
   length(FreeVs,NbVs),
   genTypedTerm(B,T,FreeVs,Size,0),
   bindTypeB(T).
\end{code}
The first 9 numbers counting closed simply-typed terms with at most
one free variable (not yet in \cite{intseq}), 
are [3, 10, 45, 256, 1688, 12671, 105743, 969032, 9639606].

Note that, as our generator performs the early pruning of untypable terms,
rather than as a post-processing step, enumeration and counting of these 
terms happens in a few seconds.

\begin{codeh} 
typeCountsFor(L,T:Count):-
  setof(X,genTypedTerm(L,X,T),Xs),
  length(Xs,Count).
  
typeCountsUpTo(L,T:Count):-
  setof(X,genTypedTerms(L,X,T),Xs),
  length(Xs,Count).
  
popularTypesFor(L,K,[Difs,Sols,Ratio],PopularTs):-
  sols(genTypedTerm(L,_,_),Sols),
  setof(Count-T,typeCountsFor(L,T:Count),KTs),
  reverse(KTs,Sorted),
  take_(K,Sorted,PopularTs),
  length(KTs,Difs),
  Ratio is Difs/Sols.

popularTypesUpTo(L,K,[Difs,Sols,Ratio],PopularTs):-
  sols(genTypedTerms(L,_,_),Sols),
  setof(Count-T,typeCountsUpTo(L,T:Count),KTs),
  reverse(KTs,Sorted),
  take_(K,Sorted,PopularTs),
  length(KTs,Difs),
  Ratio is Difs/Sols.
  
take_(0,_Xs,[]):-!.
take_(_,[],[]):-!.
take_(N,[X|Xs],[X|Ys]):-N>0,N1 is N-1,take_(N1,Xs,Ys).

\end{codeh}

% closed typable A220471 1, 2, 9, 40, 238, 1564
% A220895	number of closed lambda-terms of size n 
% with at most 1 free: 1, 3, 13, 76, 542, 4493, 42131, 439031, 5020105

\subsection{Generating closed typable lambda terms by types}\label{typedir}

In \cite{palka11}  a ``type-directed'' mechanism for
the generation of random terms is introduced,
resulting in more realistic (while not uniformly random) terms,
used successfully in discovering some GHC bugs.

We can organize in a similar way the interface of
our combined generator and type inferrer.

\subsubsection{Generating type trees}

The predicate {\tt genType} generates binary trees representing simple
types with a single base type {\tt ``x''}. As we represent types as
binary trees with leaves {\tt x} and internal nodes \verb~>~ we
can reuse the predicate {\tt genTree}.
% ok
\begin{code}
genType --> genTree.
genTypes --> genTrees.
\end{code}
Like {\tt genTree},
it provides two interfaces, for generating types of exactly size N or
up to size N.

Next, we will combine this type generator with the generator
that efficiently produces terms matching each type pattern.

\subsubsection{Generating lambda terms by increasing type sizes}

The predicate {\tt genByType} first generates types
(seen simply as binary trees) with {\tt genType}
and then uses the unification-based querying mechanism
to generate all closed well-typed  de Bruijn terms
 with fewer internal nodes then 
their binary tree type.
\begin{code}
genByType(L,B,T):-
  genType(L,T),
  queryTypedTerms(L,T,B).
\end{code}

\BX
Enumeration of closed simply-typed de Bruijn terms
with types of size 3 and terms of a given type
with at most 3 internal nodes.
\begin{codex}
?- genByType(3,B,T).
B = l(l(l(v(0)))),
T = (x> (x> (x>x))) ;
B = l(l(l(v(1)))),
T = (x> (x> (x>x))) ;
B = l(l(l(v(2)))),
T = (x> (x> (x>x))) ;
B = l(l(a(v(0), v(1)))),
T = (x> ((x>x)>x)) ;
B = l(l(a(v(1), v(0)))),
T = ((x>x)> (x>x)) ;
B = l(a(v(0), l(v(0)))),
T = (((x>x)>x)>x) .
\end{codex}
\EX
Given that various constraints are naturally interleaved by our generator
we obtain in a few seconds the sequence counting these terms having types
up to size 8, [1, 2, 6, 18, 84, 376, 2344, 15327].
Intuitively this means that despite of their growing sizes, types
have an increasingly large number of inhabitants of sizes 
smaller than their size. This is somewhat contrary to what
we see in human-written code, where types are almost always
simpler and smaller than the programs inhabiting them.

% ok
\begin{codeh}
countByType(M,Rs):-
  findall(R,
    (
       between(1,M,L),sols(genByType(L,_B,_T),R)
    ),
  Rs).
\end{codeh}

\subsubsection{Generation of random lambda terms}
Generation of random lambda terms, resulting from the unranking
of random integers of a give bit-size, is implemented by
the predicate {\tt ranTerm/3}, that applies the predicate {\tt Filter}
repeatedly until a term is found for which the predicate {\tt Filter} holds.
\begin{code}
ranTerm(Filter,Bits,T):-X is 2^Bits,N is X+random(X),M is N+X,
  between(N,M,I),
   unrankTerm(I,T),call(Filter,T),
  !.
\end{code}
Random open terms are generated by {\tt ranOpen/2}, random closed
terms by the predicate {\tt ranClosed}, random typable term by
{\tt ranTyped} and closed typable terms
by {\tt closedTypable/2}.
\begin{code}  
ranOpen(Bits,T):-ranTerm(=(_),Bits,T).  

ranClosed(Bits,T):-ranTerm(isClosedC,Bits,T).

ranTyped(Bits,T):-ranTerm(closedTypable,Bits,T).

closedTypable(T):-isClosedC(T),typable(T).
\end{code}
Open terms based on unranking random numbers of 3000 bits of size above 1000, 
closed terms of size above 55 for 150 bits  and closed 
typable terms of size above 13
for 30 bits can be generated within a few seconds. The limited scalability
for closed and well-typed terms is a consequence of their low asymptotic 
density, as shown in \cite{ranlamb09,grygielGen}. We refer to \cite{grygielGen} 
for algorithms supporting random generation of large lambda terms.
\BX
illustrates generation of some closed and well-typed terms
in compressed de Bruijn form.
\begin{codex}
?- ranClosed(10,T).
T = a(1, a(0, v(0, 0), v(0, 0)), a(0, a(0, v(0, 0), v(0, 0)), v(1, 0))).

?- ranTyped(20,T).
T = a(3, v(3, 1), v(2, 0)).
\end{codex}
\EX

\subsection{Estimating the proportion of well-typed SK-combinator trees}
Given the low density of closed well-typed lambda terms,
an interesting question arises at this point:  {\em
what proportion of SK-combinator trees of a given size are well-typed}?
While the analytic study of the 
{\em asymptotic density} 
has been successfully performed on several families of lambda terms
\cite{bodini11,normalizing13,BCI13,grygielGen},
it is considered an open 
problem for well-typed terms. We will limit ourselves here to empirically estimate it,
as it is done in \cite{grygielGen} for general lambda terms, where experiments indicate
the extreme sparsity for very large terms.

We can use our  generator {\tt genSK} to enumerate SK-combinator trees
among which we can then count the number of
well-typed ones.
\BX
Types inferred for terms with 2 internal nodes.
\begin{codex}
?- genSK(1,X),simpleTypeOf(X,T).                                                            
X = k*k,
T = (x> (x> (x>x))) ;
X = k*s,
T = (x> ((x> (x>x))> ((x>x)> (x>x)))) ;
X = s*k,
T = ((x>x)> (x>x)) ;
X = s*s,
T = (((x> (x>x))> (x>x))> ((x> (x>x))> (x>x))) .
\end{codex}
\EX

Similarly, we can use it also to enumerate untypable terms.
\BX
The smallest two untypable SK-expressions.
\begin{codex}
?- genSKs(2,X), \+typableSK(X).
X = s*s*k ;
X = s*s*s .
\end{codex}
\EX

We can implement a generator for well-typed SK-trees,
to be used to compute the ratio between the number of well-typed SK-trees and 
the total number of SK-trees of size $n$, as well as one for the
untypable SK-trees.
\begin{code}
genTypedSK(L,X,T):-genSK(L,X),simpleTypeOf(X,T).

genUntypableSK(L,X):-genSK(L,X),\+skTypeOf(X,_).
\end{code}

To compute the proportion of well-typed terms among terms of a given size
we will also need to count the number of SK-trees with $n$
internal nodes.

\begin{prop}
There are $2^{n+1}C_n$ SK-trees with $n$ nodes, where $C_n$ is the $n$-th Catalan number.
\end{prop}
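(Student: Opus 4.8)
The plan is to count the SK-trees with $n$ internal nodes by separating the two independent sources of choice: the shape of the binary tree, and the labeling of its leaves. First I would recall, from the discussion of {\tt genTree} in subsection \ref{gentree}, that an SK-combinator tree is exactly a binary tree whose internal nodes are application nodes (constructor ``\verb~*~'') and whose leaves are labeled with one of the two symbols $S$ or $K$. Such a tree with $n$ internal nodes has precisely $n+1$ leaves --- this is the standard fact about full binary trees, provable by an easy induction on $n$ (the base case $n=0$ is a single leaf; adding an internal node with two subtrees of sizes $i$ and $n-1-i$ replaces one leaf by two, so the leaf count increases by exactly one each time a node is added).

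Next I would count the underlying shapes. The number of binary trees with $n$ internal nodes is the $n$-th Catalan number $C_n$; this is exactly what the predicate {\tt genTree/2} enumerates, and it is entry {\tt A000108} of \cite{intseq}, so I may take it as known. For each such shape there are $n+1$ leaf positions, and each may independently be filled with $S$ or $K$, giving $2^{n+1}$ distinct labelings per shape. Since distinct shapes or distinct labelings yield distinct SK-trees (the tree together with its leaf labels is just the term), the total is obtained by multiplication: $2^{n+1}C_n$.

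The only slightly delicate point --- and the step I would be most careful about --- is justifying that the ``shape'' and ``labeling'' choices are genuinely independent and that no two choices collapse to the same term. This is immediate once one observes that the map sending an SK-tree to the pair (its shape, its sequence of leaf labels read left to right) is a bijection onto $\{\text{binary trees with }n\text{ internal nodes}\}\times\{S,K\}^{n+1}$: it is injective because a term is determined by its shape and its leaves, and surjective because any shape can be decorated by any label sequence of the right length. Hence the count factors as the product of the two cardinalities, $C_n\cdot 2^{n+1}$, which is the claimed $2^{n+1}C_n$.
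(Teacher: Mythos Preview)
Your proof is correct and follows essentially the same approach as the paper: factor an SK-tree into its underlying binary-tree shape (counted by $C_n$) and the independent $\{S,K\}$-labeling of its $n+1$ leaves (contributing $2^{n+1}$). The paper's own proof is a one-sentence version of your argument; your added remarks on the leaf-count induction and the bijection's injectivity/surjectivity are sound elaborations but not required.
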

\begin{proof}
If follows from the fact that $C_n$ counts the number of binary trees with $n$ internal nodes,
each of which has $n+1$ leaves, each of which can be either $S$ or $K$.
\end{proof}

The predicate {\tt cat/2} computes the nth-Catalan number efficiently using the recurrence
$C_0=1, C_{n}={{2(2n-1)}\over{n+1}}C_{n-1}$ \cite{StanleyEC}.
% ok
\begin{code}
cat(0,1).
cat(N,R):-N>0, 
  PN is N-1,
  cat(PN,R1),
  R is 2*(2*N-1)*R1//(N+1).
\end{code}

\begin{codeh}  
countTyped(L,Typed,SKs,Prop):-
  sols(genTyped(L,_,_),Typed),
  cat(L,Cats),SKs is 2^(L+1)*Cats,
  Prop is Typed/SKs.
  
tcounts:-
  between(0,9,I),countTyped(I,Typed,All,Prop),
  write((I,&,Typed,&,All,&,Prop)),nl,fail.
\end{codeh}

Figure \ref{ratioSK} shows the counts
for well-typed SK-combinator expressions
and their ratio to the total number of SK-trees
of given size.

\begin{figure}
\begin{center}
\begin{tabular}{||r||r|r|r|}
\cline{1-4}
\cline{1-4}
\multicolumn{1}{||c||}{{Term size}} & 
  \multicolumn{1}{c|}{{ Well-typed}} &
  \multicolumn{1}{c|}{{ Total}} &
  \multicolumn{1}{c|}{{ Ratio}}\\
\cline{1-4} 
\cline{1-4}
0&2& 2&1 \\ \cline{1-4}
1&4& 4&1 \\ \cline{1-4}
2&14& 16&0.875 \\ \cline{1-4}
3&67& 80&0.8375 \\ \cline{1-4}
4&337& 448&0.752 \\ \cline{1-4}
5&1867& 2688&0.694 \\ \cline{1-4}
6&10699& 16896&0.633 \\ \cline{1-4}
7&63567& 109824&0.578 \\ \cline{1-4}
8&387080& 732160&0.528 \\ \cline{1-4}
9&2401657& 4978688&0.482 \\ \cline{1-4}
\end{tabular} \\
\medskip
\caption{Proportion of well-typed SK-combinator terms
\label{ratioSK}}
\end{center}
\end{figure}
Somewhat surprisingly, a large
proportion of well-typed SK-combinator
terms is present among the
binary trees of a given size,
indicating the possible existence 
of a lower bound that might be
easier to determine analytically
than in the case of general lambda terms.

\subsubsection{Generating typed SK-combinator trees by types}

In \cite{palka11} generation of random terms is guided by their types,
resulting in more realistic (while not uniformly random) terms,
used successfully in discovering some GHC bugs.

\subsubsection{Generating SK-trees by increasing type sizes}

The predicate {\tt genByType} first generates simple types
 with {\tt genType}
and then uses the unification-based querying mechanism
to generate, for each of the types,
its inhabitant SK-trees with fewer internal nodes then 
their their type.
\begin{code}
genByTypeSK(L,X,T):-
  genType(L,T),
  genSKs(L,X),
  simpleTypeOf(X,T).
\end{code}
The number of such terms grows quite fast, the sequence describing the number of terms with sizes smaller or equal than
the size of their types up to 7 is {\tt 0, 3, 29, 250, 3381, 48968, 809092}.

% ok
\begin{codeh}
tsizeAll(V,R):-var(V),!,R=0.
tsizeAll(A,0):-atomic(A),!.
tsizeAll((X*Y),R):-tsizeAll(X,R1),tsizeAll(Y,R2),R is 1+R1+R2. 
tsizeAll((X>Y),R):-tsizeAll(X,R1),tsizeAll(Y,R2),R is 1+R1+R2. 

queryByTypeSK(L,X,T):-queryByType(X,T,L,0),simpleTypeOf(X,T).

queryByTypeSKs(L,X,T):-queryByType(X,T,L,_),simpleTypeOf(X,T).

queryByType(k,(A>(_B>A)))-->[]. 
queryByType(s,(((A>(B>C))> ((A>B)>(A>C)))))-->[].
queryByType((A*B),Y)-->
  down,
  queryByType(A,T),
  queryByType(B,X),
  {unify_with_occurs_check(T,(X>Y))}.

xgenByTypeSK(L,X,T):-
  genType(L,T),
  queryByTypeSKs(L,X,T).  
\end{codeh}
\BX
Enumeration of closed simply-typed de SK combinator
trees
with types of size 2 and less then 2 internal nodes.
\begin{codex}
?- genByTypeSK(2,B,T).
B = k,
T = (x> (x>x)) ;
B = k*k*k,
T = (x> (x>x)) ;
B = k*k*s,
T = (x> (x>x)) .
\end{codex}
\EX

\begin{codeh}
countByTypeSK(M,Rs):-
  findall(R,
    (
       between(1,M,L),sols(genByTypeSK(L,_B,_T),R)
    ),
  Rs).
\end{codeh}

\subsection{The well-typed frontier of an untypable SK-expression}\label{wtf}
As in the case of 
lambda terms,
%\footnote{If we may indulge into some
%``out of the left field'' analogies,
%the same goes for composite natural numbers, 
%transcendent numbers on the real axis and
%Kolmogorov-uncompressible strings, while not forgetting
%the presumed dark matter in the Universe.},
untypable SK-expressions 
become the majority as soon as the size of the expression
reaches some threshold, 9 in this case. 
This actually turns out to be a good thing,
from a programmer's perspective: types help with bug-avoidance
partly because being ``accidentally well-typed''
becomes a low probability event for larger programs.

Driven by a curiosity somewhat similar to that about
distribution and density properties of prime
numbers, one would want to
decompose an untypable SK-expression
into a set of maximal typable ones.
This makes sense, as, contrary
to lambda expressions, SK-trees
are uniquely built with application
operations as their internal nodes.
\begin{df}
We call {\em well-typed frontier} of a combinator tree 
set of its maximal well-typed subtrees.
\end{df}
Note also, that contrary to general lambda terms, 
SK-terms are {\em hereditarily closed} i.e.,
every subterm of a SK-expression is closed. 
Consequently, the well-typed
frontier is made of closed terms.
\begin{df}
We call {\em typeless trunk} of a combinator tree the subtree
starting from the root from which the members of
its well-typed frontier have been removed and replaced
with logic variables.
\end{df}

\subsubsection{Computing the well-typed frontier}

The 
well-typed frontier
of a combinator tree and its typeless trunk are
computed together by the predicate The predicate {\tt wellTypedFrontier} .
It actually proceeds by separating the trunk from
the frontier and marking with fresh logic variables
the replaced subtrees. These variables are  added as
left sides of equations with the frontiers as
their right sides.
\begin{code}
wellTypedFrontier(Term,Trunk,FrontierEqs):-
  wtf(Term, Trunk,FrontierEqs,[]).

wtf(Term,X)-->{typableSK(Term)},!,[X=Term].
wtf(A*B,X*Y)-->wtf(A,X),wtf(B,Y).
\end{code}

\BX
{\em Well-typed frontier} and {\em typeless trunk} of the untypable
term $SSI (SSI)$ (with $I$  represented as $SKK$). 
\begin{codex}
?- wellTypedFrontier(s*s*(s*k*k)*(s*s*(s*k*k)),
                     Trunk,FrontierEqs).
Trunk = A*B* (C*D),
FrontierEqs = [A=s*s, B=s*k*k, C=s*s, D=s*k*k].
\end{codex}
\EX

The list-of-equations representation of the frontier
allows to easily reverse their separation from the trunk by 
a unification based ``grafting'' operation.

The predicate {\tt fuseFrontier} implements this reversing process
while the predicate {\tt extractFrontier } extracts from the frontier-equations
the components of the frontier without the corresponding variables 
marking their location in the trunk.
\begin{code}
fuseFrontier(FrontierEqs):-maplist(call,FrontierEqs).

extractFrontier(FrontierEqs,Frontier):-
  maplist(arg(2),FrontierEqs,Frontier).
\end{code}

\BX
Extracting and grafting back the well-typed frontier to the typeless trunk.
\begin{codex}
?- wellTypedFrontier(s*s*(s*k*k)*(s*s*(s*k*k)),
       Trunk,FrontierEqs),
   extractFrontier(FrontierEqs,Frontier),
   fuseFrontier(FrontierEqs).
Trunk = s*s* (s*k*k)* (s*s* (s*k*k)),
FrontierEqs = [s*s=s*s, s*k*k=s*k*k, 
               s*s=s*s, s*k*k=s*k*k],
Frontier = [s*s, s*k*k, s*s, s*k*k] .   
\end{codex}
Note that after grafting back the frontier, the trunk becomes equal to the term
that we have started with.
\EX

\subsubsection{A comparison of the sizes of the well-typed frontier and the typeless trunk}

An interesting question arises at this point: 
{\em how do the sizes of the frontier and the trunk compare}?

Figure \ref{front} compares the average sizes of the frontier and the trunk for
terms up to size 8. This indicates that, while the size of the frontier dominates for 
small terms, it decreases progressively. This leaves the following open problem: 
{\em does the average ratio of 
the frontier and the trunk converge to a limit as the size of the terms
increases}? 
More empirical information on this can be obtained by studying what happens for
randomly generated large SK-trees.

\begin{figure*}
\begin{center}
\begin{tabular}{||r||r|r|r|r|}
\cline{1-5}
\cline{1-5}
\multicolumn{1}{||c||}{{Term size}} & 
  \multicolumn{1}{c|}{{ Avg. Trunk-size}} &
  \multicolumn{1}{c|}{{ Avg. Frontier-size}} &
   \multicolumn{1}{c|}{{ \% Trunk }} &
  \multicolumn{1}{c|}{{ \% Frontier}}\\
\cline{1-5} 
\cline{1-5}
1&0&1&0&100 \\ \cline{1-5}
2&0.13&1.88&6.25&93.75 \\ \cline{1-5}
3&0.26&2.74&8.75&91.25 \\ \cline{1-5}
4&0.47&3.53&11.77&88.23 \\ \cline{1-5}
5&0.71&4.29&14.11&85.89 \\ \cline{1-5}
6&0.97&5.03&16.24&83.76 \\ \cline{1-5}
7&1.27&5.73&18.11&81.89 \\ \cline{1-5}
8&1.58&6.42&19.76&80.24 \\ \cline{1-5}
\end{tabular} \\
\medskip
\caption{Comparison of sizes of the typeless trunk and the well-typed frontier of SK-terms, by size.
\label{front}}
\end{center}
\end{figure*}

\subsubsection{Simplification as normalization of the well-typed frontier}

Given that well-typed terms are strongly normalizing, we can simplify
an untypable term by normalizing the members of its frontier, for which
we are sure that {\tt evalSK} terminates. Once evaluated, we can 
graft back the results to the typeless trunk,
as implemented the predicate {\tt simplifySK}.   
\begin{code}
simplifySK(Term,Trunk):-
  wellTypedFrontier(Term,Trunk,FrontierEqs),
  extractFrontier(FrontierEqs,Fs),
  maplist(evalSK,Fs,NormalizedFs),
  maplist(arg(1),FrontierEqs,Vs),
  Vs=NormalizedFs.
\end{code}

The following question arises at this point: {\em are there terms that are not normalizable
that can be simplified by extracting and simplifying their well-typed frontier and then
grafting it back}?
Combinatorial search, using the {\tt genSK} predicate finds them starting at size {\tt 8}.
\BX
Simplifying some untypable terms for which normalization is non-terminating. 
\begin{codex}
?- Term= s*s*s* (s*s)*s* (k*s*k),
         simplifySK(Term,Trunk).              
Term = s*s*s* (s*s)*s* (k*s*k),
Trunk = s*s*s* (s*s)*s*s.

?- Term= k* (s*s*s* (s*s)*s* (k*s*k)),
   simplifySK(Term,Trunk).
Term = k* (s*s*s* (s*s)*s* (k*s*k)),
Trunk = k* (s*s*s* (s*s)*s*s).
\end{codex}
Note that, as expected, while simplification does not bring termination
to the normalization predicate {\tt evalSK/2}, it shows the existence
of non-terminating computations for which a terminating
simplification is possible.
\EX

\subsubsection{Discussion}\label{disc}

While the well-typed (and closed) frontier does not make sense for 
general lambda terms where closed terms may have open subterms,
it makes sense for other combinator or supercombinator languages \cite{peyton87},
some with practical uses in the compilation of functional languages.

Among the open problems we leave for future research, is to find out
if concepts like the
well-typed frontier of a richer combinator-language
can be used for suggesting a fix to a program in a typed
functional programming language, or to produce
more precise error messages in case of type errors.
For instance, it would be interesting to know
if a minimal well-typed alternative can be be inferred 
and suggested to the programmer on a type error.

If one replaces the {\tt unify\_with\_occurs\_check} in
predicate {\tt skTypeOf} with the cyclic term unification
(that most modern Prologs use by default), one
can observe that every combinator expression passes
the test! The predicate {\tt uselessTypeOf} implements this
variation.
\begin{code}
uselessTypeOf(k,(A>(_B>A))).
uselessTypeOf(s,(((A>(B>C))> ((A>B)>(A>C))))).
uselessTypeOf((A*B),Y):-
  uselessTypeOf(A,(X>Y)),
  uselessTypeOf(B,X).
\end{code}
After defining the predicates {\tt notReallyTypable}
and {\tt sameAsAny}
\begin{code}
notReallyTypable(X):-uselessTypeOf(X,_).

sameAsAny(L,M):-genSK(L,M),notReallyTypable(M).
\end{code}
one can notice the identical behavior of
{\tt sameAsAny} and {\tt genSK}, meaning
that failing the occurs-check is the
exclusive reason of failure to infer a type.
\begin{comment}
\begin{codex}
?- sols(sameAsAny(8,_),S).
S = 732160.

?- sols(genSK(8,_),S).    
S = 732160.
\end{codex}
\end{comment}
This happens in the presence of a unique basic
type ``{\tt x}''.
However, in the case of a more
realistic type system with multiple basic 
types like {\tt Boolean, Int, String} etc.,
the failure of type inference 
could also be a consequence of mismatched
basic types. Knowing more about these two
reasons for failure might suggest
weakened type systems where some limited
form of circularity is acceptable, provided
that no basic type mismatches occur.
While strong normalization
would be sacrificed if such circular types
were accepted, one might note that this is already the case
in practical languages, where fixpoint operators
or recursive data type definitions are allowed.

\subsection{Estimating the proportion of well-typed X-combinator trees}\label{xdens}.
An interesting question arises at this point:  what proportion of X-combinator trees of
a given size are well-typed?
While the analytic study of the {\em asymptotic density} 
has been successfully performed on several families of lambda terms \cite{bodini11,BCI13,grygielGen}, it is considered an open 
problem for well-typed terms. We will limit ourselves here to empirically estimate it,
as it is done in \cite{grygielGen} for general lambda terms, where experiments indicate
extreme sparsity for very large terms.

We can use our  generator {\tt genTree} to enumerate X-combinator trees
among which we can then count the number of
well-typed ones.
\BX
Types inferred for terms with 2 internal nodes.
\begin{codex}
?- genTree(2,X),xtype(X,T).
X = (x> (x>x)),
T = ((x> (x>x))> ((x>x)> (x>x))) ;
X = ((x>x)>x),
T = (x> (x>x))
\end{codex}
\EX

Figure \ref{ratioX} shows the counts
for well-typed X-combinator expressions
among the total binary trees
of given size. Note that the total
column is given by the Catalan
numbers (entry A000108 in \cite{intseq}),
as binary trees are a member of the Catalan
family of combinatorial objects \cite{StanleyEC}.
\begin{figure}
\begin{center}
\begin{tabular}{||r|r|r|r||}
\cline{1-4}
\cline{1-4}
\multicolumn{1}{|c|}{{Term size}} & 
  \multicolumn{1}{c|}{{ Well-typed}} &
  \multicolumn{1}{c|}{{ Total}} &
  \multicolumn{1}{c|}{{ Ratio}}\\
\cline{1-4} 
\cline{1-4}
0&1&1&1 \\ \cline{1-4}
1&1&1&1 \\ \cline{1-4}
2&2&2&1 \\ \cline{1-4}
3&5&5&1 \\ \cline{1-4}
4&12&14&0.8571 \\ \cline{1-4}
5&38&42&0.9047 \\ \cline{1-4}
6&113&132&0.8560 \\ \cline{1-4}
7&357&429&0.8321 \\ \cline{1-4}
8&1148&1430&0.8027 \\ \cline{1-4}
9&3794&4862&0.7803 \\ \cline{1-4}
10&12706&16796&0.7564 \\ \cline{1-4}
11&43074&58786&0.7327 \\ \cline{1-4}
12&147697&208012&0.7100\\ \cline{1-4}
\end{tabular} \\
\medskip
\caption{Proportion of well-typed X-combinator terms
\label{ratioX}}
\end{center}
\end{figure}

Somewhat surprisingly, a large
proportion of well-typed X-combinator
terms is present among the
binary trees of a given size,
indicating the possible existence 
of a lower bound that might be
easier to determine analytically
than in the case of general lambda terms.

\subsection{Querying the generator for specific types} \label{tquery}

Coming with Prolog's unification and non-deterministic search,
is the ability to make more specific queries
by providing a type pattern, that selects only terms
of a given type.
\BX
Terms of type {\tt x>x} of size 4.
\begin{codex}
?- genTypedB(4,Term,(x>x)).
Term = a(l(l(v(0))), l(v(0))) ;
Term = l(a(l(v(1)), l(v(0)))) ;
Term = l(a(l(v(1)), l(v(1)))) .

?- genTypedBs(12,T,(x>x)>x). 
false.
\end{codex}
\EX
Note that the last query, taking about a minute,
shows that no  closed terms
of type {\tt (x>x)>x}
exist up to size 12.

We can make use of our generator's  efficient
specialization to a given type to explore
 empirical estimates for some interesting function
types.

Contrary to the total absence of type {\tt (x>x)>x} among
terms of size up to 12, ``binary operations'' of type
{\tt x>(x>x)} turn 
out to be quite frequent, giving, by increasing sizes,
the sequence                    
[0, 2, 0, 14, 12, 201, 445, 4632, 17789, 158271, 891635].

{\em Transformers} of type
{\tt x>x}, by increasing sizes, give
the  sequence [1, 0, 3, 3, 31, 78, 596, 2500, 18474, 110265, 888676].
While type {\tt (x>x)>x} turns our to be absent up to size 12,
the type {\tt (x>x)>(x>x)} describing {\em transformers of transformers}
turns out to be quite popular, as shown by the sequence
[1,1, 4, 11, 55, 227, 1315, 7066, 46731, 309499, 2358951].
The same turns out to be tree also for
{\tt (x>x)>((x>x)>(x>x))}, giving
[0, 2, 1, 16, 29, 272, 940, 7594, 39075, 312797, 2115374]
and {\tt ((x>x)>(x>x)) > ((x>x)>(x>x))} giving
[1, 1, 5, 13, 73, 300, 1846, 10130, 69336, 469217, 3640134].
One might speculate that homotopy type theory \cite{htypes},
that focuses on such transformations 
and transformations of transformations etc.
has a rich population of lambda terms from which
to chose interesting inhabitants of such types!

\subsection{Iterated types}\label{itert}
\BX
As an interesting coincidence,
one might note
that the binary tree 
representation of the type 
of the K combinator is
nothing but the S combinator 
itself.
\begin{codex}
?- kT(K),xtype(K,T),sT(S).
K = ((x>x)>x),
T = S, S = (x> (x>x)).
\end{codex}
\EX

Given that X-combinator expressions
and their inferred simple types
are both represented as binary trees
of often comparable sizes,
one might be curious about what
happens if we iterate
this process. 

By interpreting
a type as its identically represented
X-combinator expression, one
can ask the question: is the type
expression itself well-typed?
If so, is the set of distinct
 iterated types
starting from an X-combinator
finite?

The predicate {\tt iterType}
applies the type inference
operation at most K-times,
until an untypable
term or a fixpoint is reached.
\begin{code}  
iterType(K,X, Ts, Steps):-
  iterType(K,FinalK,X,[],Rs),
  reverse(Rs,Ts),
  Steps is K-FinalK.

iterType(K,FinalK,X,Xs,Ys):-K>0,K1 is K-1,
  xtype(X,T),
  \+(member(T,Xs)),
  !,
  iterType(K1,FinalK,T,[T|Xs],Ys).
iterType(FinalK,FinalK,_,Xs,Xs).
\end{code}

\begin{codeh}

itersFrom(T,Size,Steps,Avg):-
  tsizeAll(T,Size),
  iterType(100,T,Ts0,Steps),
  Ts=[T|Ts0],
  maplist(tsizeAll,Ts,Sizes),
  avg(Sizes,Avg).

avg(Ns,Avg):-
  length(Ns,Len),
  sumlist(Ns,Sum),
  Avg is Sum/Len.

iterLens(M):-
  between(0,M,I),
  findall([Steps,AvgSize],(genTree(I,T),itersFrom(T,_Size,Steps,AvgSize)),Rs),
  % write(rs=Rs),nl,
  length(Rs,Len),
  foldl(msum,Rs,[0,0],Sums),
  maplist(divWith(Len),Sums,Avgs),
  write(avgs=[I|Avgs]),nl,
  fail.
  
msum([A,B],[D,E],[X,Y]):-
  X is A+D,Y is B+E.

msum([A,B,C],[D,E,F],[X,Y,Z]):-
  X is A+D,Y is B+E,Z is C+F.

divWith(X,Y,R):-R is Y / X.
    
\end{codeh}

\begin{codeh}
iterTo(M):-
  between(0,M,I),
  t(I,T),tsizeAll(T,Size),
  iterType(100,T,Ts,Steps),
  maplist(tsizeAll,Ts,Sizes),
  length(Sizes,Len),
  sumlist(Sizes,Sum),
  Avg is Sum / Len,
  write((I,Steps,Size,Avg)),
  nl,
  fail.
\end{codeh}

% add examples, conjecture
 
\BX
Iterated types for K and S and I=SKK combinators.
\begin{codex}
?- kT(K),iterType(100,K,Ts,Steps).    
K = ((x>x)>x),
Ts = [x> (x>x), (x> (x>x))> ((x>x)> (x>x)), (x>x)> (x>x)],
Steps = 3.

?- sT(S),iterType(100,S,Ts,Steps).
S = (x>(x>x)),
Ts = [(x> (x>x))>((x>x)> (x>x)),(x>x)>(x>x),x> (x>x)],
Steps = 3.

?- skkT(XX),iterType(100,XX,Ts,Steps).
XX = (((x> (x>x))> ((x>x)>x))> ((x>x)>x)),
Ts = [x>x, x> (x> (x>x)), x> (x>x), 
      (x> (x>x))> ((x>x)> (x>x)), (x>x)> (x>x)],
Steps = 5.
\end{codex}
\EX

Figure \ref{iter} shows the average number of
steps until a un-typable term is found or a
fixpoint is reached as well as the average size of
the terms in the sequence of iterated types.

\begin{figure}
\begin{center}
\begin{tabular}{||r|r|r||}
\cline{1-3} 
\cline{1-3}
\multicolumn{1}{||c||}{{Initial term size}} & 
  \multicolumn{1}{c|}{{ Average steps}} &
  \multicolumn{1}{c|}{{ Average size}}\\
\cline{1-3}
\cline{1-3}
0&1&7\\ \cline{1-3}
1&4&3\\ \cline{1-3}
2&3&3.25\\ \cline{1-3}
3&2.4&7.2799\\ \cline{1-3}
4&2.5714&4.9476\\ \cline{1-3}
5&2.8333&5.5087\\ \cline{1-3}
6&2.5075&6.1571\\ \cline{1-3}
7&2.4405&6.6171\\ \cline{1-3}
8&2.3832&7.0235\\ \cline{1-3}
9&2.3290&7.4627\\ \cline{1-3}
10&2.2547&7.9913\\ \cline{1-3}
11&2.1831&8.5392\\ \cline{1-3}
12&2.1174&9.1143\\ \cline{1-3}
\end{tabular} \\
\medskip
\caption{Average steps and term sizes of iterated types\label{iter}}
\end{center}
\end{figure}

This matches the intuition that types are (smaller) approximations
of programs and suggests that the following holds.
\paragraph{Conjecture.} The set of iterated types is finite
for any X-combinator tree.

\subsection{Self-typed terms}
As X-combinator trees and their types share the same representation,
it makes sense to generate and count terms that
are equal to their types. The predicate {\tt genSelfTypedT}
generates such ``self-typed'' terms.
\begin{code}
genSelfTypedT(L,T):-genTree(L,T),xtype(T,T).
\end{code}

\begin{codeh}
countSelfTypedT(M,Rs):-
  findall(R,
    (
       between(1,M,L),sols(genSelfTypedT(L,_T),R)
    ),
  Rs).  
\end{codeh}

\BX
Self-typed X-combinator trees of size 6.
\begin{codex}
?- genSelfTypedT(6,T).         
T = (x> ((x>x)> ((x>x)> (x>x)))) ;
T = (x> (((x> (x>x))> (x>x))>x)) ;
T = ((x>x)> ((x> (x>x))> (x>x))) ;
T = ((x>x)> (((x>x)>x)> (x>x))).
\end{codex}
\EX
The sequence [0, 0, 0, 1, 2, 4, 14, 34, 101, 315, 1017, 3325, 11042] 
counts the number of self-typed terms
by increasing sizes, up to size 13.

\subsection{Two size-inflating injective functions from terms to terms} \label{infla}
By composing transformations
of X-combinator trees to
their equivalent lambda expressions
two interesting (but injective only)
mappings can be defined from
 X-combinator trees to a subset of them ({\tt t2t})
 and from lambda terms to a subset of them ({\tt b2b}).

\begin{code}
b2b --> rank,t2b.
t2t --> t2b,rank.
\end{code}

\BX
The injective mappings {\tt t2t} and {\tt b2b} can be used
to generate significantly larger X-combinator
trees and lambda expressions.
\begin{codex}
?- between(0,3,N),t(N,T),t2t(T,NewT),tsize(T,S1),
   tsize(NewT,S2),write(S1<S2),write(' '),fail;nl.
0<27 1<57 2<86 2<86

?- skkB(B),dbTermSize(B,S1),b2b(B,BB),dbTermSize(BB,S2),
           write(S1<S2),nl,fail.
12<374
\end{codex}
\EX
It is interesting at this point to see what happens to our building block -- the X-combinator --
when going through some of these transformations.

\BX
Transformations of the X-combinator via {\tt b2b}, {\tt evalDeBruijn}, {\tt boundTypeOf} and {\tt n}.
\begin{codex}
?- xB(X),b2b(X,XX),evalDeBruijn(XX,R),boundTypeOf(R,T),n(T,N).
X = l(a(a(a(...(l(v(1))))),
XX = a(l(a(a(a....l(l(v(1))))))))),
R = l(l(l(l(a(a(a(v(3), v(2)), v(0)), a(v(1), v(0))))))),
T = ((x> (x> (x>x)))> (x> ((x>x)> (x>x)))) .
N = 576
\end{codex}
\EX
While {\tt b2b} significantly inflates the de Bruijn term
corresponding to the X-combinator,  normalization reduces
it to a small, well-typed term.
This suggests the use of our shared representation
for experiments with dynamic systems or genetic programming
where  applications of arithmetic, type inference and
normalization operations are likely to create
interesting trajectories of evolution.

\subsection{Evolution of a multi-operation dynamic system}\label{multidyn}
Normalization, as the lambda calculus is is Turing-complete,
is subject
to non-termination. However, simply-typed terms are
strongly normalizing so it makes sense to play with
combinations of arithmetic operations, type inference
operations and normalization involving X-term
combinator trees as well as their lambda term
equivalents. 

For instance, the predicate {\tt evalOrNextB}
ensures that evaluation only proceeds
on lambda terms for which we are sure it terminates
with a new term
and applies the successor predicate ``{\tt s}''
otherwise, borrowed via the {\tt rank} and 
{\tt unrank} operations.
\begin{code}
evalOrNextB(B,EvB):-boundTypeOf(B,_),evalDeBruijn(B,EvB),EvB\==B,!.
evalOrNextB(B,NextB):-
  rank(B,T),
  s(T,NextT),
  unrank(NextT,NextB).
\end{code}
We can observe the orbits of these dynamic systems \cite{dynSys} starting from
a given lambda term in de Bruijn notation, for a given number of steps
with the predicate {\tt playWithB}.
\begin{code}
playWithB(Term,Steps,Orbit):-
  playWithB(Term,Steps,Orbit,[]).

playWithB(Term,Steps,[NewTerm|Ts1],Ts2):-Steps>0,!,
  Steps1 is Steps-1,
  evalOrNextB(Term,NewTerm),
  playWithB(NewTerm,Steps1,Ts1,Ts2).
playWithB(Term,_,[Term|Ts],Ts).
\end{code}
Note that  ranking these terms to usual bitstring-represented
integers would be intractable given their super-exponential
growth with depth. On the other hand, all the underlying operations
are linear time with ranking and unranking to natural numbers represented as
binary trees.
These terms are rather large,
but  by computing the sizes of the terms one can
have a good guess on their evolution.

Figure \ref{evalX} illustrates the evolution of this dynamic
system starting from the X-combinator's lambda equivalent by
plotting the tree sizes of the terms in its orbit. The plot indicates
that it is very likely that a repetitive pattern has developed.

\FIG{evalX}{Term sizes in the orbit starting from the X-combinator}{0.20}{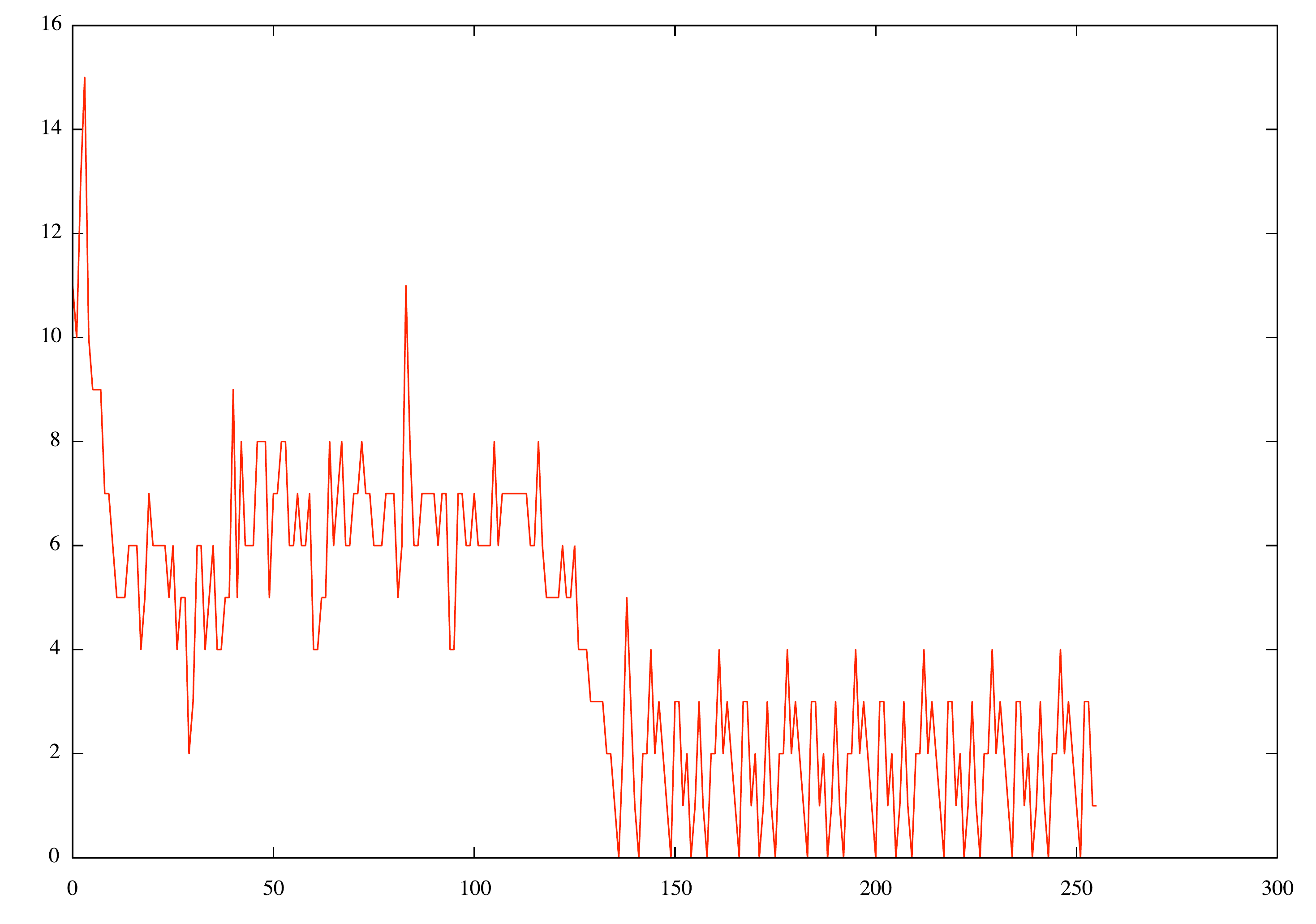}

Figure \ref{evalO} illustrates the evolution of this dynamic
system starting from the term $\omega=SII (SII)$ by
plotting the tree sizes of the terms in its orbit. The plot indicates
that it is very unlikely that a repetitive pattern will develop.
\FIG{evalO}{Term sizes in the orbit 
starting from the term $\omega$}{0.20}{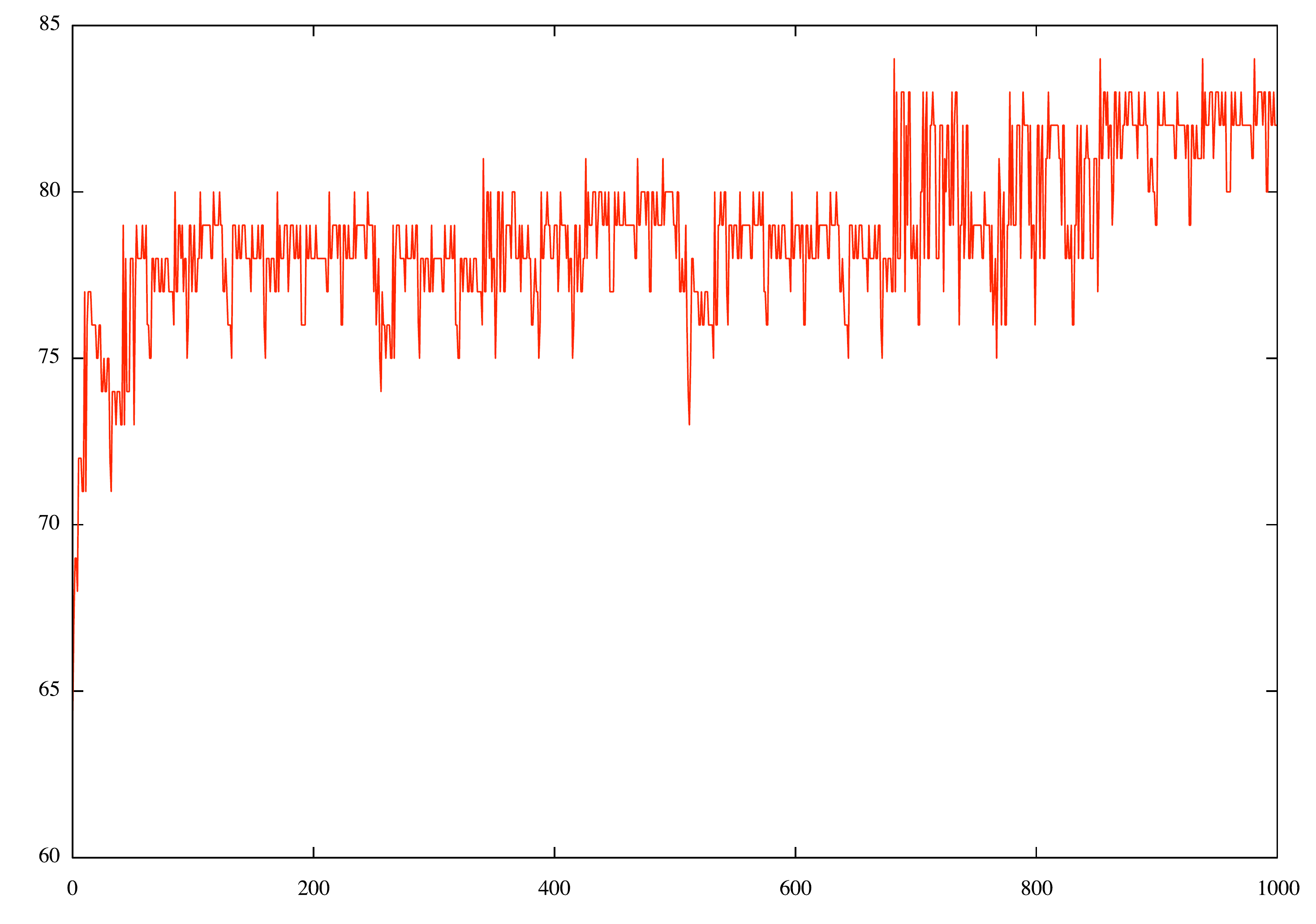}

Besides theoretical curiosity, one might use such operations
for implementing genetic programming algorithms.

\begin{comment}
Similarly the predicate {\tt evalOrNextB}
performs a the same operation on general lambda terms
in de Bruijn form.

\begin{code}
evalOrNextT(T,EvT):-xtype(T,_),evalX(T,EvT),EvT\==T,!.
evalOrNextT(T,NextT):-s(T,NextT).
\end{code}

\begin{code}
playWithT(Term,Steps,Orbit):-playWithT(Term,Steps,Orbit,[]).

playWithT(Term,Steps,[NewTerm|Ts1],Ts2):-Steps>0,!,
  Steps1 is Steps-1,
  evalOrNextT(Term,NewTerm),
  playWithT(NewTerm,Steps1,Ts1,Ts2).
playWithT(Term,_,[Term|Ts],Ts).
\end{code}
\end{comment}

\subsection{Memory savings through shared representations}

Given that the ranking and unranking operations
work in time proportional to the size of our lambda terms,
we will  explore some of the memory management
consequences of a shareable
representation of combinators, simple types, natural
numbers and lambda expressions.

We will look first at a well-known isomorphism
that brings us a significantly more
compact memory representation.

\subsubsection{A succinct representation of binary trees}\label{catpar}

Binary trees are in a well-known bijection with the 
language of of balanced parentheses, both 
being a member of the Catalan family of combinatorial
objects \cite{StanleyEC}.
The reversible predicate {\tt t2p/2} transforms
between binary trees and lists of balanced parentheses.
\begin{code} 
t2p(T,Ps):-t2p(T,0,1,Ps,[]).

t2p(X,L,R) --> [L],t2ps(X,L,R).

t2ps(x,_,R) --> [R].
t2ps((X>Xs),L,R) --> t2p(X,L,R),t2ps(Xs,L,R). 
\end{code}

\BX
The work of the reversible predicate {\tt t2p/2}.
\begin{codex}
?- skkT(X),t2p(X,Ps),t2p(NewX,Ps).
X = NewX, NewX = (((x> (x>x))> ((x>x)>x))> ((x>x)>x)),
Ps = [0,0,0,0,1,0,1,1,0,0,1,1,1,0,0,1,1,1].

?- kB(B),rank(B,T),t2p(T,Ps).
B = l(l(v(1))),
T = (x> (x> ((x>x)>x))),
Ps = [0,0,1,0,1,0,0,1,1,1] .
\end{codex}
\EX
Seen as a bitstring, the mapping to a list of
balanced parentheses is a 
succinct representation for our binary trees, 
if one wants to trade time
complexity for space complexity.
It is also a self-delimiting
prefix-free representation, uniquely decodable when
read from left to right. As one might notice,
it is actually is a bifix code,
i.e., it is also prefix-free when read
from right to left.

\subsubsection{A  practical shared memory representation}
In a practical implementation, given the
high frequency of small objects of any of our kinds -- numbers,
lambda expressions, types and combinators, one might consider
a hybrid representation where small
trees are represented within a machine
word as balanced 0,1-parentheses sequences and
larger ones as cons-cells. 2-bit-tagged pointers 
could be used to disambiguate interpretation as
numbers, combinators types or lambda expressions
but their targets could be shared if structurally
identical.
Besides sharing static data or code objects, a shared
representation is likely to also facilitate
memory management by recycling fragments
of computations like $\beta$-reductions or
arithmetic operations.

Graph-based representation of lambda terms has been used 
as early as \cite{lamping90}
to avoid redundant evaluation of redexes.
In a similar way, one could fold our tree-based
representations into DAGs, providing uniform savings
for combinators, types and tree-based natural numbers.

\begin{codeh}
% generators
ttsizes:-
 between(0,1000,N),t(N,T),t2t(T,TT),tsizeAll(T,S1),
 tsizeAll(TT,S2),write((N,S1,S2)),nl,fail.
 
nonetyped:-
  between(0,10000,N),t(N,T),t2t(T,TT),xtype(TT,_),
  write(N),nl,fail.
 
closedTo(M,I,B):-
  between(0,M,I),
  t(I,T),
  unrank(T,B),
  isClosedB(B).

typedTo(M,I,B,T):-
  between(0,M,I),
  t(I,T),
  unrank(T,B),
  isClosedB(B),
  boundTypeOf(B,T).
 
ttyped(From,To,I,TT):-between(From,To,I),t(I,T),t2b(T,B),boundTypeOf(B,TT). 

bclosed(From,To,I):-between(From,To,I),t(I,T),unrank(T,B),isClosedB(B).

% counts

ccount(M,R):-sols(closedTo(M,_,_),R).

tcount(M,R):-sols(typedTo(M,_,_,_),R). 

tcount1(M,R):-sols(ttyped(M,_I,_TT),R). 
  
% 22527 first non-stopping evalDeBruijn
    
\end{codeh}

%%%%

\section{Related work} \label{rels}

% Among them, linear and linear affine lambda calculus. 

The classic reference for lambda calculus is \cite{bar84}.
Various instances of typed lambda calculi are
overviewed in \cite{bar93}.

Originally introduced in \cite{dbruijn72}, the de Bruijn notation makes terms
equivalent up to $\alpha$-conversion and facilitates their normalization
\cite{kamaDB}. Their use in this paper is motivated by their comparative 
simplicity rather than by efficiency considerations, for which several abstract
machines, used in the implementation of functional languages,
have been designed \cite{peyton87}.
The compressed de Bruijn representation
of lambda terms proposed
in this paper (and \cite{cicm15}) is novel, to our best knowledge.

Lambda terms of bounded unary height are introduced in \cite{bodini11}.
John Tromp's binary lambda calculus is only described
through online code and the Wikipedia entry at
\cite{binlamb}.

Generators for closed and well-typed 
lambda terms, as well as their normal forms,
expressed as functional programming algorithms,
are given in \cite{grygielGen}, derived from
combinatorial recurrences. However, they
are significantly more complex than the ones
described here in Prolog. On the other hand,
we have not found in the literature generators
for linear, linear affine terms and lambda terms of bounded unary height.
Normalization of lambda terms and its confluence properties
are described in \cite{bar84} and \cite{kamaDB} with functional programming
algorithms given in \cite{sestoftLam} and HOAS-based 
evaluation first described in \cite{hoas}.
\begin{comment}
However, we have not found references in the literature
for the use of logic variables as a meta-level
means, similar to HOAS, to delegate capture-free substitutions to
the underlying Prolog system. The same can be said about the use
of ISO-Prolog's acyclicity test to implement $\eta$-conversion.
\end{comment}

In a logic programming context, unification of simply
typed lambda terms has been used in as the foundation of
the programming language $\lambda$Prolog
\cite{miller91b,teyjus99} and applied to
higher order logic programming \cite{miller12}.

Various instances of typed lambda calculi are
overviewed in \cite{bar93}. Combinators originate
in Moses Sch\"onfinkel's 1924 paper, and  
independently, in Haskell Curry's work in 1927. A modern
introduction to combinators and their
relation to lambda calculus is \cite{hindley2008lambda} and
a first application of an extended set of 
combinators in the implementation
of functional programming languages is \cite{turner79}.

Combinatorics of lambda terms, including enumeration, random generation
and asymptotic behavior has seen an increased interest recently (see
for instance
\cite{bodini11,grygielGen,normalizing13,BCI13}),
partly motivated by applications to software testing, given the
widespread use of lambda terms as an intermediate language in compilers
for functional languages and proof assistants. 
Distribution and density properties
of random lambda terms are described in \cite{ranlamb09}.
In \cite{palka11,fetscher15}, types are used to generate random
terms for software testing. The same naturally ``goal-oriented''
effect is obtained in the generator/type inferrer
for de  Bruijn terms in subsection \ref{typedgen}, by taking advantage
of Prolog's ability to backtrack over possible terms,
while filtering against unification with a specific pattern.
In \cite{padl15}  generation algorithms for 
several sub-families of lambda terms are given
as well as a compressed deBruijn representation is
introduced.
In \cite{ppdp15tarau} Rosser's X-combinator
trees \cite{fokker92} are used as a uniform representation via
bijections top lambda terms in de Bruijn notation,
types and a tree-based number representation.

Of particular
interest are the results of \cite{grygielGen}
where recurrence relations and asymptotic behavior are studied
for several families of lambda terms.
Empirical evaluation of the density of closed
simply-typed general lambda terms
described in \cite{grygielGen} indicates extreme
sparsity for large sizes. However, the problem of
their exact asymptotic behavior is still open.
This has motivated our interest in the empirical
evaluation of the density of simply-typed
X-combinator trees, where we observed 
significantly higher initial densities
and where there's a chance that the also
open problem of their asymptotic
behavior might be easier to tackle.

One-point combinator bases, together with a derivation of
the X-combinator are described in \cite{fokker92}.
In \cite{OnePoint} %mayerInf 
the existence of a countable
number of 1-point bases is proven.
While esoteric programming languages exist 
based on similar 1-point bases \cite{iotaStay}, 
we have not seen
any such development centered around
Rosser's X-combinator, or type inference and normalization algorithms
designed specifically for it, as described in this paper.

Ranking and unranking algorithms for several classes of lambda
terms are also described in \cite{grygielGen},together with
a type inference algorithm for de Bruijn terms.
Ranking and unranking of lambda terms can be seen as a
building block for bijective serialization of practical data
types \cite{everybit} 
as well as for G\"odel-numbering schemes \cite{conf/icalp/HartmanisB74} of
theoretical relevance. In fact, ranking functions for sequences 
can be traced back to G\"{o}del numberings
\cite{Goedel:31} associated to formulas.

%%%

While G\"odel-numbering schemes for lambda terms
have been studied in several theoretical
papers on computability, we are not aware of
any size proportionate bijective encoding
as the one described in this paper.

%%%%

Injective G\"odel-numbering schemes for lambda terms
in de Bruijn notation have been described in
the context of binary lambda calculus \cite{binarylambda}
and as a mechanism to encode datatypes in \cite{everybit,complambda}.
Both these use  prefix-free codes, ensuring unique decoding.
A bijective G\"odel-numbering scheme is associated
to the esoteric programming language Jot \cite{iotaStay},
where every bitstring is considered a valid
executable expression. This is similar  to
ours in the sense that every binary tree
representing an X-combinator expression is
executable.
However, the use of a binary tree based
model of Peano's axioms, playing the role
of the set of natural numbers, and
the corresponding ranking and unranking
algorithms as described in
this paper are novel.

The binary-tree based numbering system defined here is isomorphic to the
ones in \cite{lata14,sacs14tarau},  
where a similar treatment of arithmetic operations
is specialized to the language of balanced parentheses
and multiway trees.
In fact, such an encoding can be used as a prefix-free succinct
representation for our binary trees, if one wants to trade space
complexity for time complexity.
Any enumeration of combinatorial objects
(e.g., \cite{StanleyEC,knuth_trees})
can be seen as providing unary Peano arithmetic operations implicitly.
By contrast, the tree-based arithmetic operations used in this paper
have efficiency comparable to the usual binary numbers, as shown in \cite{arxiv_cats}.
Note also that while \cite{lata14,sacs14tarau} focus exclusively on
arithmetic operations with members of the Catalan family of
combinatorial objects, of which our binary trees are an instance,
their use in this paper, as a target for ranking/unranking of
lambda expressions, relies exclusively on the successor
and predecessor operations, adapted here to work on binary trees.

Univalent foundations of type theory \cite{htypes} 
have recently emphasized isomorphism paths between objects
as a means to unify equality and equivalence
between heterogenous data types sharing 
essential properties and behaviors under 
transformations. While informal, our
executable equivalences between combinators,
lambda terms, types and numbers might be useful
as practical illustrations of these  
concepts.

Some of the algorithms used in the paper,
like type inference and
normalization of combinators
and lambda terms,  
are common knowledge \cite{kamaDB,sestoftLam,bar84}, 
although we are
not aware, for instance,
of Prolog implementations of 
type inference
working directly on de Bruijn 
terms or X-combinator trees. In \cite{padl15} a type
inference algorithm for standard 
terms using Prolog's logic variables is given.
To make the paper self-contained,
we have closely followed
the normalization algorithm of \cite{padl15} 
using a de Bruijn representation of lambda terms.
We refer  to \cite{padl15} 
for a compressed de Bruijn representation
and several
Prolog algorithms that complement
our  playground with
generators for closed,
linear, linear affine, binary lambda terms as well
as lambda terms of bounded binary height.

\section{Conclusions} \label{concl}

We have described compact (and arguably elegant)
combinatorial generation algorithms for
several important families of lambda terms. 
Besides the
newly introduced a compressed
form of de Bruijn terms we have used ordinary
de Bruijn terms as well as a canonical representation
of lambda terms relying on Prolog's logic variables.
In each case, we have selected the representation
that was more appropriate for
tasks like combinatorial generation, type
inference or normalization.
We have switched representation as needed,
though bijective transformers working in
time proportional to the size of the terms.
Our combinatorial generation algorithms
match the corresponding sequence of
counts by size, given in \cite{intseq}
as an empirical validation of their 
correctness.

We have described Prolog-based term and type generation and
as well as type-inference algorithms for de Bruijn terms.
Among the possible applications
of our techniques we mention
compilation and test generation
for lambda-calculus based languages
and proof assistants.
Our merged generation and type inference in an algorithm
 showed a mechanism to build ``customized closed terms of a given type''. 
This ``relational view'' of terms and their types
has enabled the discovery of interesting
patterns about the type expressions 
occurring in well-typed programs. 
We have uncovered the most ``popular'' types that 
govern function applications among a about a million
small-sized lambda terms.

We have also observed some interesting phenomena
about frequently occurring types, that seem
to be similar to those in human-written programs
and we have computed  growth sequences for
the number of inhabitants of some ``popular'' types,
for which we have not found any study in the literature.

A significant contribution of this paper
is the size-proportionate ranking/unranking
algorithm for lambda terms and the compressed
de Bruijn representation that facilitated it.
The ability to encode lambda terms bijectively
can be used as a ``serialization'' mechanism in functional
programming languages and proof assistants 
using them as an intermediate language.

We have selected the minimalist pure combinator
language built from applications of combinators
$S$ and $K$ to explore
aspects of their generation and  type inference
algorithms. While a draconian simplification of
real-life programming languages, this well-known
and well-researched subset of lambda calculus
has revealed
some interesting new facts about the density
and distribution of their types. 
The new concepts of {\em well-typed frontier}
and {\em typeless trunk} of an untypable term
can be generalized to realistic
combinator and supercombinator-based
intermediate languages used by compilers
for functional languages and proof assistants.
As they give precise hints about the points where type
inference failed, they are likely to be
useful for debugging programs and give
more meaningful compile-time error messages.
This also results in an ability to extend (sure) termination
beyond simply-typed terms, by evaluating
and then grafting back their well-typed
frontier.
By sharing the representation of the Turing-complete language of
X-combinator expressions,
natural numbers, lambda terms and their
types, interesting synergies
became available.

The paper has  introduced a number of algorithms
that, at our best knowledge, are novel, at least 
in terms of their logic programming implementation,
among which  we mention the type inference for de Bruijn terms using 
unification with occurs-check in subsection \ref{dbtypes}
and the integrated generation and type inference algorithm 
for closed simply typed de Bruijn terms
in section \ref{typedgen}.
Besides the ability to efficiently query for 
inhabitants of specific types,
our algorithms also support a from of
``query-by-example'' mechanism, for finding (possibly
smaller) terms inhabiting the same type as the query term.
While the main focus of the paper is the
creation of a logic programming based
declarative playground for experiments
with various classes of lambda terms, under the assumption
of a shared representation,
the paper introduces several new concepts among which we mention:
\begin{itemize}
\item a compressed representation of 
de Bruijn terms in subsection \ref{comp}
\item X-combinator trees playing the role of both natural numbers
and types
in subsections \ref{xco}, \ref{tlam} and \ref{tn}
\item a bijection between natural numbers and binary trees 
(predicates {\tt t/2} and {\tt n/2} in
subsection \ref{tn}) that is works consistently
with their isomorphic arithmetic operations
%\item defining the type of X-combinator expressions in terms
%of the types of their equivalent lambda expressions in section \ref{tlam}
\item a concept of ``iterated types'' in subsection \ref{itert}
\item two size-inflating injective functions from terms to terms in subsection \ref{infla}
\item a multi-operation dynamic system combining normalization and arithmetic operations in
subsection \ref{multidyn}
\end{itemize}
The paper also describes
algorithms
that, at our best knowledge, are novel, at least 
in terms of their logic programming implementation:
\begin{itemize}
\item integrated generation and type inference algorithm for closed simply-typed de Bruijn terms
in subsection \ref{typedgen}
\item successor and predecessor  and arithmetic operations on binary trees in subsection \ref{sucpred}
\item ranking and unranking de Bruijn terms to/from binary-tree represented natural numbers
in subsection \ref{binrank}
\item direct type inference for X-combinator trees in subsection \ref{dirinf}
\end{itemize}

While a non-strict functional language like
Haskell could have been used for deriving
similar algorithms, the synergy between
Prolog's non-determinism, DCG transformation and the
availability of unification with occurs-check
made the code embedded in the 
paper significantly simpler and arguably
clearer.  

Future work is planned along the following lines.
Enumeration or random generation of binary
trees can be extended to general lambda expressions and
various data types expressed in terms of them.
Functional languages like Scheme and Lisp, based on
{\tt cons} operations might be able to
improve memory footprint of symbolic and numerical
data through shared representations of arithmetic
operations and list or tree data structures.
Small steps in the normalization of combinator expressions
or lambda trees can be mapped to possibly interesting
number sequences.
Open problems related to the asymptotic density of typable
combinators and lambda terms might benefit from empirical
estimates computable within our framework 
for very large terms.
Future work will also focus on studying
how our results extend to other families
of combinators and supercombinators
that occur in practical languages as well
as on random SK-tree generation 
e.g.., by extending R\'emy's algorithm
 \cite{remy85}
from binary trees to SK-combinator trees.
This would allow fast generation of
very large SK-combinator expressions
that could give better empirical estimates
on the asymptotic behavior of the concepts
introduced in this paper and their properties.
Also, as a step toward more practical uses,
lifting the concept of well-typed
frontier to general lambda terms (which are not 
hereditarily closed) seems possible by
defining the frontier as being
a sequence of maximal well-typed closed lambda terms.

We hope that the techniques described in this paper,
taking advantage of this unique combination of 
strengths, recommend logic programming
as a convenient meta-language for the manipulation
of various families of lambda terms
and the study of their combinatorial
and computational properties.

\section*{Acknowledgement}
This research has been supported by NSF grant \verb~1423324~.

%\newpage
\bibliographystyle{acmtrans}
\bibliography{theory,tarau,proglang}

\section*{Appendix}

\subsection*{Helper predicates for ranking and unranking balanced parentheses expressions}

The predicate {\tt binDif} computes the difference of two binomials.
\begin{code}
binDif(N,X,Y,R):- N1 is 2*N-X,R1 is N - (X + Y) // 2, R2 is R1-1,
  binomial(N1,R1,B1),binomial(N1,R2,B2),R is B1-B2.  
\end{code}
The predicate {\tt localRank} computes, by binary search
the rank of sequences of a given length.
\begin{code}
localRank(N,As,NewLo):- X is 1, Y is 0, Lo is 0,
  binDif(N,0,0,Hi0),Hi is Hi0-1,
  localRankLoop(As,N,X,Y,Lo,Hi,NewLo,_NewHi).
\end{code}
After finding the appropriate range containing the rank with {\tt binDif},
we delegate the work to the   predicate
{\tt localRankLoop}.
\begin{code}  
localRankLoop(As,N,X,Y,Lo,Hi,FinalLo,FinalHi):-N2 is 2*N,X< N2,!,
  PY is Y-1, SY is Y+1, nth0(X,As,A),
  (0=:=A-> binDif(N,X,PY,Hi1),
     NewHi is Hi-Hi1, NewLo is Lo, NewY is SY
   ; binDif(N,X,SY,Lo1),
     NewLo is Lo+Lo1, NewHi is Hi, NewY is PY
  ), NewX is X+1,
  localRankLoop(As,N,NewX,NewY,NewLo,NewHi,FinalLo,FinalHi).
localRankLoop(_As,_N,_X,_Y,Lo,Hi,Lo,Hi).  
\end{code}

\begin{code}
rankLoop(I,S,FinalS):-I>=0,!,cat(I,C),NewS is S+C, PI is I-1,
  rankLoop(PI,NewS,FinalS).
rankLoop(_,S,S).
\end{code}
Unranking works in a similar way. The predicate {\tt localUnrank}
builds a sequence of balanced parentheses by doing binary search
to locate the sequence in the enumeration of sequences of 
a given length.
\begin{code}
localUnrank(N,R,As):-Y is 0,Lo is 0,binDif(N,0,0,Hi0),Hi is Hi0-1, X is 1,
  localUnrankLoop(X,Y,N,R,Lo,Hi,As).

localUnrankLoop(X,Y,N,R,Lo,Hi,As):-N2 is 2*N,X=<N2,!,
   PY is Y-1, SY is Y+1,
   binDif(N,X,SY,K), LK is Lo+K,
   ( R<LK -> NewHi is LK-1, NewLo is Lo, NewY is SY, Digit=0
   ; NewLo is LK, NewHi is Hi, NewY is PY, Digit=1
   ),nth0(X,As,Digit),NewX is X+1,
   localUnrankLoop(NewX,NewY,N,R,NewLo,NewHi,As).
localUnrankLoop(_X,_Y,_N,_R,_Lo,_Hi,_As). 
\end{code}

\begin{code}
unrankLoop(R,S,I,FinalS,FinalI):-cat(I,C),NewS is S+C, NewS=<R,
   !,NewI is I+1,
   unrankLoop(R,NewS,NewI,FinalS,FinalI).
unrankLoop(_,S,I,S,I).
\end{code}

\subsection*{The bijection between finite lists and sets}
The bijection {\tt list2set} together with its inverse 
{\tt set2list} are defined as follows:
\begin{code}
list2set(Ns,Xs) :- list2set(Ns,-1,Xs).

list2set([],_,[]).
list2set([N|Ns],Y,[X|Xs]) :- 
  X is (N+Y)+1, 
  list2set(Ns,X,Xs).

set2list(Xs,Ns) :- set2list(Xs,-1,Ns).

set2list([],_,[]).
set2list([X|Xs],Y,[N|Ns]) :- 
  N is (X-Y)-1, 
  set2list(Xs,X,Ns).
\end{code}
The following examples illustrate this bijection:
\begin{codex}
?- list2set([2,0,1,4],Set),set2list(Set,List).
Set = [2, 3, 5, 10],
List = [2, 0, 1, 4].
\end{codex}
As a side note, this bijection is mentioned
in \cite{knuth_comb}
with indications that
it might even go back to the early days of the theory of 
recursive functions.

\subsection*{Binomial Coefficients, efficiently}

Binomial coefficients are given by the formula
${n \choose k} = {{n!} \over {k!(n-k)!}} = 
{n(n-1) \ldots (n-(k-1)) \over k!}$.
By performing divisions as early as possible
to avoid generating excessively large intermediate results,
one can derive the {\tt binomialLoop} tail-recursive predicate:
\begin{code}
binomialLoop(_,K,I,P,R) :- I>=K, !, R=P.
binomialLoop(N,K,I,P,R) :- I1 is I+1, P1 is ((N-I)*P) // I1,
   binomialLoop(N,K,I1,P1,R).
\end{code}
%Note that, as a simple optimization, when $N-K \le K$,  the faster 
%computation of $N \choose {N-K}$ is used to reduce the
%number of steps in {\tt binomial\_loop}.
The  predicate {\tt binomial(N,K,R)} computes
$N \choose K$ and unifies the result with {\tt R}. 
\begin{code}
binomial(_N,K,R):- K<0,!,R=0.
binomial(N,K,R) :- K>N,!, R=0.
binomial(N,K,R) :- K1 is N-K, K>K1, !, binomialLoop(N,K1,0,1,R).
binomial(N,K,R) :- binomialLoop(N,K,0,1,R).
\end{code}

\begin{codeh}
% helpers

nv(X):-numbervars(X,0,_).

pn(X):-numbervars(X,0,_),write(X),nl,fail.
pn(_).

lb(B):-b2l(B,T),lamshow(T),nl. 
tb(B):-b2l(B,T),texshow(T),nl. 

lx(X):-c2b(X,B),b2l(B,T),lamshow(T),nl. 
tx(X):-c2b(X,B),b2l(B,T),texshow(T),nl. 

lamshow(T):-
  numbervars(T,0,_),
  lamshow(T,Cs,[]),
  maplist(write,Cs),
  fail.
lamshow(_).

lamshow('$VAR'(I))--> [x],[I].
lamshow(l('$VAR'(I),E))-->[('(')],[('\\')],[x],[I],[('.')],lamshow(E),[(')')].
lamshow(a(X,Y))-->[('(')],lamshow(X),[(' ')],lamshow(Y),[(')')].

texshow(T):-
  numbervars(T,0,_),
  texshow(T,Cs,[]),
  maplist(write,Cs),
  fail.
texshow(_).

texshow('$VAR'(I))--> [x],['_'],[I].
texshow(l('$VAR'(I),E))-->[(' ')],[('~\\lambda ')],[x], ['_'],   [I],[('.')],texshow(E),[(' ')].
texshow(a(X,Y))-->[('(')],texshow(X),[('~')],texshow(Y),[(')')].

% counts nb. of solutions of Goal 
sols(Goal, Times) :-
        Counter = counter(0),
        (   Goal,
            arg(1, Counter, N0),
            N is N0 + 1,
            nb_setarg(1, Counter, N),
            fail
        ;   arg(1, Counter, Times)
        ).

count1(F,M,Ks):-findall(K,(between(0,M,L),sols(call(F,L),K)),Ks).

count2(F,M,Ks):-findall(K,(between(0,M,L),sols(call(F,L,_),K)),Ks).

count3(F,M,Ks):-findall(K,(between(0,M,L),sols(call(F,L,_,_),K)),Ks).

\end{codeh}

\end{document}